\documentclass[12pt,a4paper]{article}
\usepackage[utf8]{inputenc}%
\linespread{1.25}
\usepackage[left=1.5in,right=1.5in,top=1.5in,bottom=1.5in]{geometry}
\usepackage{amsfonts}

\usepackage{amsthm,amssymb,amsmath}

\theoremstyle{definition}
\newtheorem{definition}{Definition}[section]
\theoremstyle{plain}
\newtheorem{assp}{Assumption}
\DeclareMathOperator*{\argmax}{argmax}

\usepackage{mathtools}

\newtheorem{prop}{Proposition}
\newtheorem{corollary}{Corollary}
\newtheorem{theorem}{Theorem}
\newtheorem{claim}{Claim}

\theoremstyle{definition}
\newtheorem{example}{Example}
\theoremstyle{plain}

\newtheorem{lemma}{Lemma}
\usepackage[font=small,labelfont=bf]{caption} 
\usepackage{subfig}
\usepackage{soul}
\usepackage{xcolor}
\usepackage{bbm}
\usepackage{bm}
\usepackage{chngcntr}
\usepackage [english]{babel}
\usepackage [autostyle, english = american]{csquotes}
\MakeOuterQuote{"}
\setlength{\parskip}{1em}
\linespread{1.3}
\usepackage[round]{natbib}
\usepackage{hyperref}
\hypersetup{
    colorlinks=true,
    linkcolor=blue,
    filecolor=blue,      
    urlcolor=blue,
    citecolor=blue,
}
\setlength\parindent{24pt}
\usepackage{graphicx}

\usepackage{enumitem}
\setlist{nolistsep}

\title{Competing Persuaders in Zero-Sum Games\thanks{We are grateful to Navin Kartik and Elliot Lipnowski for their continued advice and encouragement. We thank (in no particular order) Alessandro Pavan, Jacopo Perego, Mark Dean, Alessandra Casella, Yeon-Koo Che, Laura Doval, Iain Bamford, Leonard Goff and various seminar participants for their helpful comments.}}
\author{Dilip Ravindran\thanks{Humboldt University Berlin. Email: dilip.ravindran@hu-berlin.de} and Zhihan Cui\thanks{University of California Los Angeles, Anderson School of Management. Email: zhihan.cui@anderson.ucla.edu}}

\date{\today}

\begin{document}

\maketitle

\begin{abstract}
We study Bayesian Persuasion with multiple senders who have access to conditionally independent experiments (and possibly others). Senders have zero-sum preferences over information revealed. We characterize when any set of states can be pooled in equilibrium and when all equilibria are fully revealing. The state is fully revealed in every equilibrium if and only if sender utility functions are `globally nonlinear'. With two states, this is equivalent to some sender having nontrivial preferences. The upshot is that `most' zero-sum sender preferences result in full revelation. We explore what conditions are important for competition to result in such stark information revelation.

\end{abstract}

\newpage

\section{Introduction}
A key question in the economics of persuasion is the effect of competition on information provision. This question is of interest in many contexts in which agents with opposing interests control the information available to one or multiple decision makers. For instance, a jury or judge's decision to acquit or convict a defendant is informed by the evidence collected by defense and prosecution attorneys. Competing firms design advertisements in order to convince a consumer to buy their products. In order to persuade voters, politicians may hire experts to provide information which might validate their platforms and not their opponents'. In all these settings, the information providers (senders) are competing to influence the decision maker(s). How does this competition affect what information they reveal? 

While it has been shown that information disclosure increases with competition in some settings (\citet{battaglini2002multiple}; \citet{milgrom1986relying}; \citet{shin1998adversarial}), in others competition has the opposite effect (\citet{emons2019strategic}; \citet{kartik2017investment}). In this paper we address the question by modelling two or more senders persuading one or more receiver(s) about an unknown state. The senders influence the receiver's beliefs by disclosing information in the manner of Bayesiyan Persuasion. Unlike existing work, our senders simultaneously choose conditionally independent experiments; the receiver observes these experiments and their realizations and updates her belief. To fix ideas, consider competing lobbyists commissioning reports to persuade a politician (or entire legislature) to vote yes/no on a climate change bill.\footnote{e.g. a bill which would mandate use of alternative energy sources.} Here the state may be whether or not climate change is a significant threat. The politician would only like the bill to pass if it is while environmental lobbyists support the bill and corporate lobbyists oppose it regardless of the state. The lobbyists can commission reports from climate change experts of their choice; these reports reveal information about the state and influence the politician's belief and hence decision.

To capture the disagreement between the senders in the examples above, we consider a model in which senders are maximally-competitive \textemdash the senders' payoffs are zero-sum functions of the receiver's posterior belief. This assumption is natural, for instance, in the lobbying example. Lobbyists may only care about the probability the bill passes/fails; these probabilities sum of one (zero is just a normalization \textemdash any constant-sum payoffs will do) and may depend on the politician's poterior belief. Our question is: how does competition affect how much information is revealed in equilibrium and how does this change with the number of senders?

There is always an equilibrium of this game in which all senders fully reveal the state. Our main result is that typically the state is fully revealed in \emph{every} equilibrium. We find that \textemdash under mild technical assumptions \textemdash when sender utility functions are globally nonlinear (in particular, are nonlinear on every edge of the simplex) then regardless of the number of senders the state is fully revealed in every equilibrium.\footnote{The \emph{every} quanitifier implies, by standard upper hemicontinuity arguments, that if the conditions for full revelation are met for some zero-sum utilities, then for utilities close to those \emph{all} equilibria are \emph{almost} fully revealing.} If utility functions are linear on every edge of the simplex, a knife-edge case, there are equilibria in which the receiver does not always learn the state. Two implications are worth mentioning. In the binary-state case, the state is fully revealed in every equilibrium if and only if all senders are not indifferent across all strategy profiles. If the receiver chooses from a finite set of actions, then generically  the receiver learns enough to take her first-best action; furthermore, the state is fully revealed in every equilibrium if and only if the receiver prefers a different action in every state.

The intuition for our results can be seen from the two-sender binary-state case. The first observation is that, as a sender is always free to fully reveal the state and the game is zero-sum, each sender must do exactly as well in any equilibrium as she would from full revelation. When utility functions are nonlinear, there are some posteriors at which one sender has an 'advantage' and one has a 'disadvantage' in the following sense. Conditional on such a posterior belief, the sender with an advantage is getting a higher payoff than she would from fully revealing the state while the sender with a disadvantage would rather the state be revealed. Loosely, each sender will try and maximize the probability that the receiver's posterior lies in her regions of advantage. While no sender has unilateral control over the receiver's posterior, a sender can affect the posterior conditional on her opponent's signal realization not being fully revealing. We show that if the state is not being fully revealed, at least one sender can use extreme signals\footnote{Signals that induce posteriors in favor of one state close to fully revealing that state.} to force some posteriors into regions she has an advantage. She can do this in a way that gives her higher utility than she would get from full revelation, which means this cannot be an equilibrium.

This idea extends to arbitrary finite state spaces and more than two senders. Given choices of experiments for each sender, we say that a set of states is \emph{not pooled} if the receiver never assigns positive probability to all of them. If a set of states is not pooled, the receiver will always be able to distinguish between (at least) some of these states. We show that a subset of states is not pooled in every equilibrium if and only if conditional on the receiver learning the state is in this subset, some sender has strict preferences over what further information to reveal. For instance, a pair of states is not pooled in all equilibria if any only if some sender's utility is nonlinear on the edge of the simplex between those two states.\footnote{i.e. the line joining degenerate beliefs on the two states.} It follows that nonlinearity on every edge is necessary and sufficient for full revelation in every equilibrium.

An implication of our main result is that competition of a zero-sum form cannot decrease, in the sense of \citet{blackwell1953equivalent}, the information provided in equilibrium.\footnote{In other words, if a set of senders did not have zero-sum preferences and we added a sender to make the game zero-sum, then equilibrium information will not decrease.} One contribution of this paper is to identify a natural and applicable environment \textemdash zero-sum preferences and conditionally independent experiments \textemdash for which this is the case. In our setting the effect of competition is stark \textemdash typically it generates full information. We show our results do not rely on restricting attention to conditional independence; in particular, if senders have access to additional experiments (i.e. senders are able to correlate their experiments partially or arbitrarily), they still hold. Our analysis hence generalizes, up to a technical assumption, the zero-sum game results of \cite{gentzkow2017bayesian}, who consider a similar model in which senders are able to arbitrarily correlate experiments (see discussion below). However, when senders have access to only a very limited set of conditionally independent experiments, our results fail and zero-sum competition can lead to a strict decrease in equilibrium information. We explore what technology/strategies are needed for zero-sum competition to not decrease information and identify simple classes of experiments for which this is the case. For instance, if each sender only has access to a single base experiment (satisfying a mild technical condition) but can repeat it as many times as desired, then a version of our results still holds. This example maps on to applications in which senders commission scientific studies and can choose only the sample size or number of clinical trials used in the studies.

Finally, we consider a few variants of our model. When senders begin the game with bounded private information\footnote{i.e. senders' private beliefs are bounded away from the boundaries of the simplex.} there is full revelation in all equilibria but for in a knife-edge case of sender preferences. This setting is applicable, for instance, to the prosecutor/defense attorney example in which the defense lawyer may hold private information on her client's guilt/innocence. We show that  there is full revelation in all equilibria for a larger set of sender preferences in our baseline model than in a version of the game where senders move sequentially; we relate this version of the game to the sequential Bayesian Persuasion games of \citet{li2018sequential}, \citet{li2018bayesian}, \citet{wu2017coordinated}, and \citet{dworczak2020robust}. In Supplementary Appendix C, we then apply our analysis to a few extensions of our model which deviate from our zero-sum assumption and are applicable to real-world settings.

The rest of the paper is organized as follows. We discuss related literature below. In Section 2 we introduce our model and in Section 3 we solve the special case of two senders and a binary state space. In Section 4 we extend the intuition from Section 3 to the general model to obtain our main results. Section 5 considers two applications of our model, one with a single receiver and one with multiple receivers. In Section 6 we discuss to what extent our results are robust to our two most substantive assumptions: zero-sum preferences and conditionally independent experiments. We conclude in Section 7. Appendix A contains proofs of all the main results and Supplementary Appendix B contains discussions and proofs of extensions and robustness results. Appendix C contains extensions to two nonzero-sum applications.

\textbf{Related literature.} The effect of competition on information has been studied in environments of cheap talk (e.g. \citet{krishna2001model}, \citet{battaglini2002multiple}), costly signalling (e.g. \citet{kartik2020information}), disclosure (e.g. \citet{milgrom1986relying}) and, more recently, Bayesian Persuasion (e.g. \citet{gentzkow2016competition}, \citet{gentzkow2017bayesian}). \citet{gentzkow2016competition} and \citet{gentzkow2017bayesian} (henceforth GK) show that when senders have access to a sufficiently rich (Blackwell connected) set of experiments, competition (weakly) increases equilibrium information. Under Blackwell connectedness, GK (2017b) obtain a full relevation result for zero-sum games very similar to ours. Blackwell connectedness requires that senders are able to arbitrarily correlate their experiments; in contrast we study the case of conditional independence \textemdash a common assumption in information economics and an important benchmark for many applications. We discuss the relationship between our paper and GK (2017b) in Section 6.

\citet{boleslavsky2018limited} and \citet{au2020competitive} study two senders persuading a receiver. However their setups are substantially different from ours because each sender can only reveal information about part of the state (her own type); as a consequence, they find unique non-fully revealing equilibria. \citet{koessler2022interactive} also consider an environment where each of many senders can reveal information only about one dimension of a multidimensional state space.   \citet{li2018sequential}, \citet{li2018bayesian}, and \citet{wu2017coordinated} consider Bayesian Persuasion with multiple senders moving sequentially. Finally, in a concurrent paper, \citet{dworczak2020robust} (henceforth DP) study a single persuader who is uncertain about what additional information nature may give the receiver and chooses an experiment to maximize her worst-case payoff. This setting is related to competition between two senders in our model (our case of more than two senders is less related). While their baseline model allows nature to arbitrarily correlate her experiment with the persuader's, they address the case of conditionally independent experiments in a supplementary appendix and obtain results related to ours. However, due to differences between the models, our results concerning the total information revealed in equilibrium are stronger. See Section 4.3 for discussion.

\section{Model}

There is a state $\omega \in \Omega = \{1,...,N\}$. All agents have a common prior belief on $\omega$ with full support $\pi \in$ int$\Delta(\Omega)$. There are $M>1$ senders, $1,...,M$, who persuade a receiver.\footnote{As we do not explicity model the receiver acting, the model allows for any number of receivers. We discuss the receiver(s) in more detail and explicitly model them in Sections 4 and 5.}

Fix a set of signals $S$ with $|S| \geq N$. The game starts with each sender $i$ simultaneously choosing a set $S_i \subseteq S$, $|S_i| <\infty$ and an experiment $\Pi_i: \Omega \rightarrow \Delta(S_i)$. Each $\Pi_i$ gives the probability of the receiver receiving each signal in $S_i$ conditional on each state. As $|S_i|< \infty$, senders may only choose \emph{finite signal} experiments. Implicit in this definition of experiments is that senders' experiments are independent conditional on the state. We discuss these assumptions below.

The receiver observes the choices of $\Pi_1,...,\Pi_M$ (and implicitly $S_1,...,S_M$). Then, the state is realized (but not observed by the receiver) and signals from each of the $M$ experiments, $s_1 \in S_1,...,s_M \in S_M$, are realized and observed by the receiver. The receiver is Bayesian and updates her belief on $\omega$ to some posterior $\beta \in \Delta(\Omega)$. Senders receive their payoffs and the game ends.

Senders' payoffs depend only on the receiver's posterior belief $\beta$. Each sender $i$ has a piecewise analytic utility function $u_i: \Delta(\Omega) \rightarrow \mathbb{R}$.\footnote{That is, each $u_i$ is defined by a finite partition of $\Delta(\Omega)$ into convex sets and a real analytic function for each element of the partition. Note this restriction is not necessary; see Section 4.3 for discussion.} Crucially, we assume senders' payoffs are zero-sum: $u_1(\beta) + ... + u_M(\beta) = 0$ for all $\beta \in \Delta(\Omega)$.\footnote{Zero is just a normalization; any constant-sum game will do.}\footnote{This could represent the reduced form of a game where the receiver chooses an action $a \in A$ after observing experiment realizations. The receiver has preferences $u_r(a,\omega)$ and the senders have preferences $\{u_i(a,\omega)\}_i$ which are zero-sum: $\sum_i u_i(a,\omega) = 0$ for all $a \in A, \omega \in \Omega$. } For any state $l \in \Omega$ let $\delta_l \in \Delta(\Omega)$ be the belief that puts probability $1$ on state $l$. Due to the structure of the game, we can make the following normalization: $u_i(\delta_l)=0$ for all senders $i=1,...,M$ and all states $l \in \Omega$ (see explanation for this normalization below after reading the definitions in the next paragraph).

A strategy profile is a choice of experiment for each sender $(\Pi_1,...,\Pi_M)$. Let $U_i(\Pi_1,...,\Pi_M) = \mathbb{E}_{\Pi_1,...,\Pi_M}[u_i(\beta)]$ be sender $i$'s ex-ante expected utility from $(\Pi_1,...,\Pi_M)$; the expectation is over experiment realizations, of which $\beta$ is a function. Senders choose experiments to maximize their ex-ante expected utility.

\textbf{Normalizing $\boldsymbol{u_i(\delta_l)=0}$.} To see why this is a normalization, suppose senders have utility functions $u_1',...,u_M'$ with $u_1'(\beta)+...+u_M'(\beta)=0$ for all $\beta$. For $i=1,..,M$ let $\alpha_i : \Delta(\Omega) \rightarrow \mathbb{R}$ be the affine function $\alpha_i(\beta) = - \sum_l \beta_l u_i'(\delta_l)$. For each $i$, let $u_i = u_i' + \alpha_i$. Then $u_i(\delta_l)=0$ for all $l \in \Omega$. Note that utility function $u_i$ preserves the same preferences over strategy profiles as $u_i'$ as for any strategy profile $(\Pi_1,...,\Pi_M)$, $\mathbb{E}_{\Pi_1,...,\Pi_M} [u_i(\beta)] = \mathbb{E}_{\Pi_1,...,\Pi_M} [u_i'(\beta)] - \sum_l \pi_l u_i'(\delta_l)$ and the latter term is a constant. Finally note that $\alpha_1(\beta)+...+\alpha_M(\beta)=0$ for all $\beta \in \Delta(\Omega)$, so $u_1+....+u_M=0$. While the normalization changes senders' preferences over the receiver's posterior, preferences over strategy profiles are unchanged and these are what is relevant for equilibrium analysis.

\textbf{Discussion of strategies.} There are a few aspects of senders' strategies that are worth discussing. First, we restrict senders to picking conditionally independent experiments. In the lobbyist example, this corresponds to the reports lobbyists commission being independently commissioned, researched, and written. The conditionally independent case is natural for many applications and an important benchmark to consider. It contrasts with environments with richer strategies in which senders are able to correlate their experiments' realizations. For instance, \citet{gentzkow2017bayesian} study a `Blackwell connected' environment in which senders can arbitrarily correlate experiments; our setting is not Blackwell connected. In Section 6 we allow our senders to play correlated experiments in addition to conditionally independent ones.

Second, while we restrict attention to finite signal experiments,\footnote{Restricting attention to finite signal experiments is commonly done in the Bayesian Persuasion literature (e.g. \citet{kamenica2011bayesian}, \citet{gentzkow2017bayesian}).} this is for convenience and we show in Supplementary Appendix B that our main result goes through when the assumption is dropped.\footnote{Note all equilibria with the finite signal restriction are equilibria without it.} As we allow for $S$ to be infinite, our baseline model allows for the possibility of senders choosing experiments with arbitrary numbers of finite signals. As is common in Bayesian Persuasion work, it is important for our arguments that $|S| \geq N$.\footnote{It is well known that a single persuader has such a Bayesian Persuasion solution using at most $N$ signals; hence in our game senders will always have such a best response. Note that restricting the cardinality of $S$ is not without loss in multi-sender settings as arbitrarily large numbers of signals may be used in equilibrium.}

\textbf{A special case.} It is worth noting a simple and important class of games covered by our model. Suppose there is a single receiver who maximizes her utility given her posterior belief $\beta$ by taking one of two actions $a_1,a_2$. Suppose senders get payoffs depending only on this action; these payoffs need not be zero-sum. This setup is natural for many applications we are interested in: lobbyists persuading a politician to vote yes/no on a bill, attorneys persuading a judge/jury to acquit/convict a defendant, or two politicians competing for a single voter's vote. As long as at least two senders prefer different actions, we can normalize sender payoffs to make the game zero-sum. To see this, suppose sender $i$ strictly prefers action $a_1$; then when comparing two strategy profiles, $i$ prefers the one that induces the receiver to take $a_1$ more often. Normalizing sender payoffs to be zero-sum will not affect this preference as long as ordinal preferences over actions are preserved.

\textbf{Interim beliefs.} Instead of thinking of sender $i$ picking $\Pi_i$, it is easier to think of $i$ choosing a distribution over the receiver's \emph{interim} beliefs. For any $i$ and choice of $\Pi_i$, let $\Gamma_i \in \Delta(\Omega)$ be the random variable representing the receiver's belief on $\omega$ if she observes only the realization of $\Pi_i$, $s_i \in S$. $\Gamma_i$ represents the \emph{interim} belief of the receiver after she observes information from $\Pi_i$ but before viewing the realizations from $\{\Pi_j\}_{j \neq i}$ and updating to her posterior belief.\footnote{As the receiver is Bayesian, the order in which she views signal realizations does not matter.} 

A random variable $\Gamma$ is Bayes-plausible if $\mathbb{E}[\Gamma] = \pi$. Following \citet{kamenica2011bayesian}, it is without loss for us to recast the choice of experiment of each sender $i$ as a selection of a Bayes-plausible distribution of the interim beliefs, $\Gamma_i$, the experiment induces. As we have restricted senders to picking finite signal experiments which employ at most $|S|$ signals, a pure strategy for sender $i$ is a selection of a Bayes-plausible $\Gamma_i$ with support on a finite number of beliefs that is at most $|S|$.\footnote{Note that any finite mixture of pure strategies is also a pure strategy.} Henceforth, when we use $\Gamma_i$ it implicit that this random variable is Bayes-plausible and has finite a support with at most $|S|$ elements. A strategy profile is a vector $(\Gamma_1,...,\Gamma_M)$. Fixing any strategy profile and sender $i$, let $\Gamma_{-i}$ denote the experiment induced by observing realizations $\{\Gamma_j\}_{j \neq i}$.

There are two benchmark experiments to consider. We say $\Gamma_i$ is fully revealing, or $\Gamma_i = \Gamma^{FR}$, if $\Pr(\Gamma_i = \delta_l) = \pi_l$  $\forall l \in \Omega$. If any sender chooses a fully revealing experiment the receiver learns the state with certainty. The second benchmark is the fully uniformative experiment which we denote $\Gamma^{U}$; $\Gamma_i = \Gamma^{U}$ if $\Pr(\Gamma_i = \pi) =1$.

\textbf{Equilibrium.} The solution concept is Nash Equilibrium; a Nash Equilibrium of this game is a vector of random variables $(\Gamma_1,...,\Gamma_M)$ such that no sender $i$ can strictly improve her ex-ante expected utility, $U_i(\Gamma_1,...,\Gamma_M)$, by deviating.

There is a trivial NE of this game: $(\Gamma^{FR},...,\Gamma^{FR})$. All senders are left indifferent across all experiment choices as the state will be fully revealed by other senders' experiments regardless. Our results characterize when the state is fully revealed in \emph{every} equilibrium.

\section{Two Senders and a Binary State Space}

First we derive the main results for the two-sender binary-state case. The intuition will extend to the general case.

\begin{center}
\includegraphics[scale=0.6]{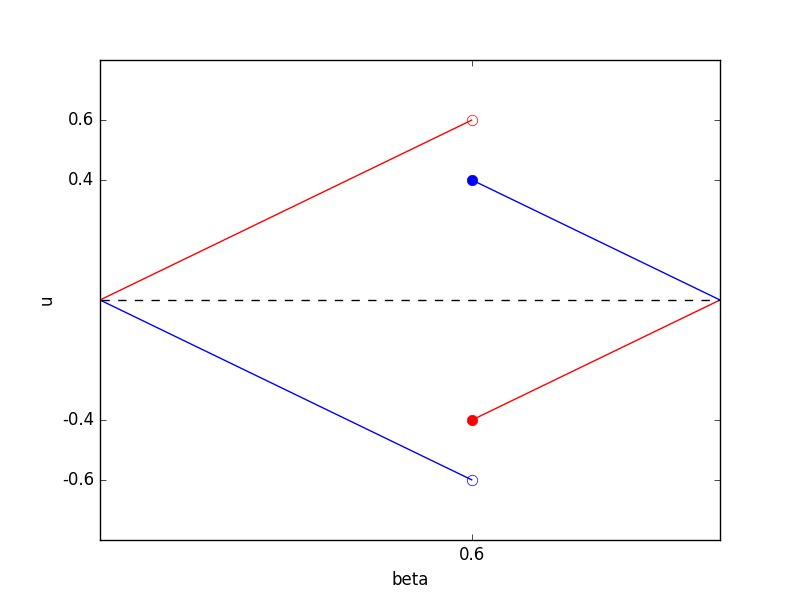}
\captionof{figure}{Example of $u_1$ (blue) and $u_2$ (red). $u_1(\beta)=\beta$ for $\beta<0.6$ and $u_1(\beta)=1-\beta$ for $\beta \geq 0.6$. Sender 1's preferences are those in Kamenica and Gentzkow (2011)'s leading prosecutor persuading a judge example with discontinuity at $0.6$ and normalization $u_1(0)=u_1(1)=0$.}
\end{center}

Let $\Omega = \{0,1\}$. A belief here is a scalar representing the probability the state is $\omega=1$. Figure 1 shows an example of sender preferences. A sender $i$'s strategy is a choice of interim belief random variable $\Gamma_i \in [0,1]$. Note that for any $\Gamma_1,\Gamma_2$ chosen, the receiver's posterior belief can be written as a function of the interim beliefs realized from both experiments. If $\Gamma_1=x$ and $\Gamma_2=y$, then the posterior is:

\begin{equation}  \label{beta}
\beta(x,y) = \frac{(1-\pi)xy}{xy-\pi x- \pi y+ \pi} \\
\end{equation}

Note that $\beta(1,y)=\beta(x,1)=1$ and $\beta(0,y)=\beta(x,0)=0$; if either interim belief fully reveals the state, the other is irrelevant. Note $\beta(0,1)$ and $\beta(1,0)$ are not well defined but this is not an issue as it is impossible for one sender to fully reveal $\omega = 0$ while the other reveals $\omega=1$.

For either sender $i$, given any strategy $\Gamma_i$, consider the distribution of $\Gamma_i$ conditional on $\Gamma_j=x$ ($j \neq i$): $\Pr(\Gamma_i=y|\Gamma_j=x)$. $\Pr(\Gamma_i=y|\Gamma_j=x)$ can be constructed by taking the signal structure $\Pi_i$ that corresponds to $\Gamma_i$ and deriving the distribution over interim beliefs it induces if $x$ and not $\pi$ was the receiver's prior.\footnote{Formally $\Pr(\Gamma_i=y|\Gamma_j=x) = \frac{\Pr(\Gamma_i=y)}{\pi (1 - \pi)} (xy - \pi x - \pi y + \pi)$}

Given an opponent choice of $\Gamma_j$, define $W_i(x)$ as sender $i$'s expected payoff conditional on generating $\Gamma_i=x$. For a fixed $\Gamma_2$ $W_1(x)$ is written:

\begin{equation} \label{W_i}
 W_1(x) = \sum_{y \in supp[\Gamma_2]} u_1(\beta(x,y)) \Pr(\Gamma_2=y|\Gamma_1=x) \\
\end{equation}

Note that $W_1(1)=W_2(0)=W_1(0)=W_2(1)=0$; if either players' experiment generates a fully revealing interim belief then the other experiment is irrelevant. Two special cases are important. When $\Gamma_2=\Gamma^{FR}$ then, regardless of $u_1$ or the prior,  $W_1(x)=0$ for all $x$. This is because $\Gamma_2$ will reveal the state to be $0$ or $1$; any interim belief sender 1 produces can only affect the relative probability of these events, both of which yield $u_1=0$.  Meanwhile, when $\Gamma_2=\Gamma^U$, then $W_1(x) = u_1(x)$ as $x$ will be the receiver's posterior.


\subsection{Analysis}

The result below will be useful.

\begin{lemma}
In any equilibrium $(\Gamma_1,\Gamma_2)$:
\begin{enumerate}
\item $U_1(\Gamma_1,\Gamma_2)=U_2(\Gamma_1,\Gamma_2)=0$.
\item $W_1(x) \leq 0$ and $W_2(x) \leq 0$ for all $x \in [0,1]$.
\end{enumerate}
\end{lemma}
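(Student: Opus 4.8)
The plan is to prove the two parts in sequence, exploiting the fact that full revelation is always an available deviation guaranteeing a payoff of exactly zero.

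For part (1), I would observe that if sender $i$ deviates to $\Gamma^{FR}$, the receiver's posterior is always degenerate, and by the normalization $u_i(\delta_l)=0$ this gives $U_i(\Gamma^{FR},\Gamma_j)=0$ regardless of the opponent's strategy $\Gamma_j$. Hence in any equilibrium $U_i(\Gamma_1,\Gamma_2)\geq U_i(\Gamma^{FR},\Gamma_j)=0$ for both $i$. Since preferences are zero-sum, $U_1+U_2=0$; combined with $U_1,U_2\geq 0$ this forces $U_1=U_2=0$.

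For part (2), the first key observation is that, by conditional independence, $W_1$ depends only on $\Gamma_2$ and not on sender 1's own strategy: given $\Gamma_1=x$, the conditional law of $\Gamma_2$ is obtained by running $\Pi_2$ with $x$ in place of the prior, so $W_1(\cdot)$ is a fixed function and sender 1's payoff from \emph{any} strategy equals $\mathbb{E}_{\Gamma_1}[W_1(\Gamma_1)]$. Suppose toward a contradiction that $W_1(x^*)>0$ for some $x^*$; since $W_1(0)=W_1(1)=0$ we may take $x^*\in(0,1)$. I would then construct the deviation placing weight $p$ on the interim belief $x^*$ and the remaining weight on the degenerate beliefs $0$ and $1$, with the split chosen so that the mean is $\pi$ (Bayes-plausibility): setting $q_1=\pi-px^*$ and $q_0=1-p-q_1$, both are nonnegative for $p>0$ small since $\pi\in(0,1)$. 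Because $W_1(0)=W_1(1)=0$, this deviation yields payoff $p\cdot W_1(x^*)>0=U_1$, contradicting equilibrium. The argument for $W_2$ is symmetric.

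The step I expect to require the most care is justifying that $W_1$ is independent of sender 1's own strategy — this is exactly where conditional independence enters and where the interpretation of an interim belief $x$ as an updated prior for the opponent's experiment must be made precise. Once that is in hand, the profitable deviation is routine: the only thing to verify is that a strictly positive weight on $x^*$ can be balanced against fully-revealing realizations while keeping all probabilities in $[0,1]$ and the mean equal to $\pi$.
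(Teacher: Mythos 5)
Your proposal is correct and follows essentially the same route as the paper: part (1) via the full-revelation deviation plus zero-sumness, and part (2) via the deviation supported on $\{x^*,0,1\}$ with a small Bayes-plausible weight on $x^*$, which is exactly the paper's construction (the paper phrases it as a contrapositive, you as a contradiction — same argument). Your explicit justification that $W_1$ is a fixed function of the opponent's strategy alone is the point the paper formalizes via its conditional-distribution formula $p_{S'}(y|x)=\sum_k x_k y_k p_{S'}(y)/\pi_k$, so that step is also in line with the paper's treatment.
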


Property (1) follows from the game being zero-sum and the observation that each sender $i$ can guarantee a payoff $U_i = 0$ by fully revealing the state. While property (1) says that sender equilibrium payoffs equal those from full revelation, it does not say that we must have full revelation in equilibrium: for instance if the $u_1$ and $u_2$ are linear, any $(\Gamma_1,\Gamma_2)$ constitute an equilibrium.

Property (2) holds because any violation leads to a contradiction of (1). Fix any $\Gamma_j$ such that sender $i \neq j$ has $W_i(x)>0$ for some $x$. We can find a $\Gamma_i$ with support only on $\{x,0,1\}$; as $i$ gets strictly positive expected utility whenever $x$ is realized and $0$ otherwise, $U_i(\Gamma_i,\Gamma_j)>0$. Hence such $\Gamma_j$ cannot be played in equilibrium. Figure \ref{figure_lemma1} shows how to construct such a $\Gamma_1$ in our main example when $\Gamma_2 = \Gamma^U$ and hence $W_1(0.7)>0$.

\begin{center} 
\includegraphics[scale=0.6]{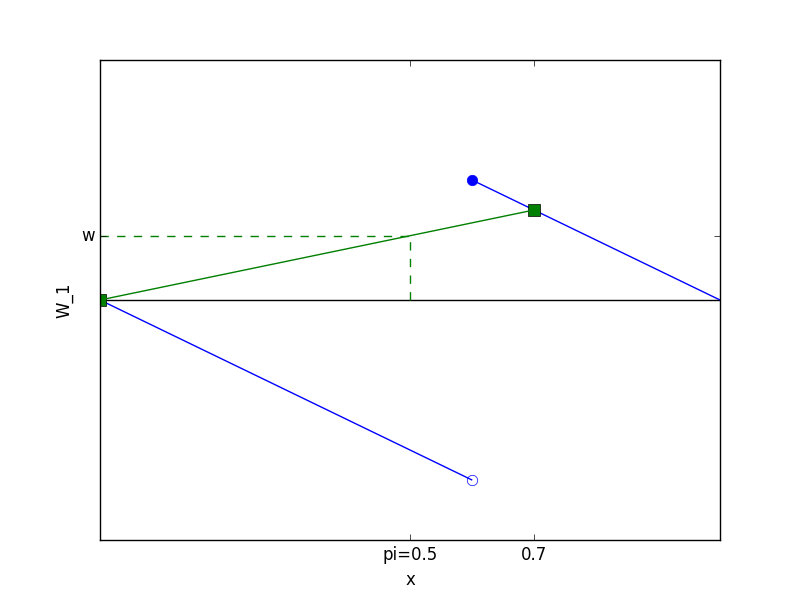}
\captionof{figure}{The blue curve is $W_1(x)$ when $\Gamma_2=\Gamma^U$. Sender 1 can construct $\Gamma_1$ with support $\{0,0.7\}$ with $Pr(\Gamma_1=0)$ and $Pr(\Gamma_1=0.7)$ chosen to respect Bayes-plausibility. $U_1(\Gamma_1,\Gamma_2)=w>0$.}
\label{figure_lemma1}
\end{center}

More generally, when $W_i(x)>0$ we can construct an appropriate $\Gamma_i$ as follows. (1) If $x<\pi$ define $\Gamma_i$ by $Pr(\Gamma_i=1)=\frac{\pi-x}{1-x}$ and $Pr(\Gamma_i=x)=\frac{1-\pi}{1-x}$, (2) if $x> \pi$ then let $Pr(\Gamma_i=0)=\frac{x-\pi}{x}$ and $Pr(\Gamma_i=x)=\frac{\pi}{x}$, and (3) if $x = \pi$ let $\Gamma_i=\Gamma^U$. Note that in each case $|supp[\Gamma_i]| \leq 2$ and hence $\Gamma_i$ can always be implemented under the assumption $|S| \geq N=2$. 

The main result for the two-sender binary-state case relies on Lemma 1. We show that if utility functions are nonlinear, then in equilibrium at least one sender $i$ must choose $\Gamma_i = \Gamma^{FR}$, or else $W_j(x)$ will violate property (2).

\begin{prop}
The state is fully revealed in every equilibrium if and only if $u_1$ is nonlinear.
\end{prop}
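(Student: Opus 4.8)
My plan is to prove both directions, treating necessity as routine and concentrating on sufficiency. For necessity, suppose $u_1$ is linear. Since $u_1$ is affine on $[0,1]$ and the normalization forces $u_1(0)=u_1(1)=0$, we get $u_1\equiv 0$ and hence $u_2=-u_1\equiv 0$. Every strategy profile then yields $U_1=U_2=0$, so every profile is a Nash equilibrium; in particular $(\Gamma^U,\Gamma^U)$ is an equilibrium in which the state is never revealed. Thus full revelation in every equilibrium requires $u_1$ nonlinear.

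For sufficiency I would argue the contrapositive form: assuming $u_1$ nonlinear, show that no equilibrium leaves the state unrevealed. First I would record the reduction that, with a binary state and conditionally independent experiments, the state is fully revealed exactly when some sender plays $\Gamma^{FR}$: if neither does, each $\Gamma_i$ places positive probability on an interior interim belief, and by conditional independence an interior $x$ and an interior $y$ co-occur with positive probability in some state, producing via \eqref{beta} an interior posterior. So suppose toward a contradiction that $(\Gamma_1,\Gamma_2)$ is an equilibrium with neither sender fully revealing. Lemma 1 gives $U_1=U_2=0$ together with $W_1\le 0$ and $W_2\le 0$ everywhere; since $U_i=\mathbb{E}_{x\sim\Gamma_i}[W_i(x)]=0$ and $W_i\le 0$, each $W_i$ must vanish on the support of $\Gamma_i$, so there exist interior $x^\ast\in\mathrm{supp}(\Gamma_1)$ and $y^\ast\in\mathrm{supp}(\Gamma_2)$ with $W_1(x^\ast)=W_2(y^\ast)=0$.

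The engine of the proof is a reinterpretation of \eqref{W_i}. Conditional on $\Gamma_1=x$ the receiver's belief is $x$, and the martingale property forces the posterior to average back to $x$; concretely, the realization of $\Gamma_2$ acts on the ``prior'' $x$, so $W_1(x)=\mathbb{E}_{G_x}[u_1]$, where $G_x$ is the posterior distribution induced by sender $2$'s experiment when the prior is $x$ (and symmetrically $W_2(y)=-\mathbb{E}_{H_y}[u_1]$). This exhibits $W_i$ as a smoothing of $u_i$ interpolating between $u_i$ (opponent plays $\Gamma^U$) and $0$ (opponent plays $\Gamma^{FR}$). Nonlinearity of $u_1$ means $u_1$ is strictly positive somewhere, or strictly negative somewhere, on $(0,1)$; the plan is to show that the corresponding sender can then steer posteriors into her advantage region and secure $W_i>0$ at some belief, contradicting Lemma 1(2) and hence the existence of the unrevealing equilibrium.

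The main obstacle is that this last step cannot be carried out against one opponent strategy in isolation: a sufficiently informative $\Gamma_2$ can push every $G_x$ onto sender $1$'s disadvantage region and keep $W_1\le 0$ even when $u_1>0$ somewhere (such a $\Gamma_2$, however, makes $\Gamma^{FR}$ sender $1$'s unique best response, so it supports only fully revealing equilibria). The real content is therefore to rule out that $W_1\le 0$ and $W_2\le 0$ hold \emph{simultaneously} for the two linked, conditionally independent experiments with interior zeros at $x^\ast$ and $y^\ast$. This is exactly where conditional independence, as opposed to the Blackwell-connected experiments of GK, matters, since it restricts each sender to a single experiment applied across all of the opponent's realizations and thereby weakens the adversary. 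I would close the gap by exploiting piecewise analyticity: working in a neighborhood of an extreme interior realized posterior $\beta(x^\ast,y^\ast)$, the two tightness conditions should pin down the local shape of $u_1$ and force it to agree with an affine function there, after which analytic continuation propagates affineness across each piece, contradicting nonlinearity. Verifying that the two smoothing conditions genuinely force local affineness, rather than merely constraining $u_1$, is the step I expect to require the most care.
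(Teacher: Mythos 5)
Your necessity argument and your setup for sufficiency (the reduction to ``some sender plays $\Gamma^{FR}$,'' Lemma 1, and the reading of $W_1(x)$ as sender 1's expected utility when sender 2's experiment acts on the prior $x$) all match the paper. The genuine gap is at the core of sufficiency. You abandon the direct exploitation argument on the grounds that it ``cannot be carried out against one opponent strategy in isolation,'' because a sufficiently informative $\Gamma_2$ could keep $W_1\le 0$ everywhere even though $u_1>0$ somewhere. That observation is true only for a sender whose advantage region is shielded by her opponent's advantage regions (e.g.\ $u_1>0$ only in the middle of $[0,1]$ with $u_1<0$ near both endpoints), and the paper's proof never faces this problem because it picks the right sender and the right region. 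Let $q=\sup\{\beta\in[0,1]: u_1(\beta)\neq 0\}$. Both utilities vanish identically above $q$, and by piecewise analyticity and zero-sumness one sender --- say sender 1 after relabeling --- has $u_1>0$ on some interval $[r,q)$ (or $u_1(q)>0$, or on $[r,1)$ when $q=1$); hence $u_1\ge 0$ on all of $[r,1]$, i.e.\ there is no disadvantage region above $r$ to be pushed into. Now against \emph{any} non-fully-revealing $\Gamma_2$, let $\underline{y}$ be the smallest interior point of its (finite) support and pick $\underline{x}$ with $\beta(\underline{x},\underline{y})=r$. Conditional on $\Gamma_1=\underline{x}$: degenerate realizations of $\Gamma_2$ yield utility $0$; every interior realization $y\ge\underline{y}$ yields a posterior in $[r,1)$, hence utility $\ge 0$; and the realization $\underline{y}$, which has strictly positive conditional probability, yields utility $u_1(r)>0$. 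So $W_1(\underline{x})>0$, contradicting Lemma 1(2), and therefore sender 2 must play $\Gamma^{FR}$ in every equilibrium. The ``steering'' step you declared impossible in isolation does work, precisely because the deviation belief $\underline{x}$ can always be made more extreme than any interior belief the opponent uses, and because above $q$ there is nothing to lose.

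By contrast, the step you propose in its place --- that the joint conditions $W_1\le 0$, $W_2\le 0$, with equality on the supports, force $u_1$ to be locally affine near an extreme realized posterior $\beta(x^{\ast},y^{\ast})$, after which analytic continuation spreads affineness across each piece --- is a conjecture rather than an argument, and you acknowledge that its key step is unverified. It also aims at the wrong object: the paper's contradiction does not come from the shape of $u_1$ near posteriors realized in the candidate equilibrium, but from a deviation to an interim belief $\underline{x}$ that may lie far outside both supports; conditions that hold only on the supports cannot substitute for Lemma 1(2) applied at such an exterior point. (Your intermediate observation that $W_i$ must vanish on $\mathrm{supp}[\Gamma_i]$ is correct but never needed --- once the exploiting belief is exhibited, $W_i\le 0$ everywhere is all that is used.) As it stands, the sufficiency direction is not proved.
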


The `only if' direction is trivial \textemdash if $u_1$ is linear then senders are indifferent between all strategy profiles. Hence the result can be restated as:
\begin{align*}
& \text{There is full relevation in every equilibrium} \\
& \iff \exists (\Gamma_1,\Gamma_2),(\Gamma_1',\Gamma_2') \text{ and a sender $i$ with } U_i(\Gamma_1,\Gamma_2) \neq U_i(\Gamma_1',\Gamma_2').
\end{align*}

In the rest of this section we prove the `if' direction: $u_1$ nonlinear $\implies$ full revelation in every equilibrium.

The idea can be seen using the example in Figure 1 with any prior. Note that for all $r \in [0.6,1)$, $u_1(\beta)>0$ for all $\beta \in [r,1)$; fix any such $r$. Suppose for contradiction that sender 2 plays a non-fully revealing strategy $\Gamma_2$ in some equilibrium. As $\Gamma_2 \neq \Gamma^{FR}$, $\Pr(0<\Gamma_2<1)>0$; let $\underbar{y} = \min supp[\Gamma_2]\setminus \{0,1\} \in (0,1)$ be in the smallest interior belief in the support of $\Gamma_2$. Using the definition of $\beta(x,y)$, define $\underbar{x}$ by $\beta(\underbar{x},\underbar{y}) = r$. Conditional on $\Gamma_1=x \in [\underbar{x},1)$, $\beta(x,y) \in [r,1)$ for all interior $y$ in $\Gamma_2$'s support. But then for all $x \in [\underbar{x},1)$ we have:

\begin{align*}
& W_1(x) = \underbrace{u_1(\beta(x,0))}_{=0} \Pr(\Gamma_2=0|\Gamma_1=x) + \underbrace{u_1(\beta(x,1))}_{=0} \Pr(\Gamma_2=1|\Gamma_1=x) + \\
& \sum_{y \in supp[\Gamma_2] \setminus \{0,1\}} \underbrace{u_1(\beta(x,y))}_{>0} \underbrace{\Pr(\Gamma_2=y|\Gamma_1=x)}_{>0} > 0.\\
\end{align*}

This contradicts Lemma 1 property (2). 

The broader intuition is as follows. We say a sender $i$ has an \emph{advantage} on any subset of $[0,1]$ on which $u_i$ is strictly positive; for instance in the example, sender 1 has an advantage on $[0.6,1)$.\footnote{We use the word advantage because Lemma 1 tells us that both senders will get ex-ante expected utility $0$ in equilibrium. Any posteriors that yield strictly better utility than this for a sender are relatively advantageous to that sender.} While senders would like the receiver's posterior to fall in their regions of advantage with high probability, neither sender's experiment unilaterally controls the posterior. However, in the example sender 1 has an advantage at the end of the unit interval, $[r,1)$. If sender 2 chooses $\Gamma_2$ that is interior with positive probability then sender 1 can find extreme enough interim beliefs $x \geq \underbar{x}$ guaranteeing that, conditional on $x$ being realized and $\Gamma_2$ being interior, the receiver's posterior is in $[r,1)$. Whenever $\Gamma_2$ fully reveals the state both senders get utility $0$, and so overall sender 1 gets a strictly positive expected payoff from generating an interim belief $x \in [\underbar{x},1)$; Lemma 1 says this is not possible in equilibrium.

This argument does not depend on the specific $u_1$ in the example. As $\beta \rightarrow 1$, whenever $u_1$ approaches $u_1(1)=0$ from above, as is the case in the example, we can find an $r \in (0,1)$ such that sender 1 has an advantage on $[r,1)$. Hence for any such $u_1$, we can replicate the same argument to show that any $\Gamma_2 \neq \Gamma^{FR}$ cannot be played in equilibrium. If $u_1$ approaches $0$ from below as $\beta \rightarrow 1$ then $u_2$ must approach from above, and so we must have $\Gamma_1 = \Gamma^{FR}$ in any equilibrium. The same argument applies whenever $u_1$ or $u_2$ approach $0$ from above as $\beta \rightarrow 0$. 

Piecewise analycity guarantees that when utilities are nonlinear in a neighborhood of $0$ or $1$, then some $i$ sender has an advantage on an interval $(0,r]$ or $[r,1)$. These intervals of advantage at the extremities of the interval are important as senders face uncertainty over the realization of their opponent's experiment. Conditional independence of experiments means $i$ cannot precisely control the receiver's posterior given every interior realization of $\Gamma_j$ and hence $i$ must have an extreme interval of advantage that she can ensure all interior posteriors fall in.

As $u_1,u_2$ are piecewise analytic, there is only one other case to consider: $u_1,u_2$ nonlinear and $u_1(\beta)=u_2(\beta)=0$ for all $\beta$ in some neighborhoods of both $0$ and $1$. Here too there will be a sender with an advantage closest to the ends of $[0,1]$ who can find a violation of Lemma 1 property (2) whenever her opponent does not fully reveal the state. Let $r = \sup \{\beta \in [0,1]: u_1(\beta) \neq 0\}$ be the supremum of posteriors at which $u_1,u_2$ are nonlinear (note $r<1$). WLOG, assume either $u_1(r)>0$ or that there are beliefs approaching $r$ from below at which $u_1>0$ (zero-sumness implies this must hold for either $u_1$ or $u_2$). Suppose $\Gamma_2 \neq \Gamma^{FR}$. Defining $\underbar{y}$ and $\underbar{x}$ as before, if $u_1(r)>0$, then $W_1(\underbar{x}) = u_1(r) \Pr(\Gamma_2=\underbar{y}|\Gamma_1=\underbar{x}) >0$. If $u_1(r) \not >0$, then $W_1(\underbar{x} - \epsilon)>0$ for some small enough $\epsilon>0$. 

\section{Main Result}
Now we apply the logic from the previous section to $N \geq 2$ states and $M \geq 2$ senders. For any $T \geq 1$ and experiments $\Gamma_1,...,\Gamma_T$, let $\beta(\Gamma_1,...,\Gamma_T)$ be the receiver's posterior belief after observing all $T$ realizations.\footnote{See Appendix A for an explicit formula.} Fixing opponents strategies $\Gamma_{-i}$, we can define $W_i(x)$ just as before: $W_i(x) = \sum_{y \in supp[\Gamma_{-i}]} u_i(\beta(x,y)) \Pr(\Gamma_{-i}=y|\Gamma_i=x)$. Lemma 1 then extends to this more general setting:

\addtocounter{lemma}{-1}
\begin{lemma}
In any equilibrium $(\Gamma_1,...,\Gamma_M)$, for all senders $i$:
\begin{enumerate}
\item $U_i(\Gamma_1,...,\Gamma_M)=0$.
\item $W_i(x) \leq 0$ for all $x \in \Delta(\Omega)$.
\end{enumerate}
\end{lemma}

Property (1) follows from the same arguments as in the previous seciton. Property (2) then follows by showing that for any $x \in \Delta(\Omega)$ there exists a Bayes-plausible experiment with support on $x$ and at most $N-1$ elements of $\{\delta_1,...,\delta_N\}$.\footnote{This is because $\pi$ is in the convex hull of $x$ and some $N-1$ elements of $\{\delta_1,...,\delta_N\}$.}

For any strategy profile $(\Gamma_1,...,\Gamma_M)$ and subset of states $\Omega' \subseteq \Omega$, we say $\Omega'$ is \emph{pooled} if $\Pr(\beta_l(\Gamma_1,...,\Gamma_M)>0$ $\forall l \in\Omega')>0$ (otherwise, $\Omega'$ is \emph{not pooled}). When $\Omega'$ is not pooled, the receiver will always be able to rule out at least one of the states in the set. For any $\Omega' \subseteq \Omega$, let $\Delta(\Omega') = \{\gamma \in \Delta(\Omega): \sum_{l \in \Omega'} \gamma_l = 1\}$ be the subset of the simplex assigning probability $1$ to $\omega \in \Omega'$. Note that for two states $l,k$, $\Delta(\{l,k\})$ is the edge of the simplex between $\delta_l$ and $\delta_k$. Hence Proposition 1 can be restated as: states $0,1$ cannot be pooled in any equilibrium if and only if some $u_i$ is nonlinear on $\Delta(\{0,1\})$. Proposition 2 generalizes Proposition 1 and characterizes when any set of states can be pooled in equilibrium.

\begin{prop}
A set $\Omega' \subseteq \Omega$ is pooled in no equilibrium if and only if there is a sender $i$ for whom $u_i$ is nonlinear on $\Delta(\Omega')$.
\end{prop}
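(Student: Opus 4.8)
The plan is to prove both directions, treating the converse (constructing a pooling equilibrium) as routine and concentrating effort on the forward implication. Throughout I use the generalization of Lemma~1 asserted in the text: in any equilibrium every sender earns $0$, and $W_i(\gamma)\le 0$ for every interim belief $\gamma$ and sender $i$, where $W_i(\gamma)=\mathbb{E}\!\left[u_i(\beta(\gamma,\Gamma_{-i}))\mid \Gamma_i=\gamma\right]$. Since the prior is interior and $W_i(\delta_l)=0$ at every vertex, a single interim belief $\gamma$ with $W_i(\gamma)>0$ already yields a profitable Bayes-plausible deviation (place mass $\epsilon$ on $\gamma$ and the rest on vertices so that the mean is $\pi$), exactly as in the proof of Lemma~1(2). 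Hence the whole forward direction reduces to: if some $u_i$ is nonlinear on $\Delta(\Omega')$ and a profile pools $\Omega'$, then some sender has an interim belief at which $W_i>0$.

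For the converse, suppose every $u_i$ is linear on $\Delta(\Omega')$. The vertices of $\Delta(\Omega')$ are exactly $\{\delta_l:l\in\Omega'\}$, and the normalization gives $u_i(\delta_l)=0$, so an affine function vanishing at all of them is identically $0$; thus each $u_i\equiv 0$ on $\Delta(\Omega')$. I then exhibit a pooling equilibrium in which every sender plays the experiment $\Gamma^{*}$ that reveals whether $\omega\in\Omega'$ and, if $\omega\notin\Omega'$, reveals the exact state. By conditional independence, when $\omega\in\Omega'$ all senders emit the ``$\omega\in\Omega'$'' signal and the posterior is $\pi(\cdot\mid\Omega')\in\operatorname{relint}\Delta(\Omega')$, so $\Omega'$ is pooled; otherwise the posterior is a vertex. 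Every realized posterior lies in $\Delta(\Omega')$ or at a vertex, so every payoff is $0$. A deviator changes nothing when $\omega\notin\Omega'$ (the others already reveal the state) and can only relocate posteriors within $\Delta(\Omega')$ when $\omega\in\Omega'$, where her utility is identically $0$; hence no deviation is profitable.

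For the forward direction I induct on $|\Omega'|$, with Proposition~1 as the base case $|\Omega'|=2$. Fix a profile pooling $\Omega'$. If some $u_j$ is nonlinear on a proper sub-face $\Delta(\Omega'')$, $\Omega''\subsetneq\Omega'$, I finish without any deviation argument: pooling $\Omega'$ puts positive probability on posteriors with full support on $\Omega'\supseteq\Omega''$, hence on $\Omega''$, so the profile also pools $\Omega''$, which the induction hypothesis forbids. The remaining (crux) case is that nonlinearity appears only at full dimension: every $u_j$ is affine—hence, by the vertex argument, identically $0$—on every proper sub-face, so all utilities vanish on the entire relative boundary of $\Delta(\Omega')$, while some $u_i$ is nonzero and therefore (as the utilities sum to zero) some $u_{i^{*}}$ is strictly positive at an interior point.

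The main obstacle is this crux case, where the advantage region $\{u_{i^{*}}>0\}$ lies in the relative interior rather than against a vertex, so the binary argument's total order on interim beliefs is unavailable. My plan is to transplant the extremal-point construction from Proposition~1's Case~2. Fix a generic linear functional $\phi$ on $\Delta(\Omega')$ and let $\gamma^{*}$ maximize $\phi$ over the closure of $\{\gamma: u_j(\gamma)\neq 0 \text{ for some } j\}$; beyond $\gamma^{*}$ in the $\phi$-direction every utility vanishes, i.e.\ we enter $Z=\{\gamma:u_j(\gamma)=0\ \forall j\}$, while at $\gamma^{*}$ piecewise analyticity together with $\sum_j u_j=0$ lets me choose $i^{*}$ with $u_{i^{*}}(\gamma^{*})>0$ (using a one-sided limit if $u_{i^{*}}$ jumps there, as with $u_1(r^-)$ in the binary case). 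Because the profile pools $\Omega'$ and experiments have finite support, the opponents' pooling beliefs $\gamma'$—those with support containing $\Omega'$ realized with positive probability—form a finite set. Using the product form $\beta_l(\gamma,\gamma')\propto\gamma_l\gamma'_l/\pi_l$, I would pick sender $i^{*}$'s interim belief $\gamma$ so that the $\phi$-extremal pooling $\gamma'$ maps exactly to $\gamma^{*}$ while every other pooling $\gamma'$ overshoots strictly past $\gamma^{*}$ into $Z$, giving $W_{i^{*}}(\gamma)=u_{i^{*}}(\gamma^{*})\Pr(\Gamma_{-i^{*}}=\gamma'\mid\Gamma_{i^{*}}=\gamma)>0$ and contradicting Lemma~1. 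The delicate step is exactly the higher-dimensional analogue of the monotonicity $\beta(x,y)\!\uparrow\! y$ that made the binary overshoot automatic: showing that one tilt $\gamma$ lands the extremal opponent belief on $\gamma^{*}$ and all remaining (finitely many) opponent beliefs strictly beyond it, which is where the genericity of $\phi$ and the conditional-independence reweighting of $\Gamma_{-i^{*}}$ must be combined.
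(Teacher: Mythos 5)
Your converse construction and your reduction of the forward direction are both sound and essentially identical to the paper's: the pooling equilibrium you build (reveal whether $\omega\in\Omega'$, and reveal the state exactly when $\omega\notin\Omega'$) is the paper's own, and your induction on $|\Omega'|$ is the paper's reduction to \emph{minimal} sets (Claim A.\ref{claim_minimal}) in different clothing. The genuine gap is exactly the step you defer as ``delicate'': in the crux case where all utilities vanish on the relative boundary of $\Delta(\Omega')$, your proposed mechanism \textemdash{} maximize a single generic linear functional $\phi$ over $cl(A)$, land the ``$\phi$-extremal'' pooling opponent belief exactly on $\gamma^*$, and have all other pooling beliefs ``overshoot'' into the zero set \textemdash{} does not work, and this step is the entire content of the paper's proof for $|\Omega'|\geq 3$.

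The reason it fails: Bayesian updating $\beta_l(\gamma,y)\propto \gamma_l y_l/\pi_l$ is a multiplicative reweighting, and for $|\Omega'|\geq 3$ it is not monotone with respect to any fixed linear functional. Ordering the opponent's beliefs by $\phi$ tells you nothing about how their induced posteriors are ordered by $\phi$, so ``the $\phi$-extremal pooling $\gamma'$'' is not a stable notion. Concretely, take a uniform prior on $\Omega'=\{1,2,3\}$, $\phi(\gamma)=\gamma_1+2\gamma_2$, pooling opponent beliefs $y=(1/3,1/3,1/3)$ and $y'=(0.1,0.8,0.1)$, and $\gamma^*=(0.3,0.4,0.3)$: $y'$ is $\phi$-extremal, and the unique $\gamma$ solving $\beta(\gamma,y')=\gamma^*$ is proportional to $(3,0.5,3)$, which gives $\beta(\gamma,y)=\gamma$ with $\phi(\beta(\gamma,y))\approx 0.62 < 1.1=\phi(\gamma^*)$. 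The non-extremal posterior \emph{undershoots} back toward (potentially into) $cl(A)$, where $u_{i^*}$ may be negative, and $W_{i^*}(\gamma)>0$ is lost. The paper's substitute for monotonicity is a lexicographic family of $K-1$ ratios $r_k(\gamma)=\gamma_k/(1-\sum_{l\leq k}\gamma_l)$, chosen because they interact well with multiplicative updating (Lemma \ref{lemma_operation}: mixing the interim belief toward earlier states leaves the later ratios of every posterior unchanged); one then builds $x^*$ by an \emph{iterated} intermediate value theorem, matching $\min_{y\in Z} r_{K-1}(\beta(x,y))$ to the max of $r_{K-1}$ over $cl(A)$, then refining within the argmin set for $r_{K-2}$, and so on. The payoff of that structure is precisely the dichotomy your sketch asserts but cannot deliver: every pooling posterior either matches all $K-1$ ratios of $\bar{\beta}$ (hence equals $\bar{\beta}$, by uniqueness of the ratio representation) or strictly exceeds some ratio's maximum over $cl(A)$ and is therefore outside $cl(A)$. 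Separately, your one-sided-limit remark does not cover the higher-dimensional analogue of Proposition 1's Case 2: when $cl(A)$ touches the relative boundary of $\Delta(\Omega')$ although $A$ does not, the paper needs a distinct argument (restrict to the smallest sub-face whose simplex meets $cl(A)$, run the construction there, and perturb from $\bar{\beta}\in cl(A)\setminus A$ to a nearby point of $A$), which your outline has no counterpart for.
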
 


Suppose the receiver learns that $\omega \in \Omega'$. Conditional on this event, if $u_i$ is linear on $\Delta(\Omega')$ for all $i$ then all senders are indifferent across all additional information that can be revealed. Meanwhile, if for some $i$ $u_i$ is nonlinear on $\Delta(\Omega')$, then there is some additional experiment that $i$ either strictly prefers or disprefers to not providing any additional information. Proposition 2 says that conditional on the receiver learning that $\omega \in \Omega'$, some sender having strict preferences over revealing additional information characterizes $\Omega'$ being not pooled in every equilibrium. 

Taking this logic one step further, the following corollary gives another implication of Proposition 2.

\begin{corollary} \label{cor_conflict} 
Conditional on any posterior belief induced in equilibrium no sender can find an experiment that strictly improves or reduces her payoff.
\end{corollary}


Corollary \ref{cor_conflict} tells us that in every equilibrium senders reveal enough information to remove all further conflict between them. An implication of this result is that no information revelation, the strategy profile $(\Gamma^U,...,\Gamma^U)$, is an equilibrium if and only if all senders are indifferent across all strategy profiles (i.e. senders have trivial preferences).

We now turn to the proof of Proposition 2. Proving the `only if' direction is straightforward: if all $u_i$ are linear on $\Delta(\Omega')$ then all senders fully revealing the state whenever $\omega \in \Omega \setminus \Omega'$ and revealing no further information whenever $\omega \in \Omega'$ is an equilibrium that pools $\Omega'$.\footnote{Formally, each sender plays $\Gamma$ s.t. $\Pr(\Gamma=\delta_n)=\pi_n$ for all $n \not \in \Omega'$ and $Pr(\Gamma=y \text{ s.t. } y_l=\frac{\pi_l}{\sum_{k \in \Omega'} y_k} \text{ } \forall l \in \Omega') = \sum_{k \in \Omega'} \pi_k$.} Conditional on $\omega \not \in \Omega'$, no sender $i$ has the incentive to deviate as the receiver will learn the state from $\Gamma_{-i}$. Conditional on $\omega \in \Omega'$, $\Gamma_{-i}$ will reveal this fact to the receiver, ensuring that $\beta \in \Delta(\Omega')$; as $u_i$ is linear on $\Delta(\Omega')$ there is no additional information $i$ can reveal profitably.

Now for the `if' direction; we provide the main intuition here but leave to proof to Appendix A. For any $\Omega' \subseteq \Omega$ and sender $i$, fixing an opponent strategy $\Gamma_{-i}$ consider $W_i(x)$ on $\Delta(\Omega')$. As noted above, $\Gamma_i =x \in \Delta(\Omega')$ $\implies$ $\beta(x,\Gamma_{-i}) \in \Delta(\Omega')$ w.p. $1$. Generating an interim belief in $\Delta(\Omega')$ tells the receiver that $\omega \in \Omega'$, ensuring the posterior is also on this set. Conditional on $x \in \Delta(\Omega')$, the only information $\Gamma_{-i}$ can convey is relative probabilities of states in $\Omega'$. When evaluating $W_i(x)$ on $\Delta(\Omega')$, sender $i$ can treat $\Gamma_{-i}$ as an experiment just about states in $\Omega'$. 

If some $u_i$ is nonlinear on $\Delta(\Omega')$, we can apply a similar argument to Proposition 1. Firstly, at least one sender has an advantage somewhere on $\Delta(\Omega')$. We can find some sender $j$ such that whenever $\Pr(\Gamma_{-j}=y$ s.t. $y_l >0$ $\forall l \in \Omega')>0$ (i.e. $\Gamma_{-j}$ pools $\Omega'$), $W_j(\underbar{x})>0$ for some $\underbar{x} \in \Delta(\Omega')$. Like in Proposition 1, $j$ will be a sender with an advantage closest to the extremes (boundaries) of $\Delta(\Omega')$ and $\underbar{x}$ will be extreme enough to ensure that whenever $\Gamma_{-j}$ assigns positive probability to all states in $\Omega'$, $\beta(\underbar{x},\Gamma_{-j})$ falls in a region of $j$'s advantage with positive probability. Otherwise, $\beta(\underbar{x},\Gamma_{-j})$ will fall where $j$ gets $0$ utility. Hence $W_j(\underbar{x})>0$, violating Lemma 1 and implying that $\Gamma_{-j}$ (and hence the whole strategy profile) must not pool $\Omega'$ in equilibrium.

The bulk of the proof involves finding this $\underbar{x}$ as a function of a $\Gamma_{-j}$ that pools $\Omega'$. When $|\Omega'|=2$, conditional on $\Gamma_i=x \in \Delta(\Omega')$ we are in a binary-state world; hence we can find $\underbar{x}$ on edge $\Delta(\Omega')$ just as in Proposition 1. As our main result, Theorem 1, will only rely on Proposition 2 for $|\Omega'|=2$, we relegate a full proof of the case $|\Omega'|>2$ to Appendix A (noting that the broad intuition is the same).

Whenever no pair of states can be pooled in any equilibrium, the state is fully revealed all equilibria. Applying Proposition 2 to every pair of states:

\begin{theorem}
The state is fully revealed in every equilibrium if and only if for every pair of states $\{l,k\}$ there is a sender $i$ for whom $u_i$ is nonlinear on $\Delta(\{l, k\})$.
\end{theorem}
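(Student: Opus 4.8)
The plan is to observe that this Theorem is essentially a corollary of Proposition 2, once we link \emph{full revelation} of the state to the pairwise \emph{not pooled} condition. Concretely, I would first establish a profile-level equivalence: for any strategy profile $(\Gamma_1,\dots,\Gamma_M)$, the state is fully revealed if and only if no pair of states $\{l,k\}$ is pooled. I would then apply Proposition 2 with $\Omega' = \{l,k\}$ to each of the finitely many pairs and chain the resulting equivalences.

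For the profile-level equivalence, the forward direction is immediate: if the state is fully revealed then the posterior $\beta$ equals some $\delta_m$ with probability one, so for any distinct $l,k$ the event that $\beta_l>0$ and $\beta_k>0$ has probability zero, i.e. $\{l,k\}$ is not pooled. For the converse, suppose every pair is not pooled. The event that the posterior fails to be degenerate is exactly the event that its support has at least two elements, which is the finite union, over pairs $\{l,k\}$, of the events that $\beta_l>0$ and $\beta_k>0$. Each such event has probability zero by assumption, and because $|\Omega|=N$ is finite there are only $\binom{N}{2}$ such events, so their union also has probability zero. Hence $\beta$ is degenerate with probability one, which is full revelation.

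With this equivalence in hand, the statement ``the state is fully revealed in every equilibrium'' becomes ``in every equilibrium, no pair of states is pooled.'' Since universal quantifiers commute, this is the same as ``for every pair $\{l,k\}$, no equilibrium pools $\{l,k\}$.'' Proposition 2, applied with $\Omega'=\{l,k\}$, says precisely that no equilibrium pools $\{l,k\}$ if and only if some $u_i$ is nonlinear on $\Delta(\{l,k\})$. Substituting this equivalence pairwise yields the claimed characterization: there is full revelation in every equilibrium if and only if for every pair of states $l,k$ some sender $i$ has $u_i$ nonlinear on $\Delta(\{l,k\})$.

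Because Proposition 2 does the heavy lifting, I do not expect a serious obstacle here. The only point requiring genuine care is the profile-level equivalence, and specifically the use of the finiteness of $\Omega$ to pass from ``each pair has pooling probability zero'' to ``the total probability of a non-degenerate posterior is zero'' via a finite union bound. With $N<\infty$ this step is routine, but it is the one place where the argument actually uses more than a direct restatement of Proposition 2.
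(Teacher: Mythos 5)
Your proposal is correct and follows essentially the same route as the paper: Theorem 1 is obtained as a direct corollary of Proposition 2 applied to each pair of states, with the bridge being that the posterior is degenerate almost surely if and only if no pair is pooled. The only difference is that you make explicit the finite union bound over the $\binom{N}{2}$ pairs, a step the paper's proof treats as immediate.
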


This is an immediate corollary of Proposition 2. Theorem 1 shows that preferences being \emph{globally nonlinear}, or nonlinear on every edge of the simplex, characterizes all equilibria being fully revealing. The state may not be fully revealed only if all senders have linear preferences on an edge of the simplex. Linearity along any edge for any sender, let alone all senders, is knife-edge and so for typical sender preferences the state is fully revealed in all equilibria. 

\textbf{Remark.} While we have assumed that $\sum_{i=1}^M u_i =0$, Theorem 1's condition for full revelation in all equilibria only requires a subset of senders to have zero-sum utilities. Specifically, suppose the game with senders $1,...,M$ is not zero-sum\footnote{i.e. there exists $\beta,\beta' \in \Delta(\Omega')$ with $\sum_{i=1}^M u_i(\beta) \neq \sum_{i=1}^M u_i(\beta')$} but that there exists $I \subseteq \{1,...,M\}$ with $\sum_{i \in I} u_i(\beta) =0$ for all $\beta \in \Delta(\Omega)$. Then the state is fully revealed in every equilibrium if for every $l,k \in \Omega$ there is a sender $i \in I$ with $u_i$ nonlinear on $\Delta(\{l,k\})$.\footnote{We can extend Proposition 2 similarly. Some sender in $I$ having nonlinear preferences on $\Delta(\Omega')$ is sufficient for $\Omega'$ to be pooled in no equilibrium.} This is easy to see: our analysis applies directly to senders $I$. Note that we no longer have the necessary condition for full revelation in all equilibria: it is possible that all senders in $I$ have linear utilities on some $\Delta(\{l,k\})$ but  all equilibria of the game are fully revealing.

\subsection{Single receiver with finite actions}

Thus far, the receiver's only role in the model has been to update her beliefs and hence we have defined sender preferences $\{u_i\}_i$ over the receiver's posterior belief. In this section we explicitly model the receiver. In particular, we consider a game in which after updating to her posterior a single receiver picks an action from a finite set $A$. When we can microfound the game in this way, Theorem 1 takes a clean form.

Suppose after observing all experiment realizations the receiver picks an action $a \in A$ and receives a payoff $u_r(a,\omega)$ while each sender $i$ gets payoff $u_i(a,\omega)$. We make the generic assumption that no agent is indifferent between any actions at any state. The solution concept is Perfect Bayesian Equilibrium (PBE) and we assume the receiver breaks ties by choosing `higher' actions when indifferent (as actions can be reordered arbitrarily, we make this assumption just to fix some tie-breaking rule).\footnote{It turns out that fixing a tie-breaking rule is not necessary for the results below to go through; only the assumption on lack of indifference between actions at every state is crucial. However, the tie-breaking rule ensures that in all equilibria of the this game, sender preferences over the receiver's posterior will be piecewise analytic and hence we will be able to immediately apply our prior results.}


In the space of posteriors, senders have piecewise linear utility functions. For any $\Omega' \subseteq \Omega$ these functions are linear on $\Delta(\Omega')$ if and only if the receiver has the same best action at every state in $\Omega'$. By Proposition 2, a set of states $\Omega'$ cannot be pooled in any equilibrium if and only if the receiver has different best actions at least two states in $\Omega'$. This implies a version of Theorem 1: 

\begin{corollary} \label{cor_finite_action}
Suppose a single receiver choosing actions from a finite set breaks indifferences in favor of higher actions. Generically, the state is fully revealed in every equilibrium if and only if the receiver has a different best action at every state.
\end{corollary}

Further, when any subset of states $\Omega' \subseteq \Omega$ is pooled in equilibrium, Proposition 2 implies the receiver has the same best action at all states in $\Omega'$. Hence more information would not change the receiver's action which means \emph{the receiver learns enough to take her first best action}. We leave the details behind these two results to Supplementary Appendix B. 

In Section 5.1 we work through an example of a game with a single receiver with a finite set of actions. In Section 5.2, we present an application with multiple receivers and discuss some of the differences between single and multiple receiver games.

\subsection{Competition and information}

Consider our game played by only a subset of senders $\{1,...,M'\} \subset \{1,...,M\}$ ($M'<M$); note the game with $M'$ senders may no longer be zero-sum. In order to assess the impact of zero-sum competition on information provision, we compare the information revealed in equilibria with senders $\{1,...,M'\}$ and with senders $\{1,...,M\}$ (if $M'=1$, then the `equilibria' of the $M'$ sender game are just sender $1$'s Bayesian Persuasion solutions). We say a strategy profile is equivalent to another if they induce the same distribution over posteriors. A strategy profile is (strictly) more informative than another if it is (strictly) Blackwell more informative \citep{blackwell1953equivalent}. A strategy profile is no more informative than another if it is not strictly more informative.

Clearly, zero-sum competition (weakly) increases information provision whenever the $M$ senders have globally nonlinear preferences. It is easy to find cases in which zero-sum competition strictly increases information provision; for instance, in our leading example, all equilibria with any one sender (i.e. all Bayesian Persuasion solutions) are strictly less informative than the unique equilibrium with two senders (full revelation). 

When there is some linearity in the $M$ senders' preferences, and hence there are non-fully revealing $M$ sender equilibria, we can show that any set of states that can be pooled in an $M$ sender equilibrium can also be pooled with $M'<M$ senders. A consequence of this is the following result comparing the informativeness of equilibria in both games.

\begin{prop} \label{prop_comp}
Let $(\Gamma_1,...,\Gamma_M)$ be an equilibrium of the $M$ sender game. Then: \begin{enumerate}
\item There is an $M'$ sender equilibrium that is no more informative than $(\Gamma_1,...,\Gamma_M)$.
\item If an $M'$ sender equilibrium $(\Gamma'_1,...,\Gamma'_{M'})$ is more informative than $(\Gamma_1,...,\Gamma_M)$, then there is an $M$ sender equilibrium equivalent to $(\Gamma'_1,...,\Gamma'_{M'})$.
\end{enumerate}
\end{prop}


When combined, point (1) and point (2) provide a sense in which zero-sum competition does not decrease information provision. Information which can be revealed in an $M'$ sender equilibrium must have at least one of the two following properties. Either it can also be revealed in an $M$ sender equilibrium, or it is no more informative than \emph{all} $M$ sender equilibria. Moving from $M'$ to $M$ senders can only remove equilibrium outcomes that are no more informative than all outcomes of the new game. Equilibria that are removed will also be strictly worse than some $M$ sender equilibria (e.g. the fully revealing one).

The weakness of Proposition \ref{prop_comp} is that, when there are non-fully revealing $M$ sender equilibria, these need not be Blackwell comparable to some $M'$ sender equilibria; this means zero-sum competition need not \emph{increase} information provision according to many orders used in the literature (e.g. those defined in \cite{gentzkow2016competition}, \cite{che2019weak}, \cite{milgrom1994monotone}). The exception is in the binary-state case, where it is easy to show $M$ sender equilibria are more informative than $M'$ sender equilibria according to the `strong set order' (\citet{veinott1989lattice}, \citet{milgrom1994monotone}). 

In the case of a single receiver with finite actions, as all $M$ sender equilibria deliver the receiver her first-best payoff, moving from the $M'$ sender game to the $M$ sender game must make the receiver weakly better off.\footnote{As the receiver learns everthing valuable to her in all $M$ sender equilibria, comparing the informativeness of $M$ and $M'$ sender equilibria is less important.} This has normative implications for how a decision maker should choose sources of information or experts. Suppose that before making a decision, the receiver can chooses a set of experts who then choose what information (experiments) to reveal. The experts may have their own vested interests in the receiver's action. If the receiver has a selected some set of $M'$ experts and can choose an additional expert, she should pick one who maximally disagrees with the others.

One contribution of this paper is to identify a natural and applicable environment \textemdash zero-sum preferences and conditionally independent experiments \textemdash for which competition cannot decrease information provision. In this setting, typically, competition increases information starkly. In Section 6 we discuss to what extent our assumptions on preferences and sender technologies/strategies can be relaxed while still maintaining this result. 


\subsection{Discussion}

Here we discuss to what extent our results and analysis are robust to changes in our modelling assumptions.

\textbf{Robustness to zero-sumness.} Using standard upper hemicontinuity arguments, we can show Theorem 1 is robust. 

\begin{prop} \label{prop_robustness}
Suppose senders' utility functions converge to zero-sum and utilities are globally nonlinear in the limit. Whenever convergent, the information revealed along any sequence of equilibria converges to full revelation.
\end{prop}

Convergence for utilities is in the $\sup$ norm. For information, the notion is convergence in distribution of the receiver's posterior. We leave the details to Supplementary Appendix B.  

Proposition \ref{prop_robustness} tells us that our results are not knife-edge. Typically, when preferences are close to zero-sum, all equilibria are close to fully revealing.\footnote{In Supplementary Appendix B we show a similar robustness result for Proposition 2.} This robustness is one reason we focus on conditions for full revelation in \emph{all} equilibria. Note that if the limiting preferences are linear on every edge of the simplex, it is still possible for the information revealed in all equilibria to converge to full revelation \textemdash Proposition \ref{prop_robustness} is just a sufficient condition.

While our results are informative about preferences close to zero-sum, our analysis does not apply far from it. In Section 6 we discuss nonzero-sum preferences more generally.


\textbf{Piecewise analytic utility.} Our assumption that utility functions are piecewise analytic is not necessary for our results. Proposition 1 relied on being able to find an interval of advantage for one sender at the extremes of the unit interval. For this result, we just need to rule out pathological utility functions that, under our normalization, take values oscillating infinitely about $0$ close to the ends of the unit interval. A sufficient condition for this would be the piecewise analycity of utilities in some neighborhoods of each degenerate belief. Theorem 1, which only relies on Proposition 2 applied to pairs of states, also goes through under this weaker condition. For Proposition 2, we need utility functions to not oscillate infinitely about $0$ on any path of beliefs in the simplex.

\textbf{Experiments without finite signals.} We have focused on finite signal equilibria because the results are cleaner. The same intuition applies when senders can choose any conditionally independent experiments (with countably or uncountably many signals). However in this case we only obtain a sufficient condition for full revelation in every equilibrium  \textemdash satisfied in all but a knife-edge case \textemdash but not a necessary one.

For any states $l,k$ let $v^{l,k} \in \mathbb{R}^{N}$ be the vector from $\delta_l$ to $\delta_k$.\footnote{$v^{l,k}_k=1$, $v^{l,k}_l = - 1$, $v^{l,k}_n = 0$ for all $n \neq l,k$.} For any sender $i$ let $\nabla_{v^{l,k}} u_i(\cdot)$ be the directional derivative of $u_i$ moving along $v^{l,k}$. Note that moving from any $\delta_l$ to $\delta_k$ along $v^{l,k}$, we cross a finite number of analytic `pieces' of each $u_i$ (by piecewise analycity). Hence while $\nabla_{v^{l,k}} u_i(\beta)$ need not exist for all $\beta$, the following quantity will always exist for all $i$ and $l,k$:
\begin{align*}
& d_i^{l,k} = \lim\limits_{\substack{\beta \in \Delta(\{l,k\})\setminus \{\delta_k\} \\ \beta \rightarrow \delta_k} } \nabla_{v^{l,k}} u_i(\beta)
\end{align*}

Note that $d_i^{l,k}$ exists even when $\nabla^{l,k} u_i(\delta_k)$ does not exist.

We say some $u_i$ satisfies \textbf{Condition 1} on edge $\Delta(\{l,k\})$ if either $d_i^{l,k} \neq u_i(\delta_k)-u_i(\delta_l)$, $d_i^{k,l} \neq u_i(\delta_l)-u_i(\delta_k)$, or both.

Not satisfying Condition 1 is knife-edge. It is not satisfied by $u_i$ on $\Delta(\{l,k\})$, for instance, if $u_i$ is linear on this edge or if $u_i$ is linear in neighborhoods of $\delta_l$ and $\delta_k$. Condition 1 is satisfied if $u_i$ looks like $u_1$ or $u_2$ in our leading example (Figure 1) on the edge. Most importantly, Condition 1 is generically satisfied for all $u_i$ on edge $\Delta(\{l,k\})$ in a model with a single receiver with a finite set of actions if and only if the receiver prefers different actions at $l$ and $k$. If for each edge of the simplex some sender's preferences satisfy Condition 1, we have full revelation in all equilibria.

\begin{prop} \label{prop_inf}
Suppose senders can choose any (conditionally independent) experiments. The state is fully revealed in every equilibrium if for every pair of states $\{l,k\}$ there exists a sender $i$ for whom $u_i$ satisfies Condition 1 on $\Delta(\{l,k\})$.
\end{prop}

Proposition \ref{prop_inf} says that we should still expect full revelation in all equilibria for typical sender preferences when we drop the finite signal restriction on experiments. We leave the proof to Supplementary Appendix B, but for intuition consider the two-sender binary-state case. Note that when proving Proposition 1 using our leading example, we made use of the fact that if $\Gamma_2$ is not fully revealing, it has in its support some minimum interior interim belief $\underbar{y}$. When $\Gamma_2$ does not have finite support, $\underbar{y}$ need not exist and whether sender 1 can find a point $\underbar{x}$ for which $W_1(\underbar{x})>0$ will depend on the limiting behavior of $u_1$ (hence Condition 1).


\textbf{Sender private information.} In many settings of persuasion agents hold private information. We can easily incorporate receiver private information into our model: at each public posterior $\beta$ realized from sender experiments, we can compute sender expected payoffs by taking an expectation over what private information the receiver may obtain.\footnote{See discussion in \citet{kamenica2011bayesian}.} Here we deal with the more interesting case: sender private information. This is an important consideration in many of our applications. At the start of a court case, a defense attorney may have information about the guilt/innocence of her client that the prosecution and judge/jury are not privy to. Alternatively, companies may possess private information about the quality of products they advertise.

Suppose when the game begins each sender receives a private signal. We assume these signals are bounded\footnote{They induce beliefs bounded away from the simplex's boundaries.} and realized from finite signal conditionally independent experiments.\footnote{The latter two assumptions are for convenience.} The solution concept is PBE. In equilibrium, senders could potentially signal their private information through their choice of experiment. However, for typical sender preferences the takeaway from Theorem 1 remains the same.

\begin{prop} \label{prop_private}
Suppose senders receive private signals before the game. The state is fully revealed in every equilibrium if for every pair of states $\{l,k\}$ there exists a sender $i$ for whom $u_i$ satisfies Condition 1 on $\Delta(\{l,k\})$.
\end{prop}

The logic behind the result remains close to that in the baseline model. Any signalling of private information that does occur via choice of experiments will, as private information is bounded, induce bounded beliefs for the receiver. Hence a sender can `overpower' information provided through signalling by generating extreme enough interim beliefs, just as she can overpower interior interim beliefs induced by her opponents' experiments. Condition 1 is needed because of inability to discipline off-path beliefs in equilibrium; we leave details and the proof to Supplementary Appendix B.

\textbf{Sequential moving senders.} Consider a sequential version of our model, in which senders $1,...,M$ move in order. Senders observe all previous experiment choices (but not realizations).\footnote{If senders could observe the realizations of upstream senders' experiments, then a downstream senders could correlate her experiment with upstream senders' by conditioning her experiment choice on upstream signal realizations. Our results in this section would also hold for such a model.} Such a setup may be applicable in modelling firms competing for consumers by designing advertisements. A firm may observe the advertising campaign (or experiment) its competitor chooses, but because it cannot observe the sales/marketing data associated with the advertising campaign,\footnote{e.g. how many consumers clicked on an online ad or how many bought products after viewing an ad.} cannot observe the realization of the experiment.

We are interested in subgame perfect Nash Equilibria (SPNE) of this game. Note that for each simultaneous game there are multiple corresponding sequential games, one for each ordering of senders. The following result helps clarify the relationship between the our baseline model and a sequential version.

\begin{prop} \label{prop_seq}
If for $u_1,...,u_M$ there is full revelation in every SPNE of the sequential game with the senders moving in some order, then there is full revelation in every equilibrium of the simultaneous game.
\end{prop}

We prove the result Supplementary Appendix B and also show that the converse does not hold: it is possible for there be to full revelation in every equilibrium of the simultaneous game but non-fully revealing SPNE in the sequential game for every ordering of senders. The result shows that we are guaranteed full revelation in equilibrium for a (weakly) larger set of sender preferences with simultaneity than with sequentiality. This is in line with Norman and Li (2018a) and Wu (2017), which show that simultaneous persuasion cannot generate less information than sequential.\footnote{However both papers allow senders to correlate experiments arbitrarily. Wu (2017) also considers zero-sum games, but only shows existence of a fully revealing equilibrium.}

In their supplementary appendix, \citet{dworczak2020robust} consider a model similar to the sequential model described above with two senders (see Section 1 for a description of their model). Their paper however asks very different questions than ours and obtains weaker full revelation results. Differences in timing (sequentiality vs simultaneity) are one reason for our stronger full revelation results. Another is that while our results concern the total information revealed by multiple senders, their results emphasize the information revealed by a single persuader (less so her opponent \textemdash nature). Formally, in the two sender version of our model, our condition for any states $\Omega'$ to be not pooled in every equilibrium is equivalent to the persuader in \citep{dworczak2020robust}'s model having a unique optimal strategy of not pooling $\Omega'$ \emph{or} nature minimizing her payoff by not pooling $\Omega'$. 


\section{Applications}

In this section we consider two applications, one with a single receiver and one with multiple receivers. 

\subsection{Lobbying}

We first work through an example of persuading a single receiver. This example is intended to unpack the sender preferences over the receiver's posterior, which are the primitives of our baseline model. 

Consider the problem of two competing lobbyists attempting to persuade a single politician. Suppose $\omega$ represents type of threat climate change poses; the threat can either be low ($\omega=L$), medium ($\omega=M$), or high ($\omega=H$). There are three possible actions the politician can take to combat climate change. If she chooses $a=N$ no action is taken, choosing $a=W$ would pass a weak law (maybe with non-binding measures), and choosing $a=S$ would enact a strong law (maybe mandating use of alternative energy).\footnote{For instance, if the politician is the president of the United States $W$ and $S$ could be potential executive orders. Alternatively, the politician could be a legislator casting a pivotal vote on two bills.}


At the start of the game, an oil lobby firm and an environmental lobbying group commission experts of their choice to write reports on climate change (i.e. the two senders choose experiments on $\omega$).\footnote{As noted by \citet{kamenica2011bayesian}, lobbying groups spend large amounts of money on such reports. For example, the tobacco lobby heavily funds reports on the health effects of smoking cigarettes \citep{barnoya2006tobacco}.} The politician then reads the reports, updates her belief on $\omega$, and chooses her action $a$.

All three agrents receive payoffs that depend on the politician's action, $a$. The oil lobbyist prefers the weakest possible action: $u_o(N,\omega)=1$, $u_o(W,\omega)=0.5$, $u_o(S,\omega)=0$ for all $\omega$.\footnote{Both laws could reduce demand for oil: $S$ directly and $W$ by affecting consumers' choices.} The environmental lobbyist, meanwhile gets payoffs: $u_e(N,\omega)=0$, $u_e(W,\omega)=0.5$, $u_e(S,\omega)=1$ for all $\omega$. Note that payoffs for lobbyists are constant-sum at every action and state pair. The politician's preferences over actions depend on the state. Suppose the politician strictly prefers to take no action when she is certain $\omega = L$ and strictly prefers $a=S$ at state $\omega=H$. When $\omega=M$, she has preferences: $u_r(N,M) = 0$, $u_r(W,M) = 0.5$ and $u_r(S,M) = \rho$, where $\rho \in [0,1]$ may measure her willingness to fight the oil lobby. Assume that when indifferent between actions, the politician breaks indifferences favoring $N$ over $W$ over $S$.

The solution concept is PBE. Let $R_N, R_W, R_S \subseteq \Delta(\Omega)$ be the sets of posteriors at which the politician optimally chooses actions $N$, $W$, and $S$ respectively. These sets are disjoint, partition $\Delta(\Omega)$, and are convex (given our tie-breaking rule). Figure \ref{figure_lobbyist} shows an example of these sets. In any PBE, the lobbyists' expected payoff given the politician's posterior, $u_o(\beta)$ and $u_{e}(\beta)$, can be written: $u_i(\beta) = \sum_{\omega \in \Omega} u_i(a^*(\beta),\omega) \beta_{\omega}$, where $a^*(\beta)$ is the politician's action choice at $\beta$. Each $u_i(\beta)$ is linear on each of $R_N,R_W,R_S$ and hence piecewise linear on $\Delta(\Omega)$. As $u_c$ and $u_e$ are also constant-sum, they meet the conditions for our analysis to apply (for exposition, we do not normalize $u_i(\delta_l)=0$ for all $l$).                                                                                                                                                                                                                                                                                                                                                                                                                                                                                                                                                                                                                                                                                                                                                                                                                                                                                                                                                                                                                                                                                                                                                                                                                                                                                                                                                                                                                                                                                                                                                                                                                                                                                                                                                                                                                                                                                                                                                                                                                                                                                                                                                                                                                                                                                                                                                                                                                                                                                                                                                                                                                                                                                                                                                                                                                                                                                                                                                                                                                                                 

First suppose $\rho < 0.5$. Then: $\delta_L \in R_N$, $\delta_M \in R_W$, and $\delta_H \in R_S$ \textemdash at each degenerate belief the politician takes a different action. Then $u_o(\beta)$ must have a discontinuity along $\Delta(\{L,H\})$ as $u_o(\beta)$ jumps from $1$ to $0$ when crossing from $R_N$ to $R_S$. Similarly, there are discontinuities in $u_o(\beta)$ along $\Delta(\{L,M\})$ and along $\Delta(\{M,H\})$. As $u_o(\beta),u_e(\beta)$ are nonlinear on every edge, the state is fully revealed in all equilibria (Theorem 1).


Now suppose $\rho > 0.5$. Then $\delta_L \in R_N$ and $\delta_M, \delta_H \in R_S$. Along $\Delta(\{L,H\})$, $u_o(\beta)$ and $u_e(\beta)$ are the same as when $\rho < 0.5$; hence Proposition 2 implies $L$ and $H$ cannot be pooled in equilibrium. Along $\Delta(\{L,M\})$, both $u_i(\beta)$ are discontinuous and so $\{L,M\}$ also cannot be pooled.  However, as $\delta_M,\delta_H \in R_S$ and $R_S$ is convex, $\Delta(\{M,H\}) \subseteq R_S$; hence at every belief along this edge the politician approves the strong bill and the  environmental lobbyist gets a payoff of $1$ while the oil lobbyist gets payoff $0$. Sender preferences are linear along $\Delta(\{L,H\})$ and these states \emph{are} pooled in some equilibria.

\begin{center} 
\includegraphics[scale=0.7]{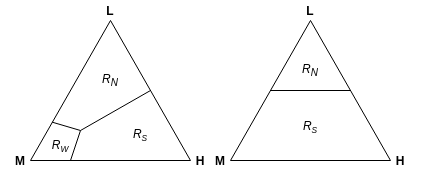}
\captionof{figure}{An example of how $R_N,R_W,R_S$ may look when $\rho=0$ (left) and $\rho=1$ (right).}
\label{figure_lobbyist}
\end{center}

When $\rho < 0.5$, the politician prefers a different action in each state. Hence any information provided about the relative probabilities of the three states could be valuable to her in decision making. As the politician learns the state in every equilibrium, she is always able to take her first-best action. Meanwhile, when $\rho>0.5$, the politician prefers the strong action in both states $M$ and $H$ and so conditional on learning $\omega \in \{M,H\}$, no further information could benefit her decision making. Although $M$ and $H$ can be pooled in equilibrium, neither can be pooled with $L$ and so the politician still learns enough to take her first best action. Because lobbyists disagree \textemdash in a zero-sum sense \textemdash over the politician's action, they never withhold information that could influence this action. However, they may withhold information which will not affect the action.

\subsection{Persuading voters}

As an example of a game with multiple receivers, we adapt \cite{alonso2016persuading}'s model of a single politician persuading voters to a competitive setting with two politicians.

There are $V$ voters $\{v_1,...,v_V\}$ ($V$ odd). Each voter $v_i$ will cast a vote $a_i$ for either politician 1 or 2. The politician who receives the majority of the votes, $O \in \{1,2\}$, wins the election and receives a payoff of $1$ while the losing politician receives payoff $0$. We call $O$ the outcome of the election. Voters have preferences over politicians that depend on an underlying state $\omega \in \Omega$: $\{u_{v_i}(O,\omega)\}_{i=1}^V$. 
%

We interpret this setup as follows. Politician 1 and 2 have platforms specifying policies they would enact if elected. Voters have preferences over the policy decisions that would be taken by the politicians, and hence the politicians themselves, that depend on $\omega$. At the start of the game, the politicians simulateneously choose experiments on $\omega$; we think of this as politicians commissioning experts to write reports. The voters commonly observe these experiments and their realizations and then simultenously cast their votes.

In any PBE, voters' actions constitute a Nash Equilibrium at every posterior belief induced. We select equilibria in which voters play undominated stategies and vote for politician 1 when indifferent. This implies that voter $v_i$ votes for politician 1 at posterior $\beta$ if and only if $\sum_{k \in \Omega }u_{v_i}(1,k) \geq \sum_{k \in \Omega }u_{v_i}(2,k)$. Note that the set of posterior beliefs at which voter $v_i$ votes for politician 1, $O_1(v_i) \subseteq \Delta(\Omega)$, is convex. 

To apply our results, we need to show that in any equilibrium, poiliticians' preferences over posterior beliefs fit our assumptions. In an equilibrium abiding by our selection, let $O_1 \subseteq \Delta(\Omega)$ be the set of beliefs at which the voters elect politician 1 and $O_2$ be the set of beliefs they elect politician 2. At beliefs in $O_1$, politician 1 (2) gets payoffs $1$ ($0$), while in $O_2$ politician 1 (2) gets payoff $0$ ($1$). It is easy to show that $O_1$ and $O_2$ can each be written as the union of a finite number of disjoint convex sets;\footnote{The regions in which an individual voter votes for politicians 1,2 are defined by a single hyperplane. Drawing all $V$ such hyperplanes partitions the simplex into finitely many convex cells. All voters' actions are unchanged when moving within a cell. Hence each cell is either in $O_1$ or $O_2$.} hence politicians' payoffs are real analytic and zero-sum. 

For any $\Omega' \subseteq \Omega$, senders have linear preferences on $\Delta(\Omega')$ if and only if $\Delta(\Omega') \subseteq O_1$ or $\Delta(\Omega') \subseteq O_2$. If $\Delta(\Omega')$ intersects both $O_1$ and $O_2$, then sender payoffs jump discontinuously at some point in $\Delta(\Omega')$. Applying Proposition 2: $\Omega'$ is pooled in some equilibrium if and only if $\Delta(\Omega') \subseteq O_1$ or $\Delta(\Omega') \subseteq O_2$. This implies that in equilibrium, at every posterior no additional information would affect the election outcome. By Theorem 1, the state is fully revealed in every equilibrium if and only if for every pair of states $l,k$, $\Delta(\{l,k\})$ intersects $O_1$ and $O_2$. 

There are a couple interesting features of this setting which we will discuss via the following example. Suppose $\Omega=\{0,1\}$ with a flat prior, $V=3$, and $u_{v_1},u_{v_2},u_{v_3}$ are such that the following is true. If voter 1 knew the state, she would prefer politician 1 when $\omega=0$ and politician 2 when $\omega=1$; however, her preference for politician 2 when $\omega=1$ is stronger than her preference for politician 1 when $\omega=0$.\footnote{i.e. $u_{v_1}(2,1) - u_{v_1}(1,1)>u_{v_1}(1,0) - u_{v_1}(2,0)$.} Meanwhile voter 2 prefers politician 1 at $\omega=1$ and politician 2 at $\omega=0$ but her preference at $\omega=0$ is stronger than that at $\omega=1$. Finally voter 3 always prefers politician 1. Figure \ref{fig_persuading_voters} shows what $O_1(v_i),O_2(v_i)$ look like for all voters.

\begin{center}
\includegraphics[scale=0.8]{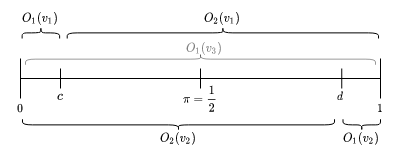}
\captionof{figure}{}
\label{fig_persuading_voters}
\end{center}

Note that while $O_1(v_i)$ and $O_2(v_i)$ are convex for each $v_i$, $O_1$ and $O_2$ need not be convex. In the example, $O_2 = (c,d)$ while $O_1 = [0,c) \cup (d,1]$. Hence despite the fact that the voters would elect politician 1 at both states $0$ and $1$, politicians must still fully reveal the state in all equilibria as there are interior beliefs at which voters would make a different decision. This is in contrast to a game with a single receiver where the receiver takes the same action on convex sets.

\cite{alonso2016persuading} find that relative to no information, a single persuader can make a majority of voters strictly worse off. Although competition guarantees all equilibria of our example provide voters with full information, this still makes a majority of voters worse off. In equilibrium, when $\omega=0$, voter 1 gets an outcome ($O=1$) she marginally prefers given the state. With equal probability, $\omega=1$ and voter 1 gets an outcome she strongly disprefers. Meanwhile, under no information, politician 2 wins the election; this is the marginally dispreferred outcome for voter 1 when $\omega=0$ and the strongly preferred outcome when $\omega=1$. In net, voter 1 does better under no information; the same is true for voter 2 by identical logic. Voter 3, meanwhile, does better under full information.\footnote{\cite{alonso2016persuading} show that a majority of voters can be made worse off by a single persuader even if at every state all voters agree on the best outcome. With competing persuaders, such agreement between voters guarantees all voters do better off in equilibrium than under no information.} If she was the sole persuader, sender 2 would like to reveal no information; hence voters 1 and 2 do worse under competition than under unilateral persuasion by sender 2. Sender 1, meanwhile, would like to fully reveal the state even in the abscence of sender 2.\footnote{Senders 2 and 1 each have multiple Bayesian Persuasion solutions, but each are outcome equivalent to no information and full revelation respectively.}

The example adds some nuance to our results on information provision in equilbrium. While the effects of zero-sum competition are unambiguously positive for a single receiver, the welfare implications for multiple receivers will depend on the environment.

\section{Discussion of Assumptions}

Our two most substantive assumptions in this paper are that preferences are zero-sum and senders have access to only conditionally independent experiments. In this section we examine to what extent we can relax each of these.

\subsection{Zero-sum preferences}

Zero-sum preferences correspond to maximal disagreement between senders,\footnote{Formally, whenever $U_i(\Gamma_1,...,\Gamma_M)>U_i(\Gamma_1',...\Gamma_M')$ then there exists a sender $j$ with $U_j(\Gamma_1,...,\Gamma_M)<U_j(\Gamma_1',...\Gamma_M')$. Hence zero-sum preferences maximize the set of strategy profiles pairs for which senders do not unanimously agree on the ranking.} and hence are an important benchmark to consider for the applications we had in mind: those in which senders have strongly opposing interests. Under this benchmark, we were able to characterize properties of all equilibria\footnote{Proposition \ref{prop_robustness} tells us that these results apply close to this benchmark as well.} and show that competition cannot decrease information provision. In this section we demonstrate  that neither of these findings apply away from zero-sum games.

It is easy to see why our full revelation results collapse in a more general setting: if we allow for arbitrary sender preferences then it is possible for all senders to have identical preferences; in this case any sender's Bayesian Persuasion solution will be an equilibrium and, of course, need not be fully revealing. More interestingly, when the game is not zero-sum and senders employ conditionally independent experiments, the effect of competition on information provision is ambiguous. We can find examples of nonzero-sum piecewise real analytic preferences such that adding a sender can create an equilibrium strictly less informative than all equilibria without the additional sender. In the one parameter model of disagreement between two senders studied in \citet{gentzkow2016competition} we can show that increasing disagreement between senders can produce strictly less informative equilibria.\footnote{At one limit of the parameter is zero-sum preferences.} This suggests that while in the limit of maximal disagreement competition cannot decrease information provision, information provision is not monotonic in competition. Hence another justification for our limiting the scope of sender preferences is an inability to answer our motivating question outside this scope.

\subsection{Limitations in technology}

The main results of this paper show that when senders have zero-sum preferences and have access to all conditionally independent experiments, then competition cannot decrease equilibrium information. This happens starkly \textemdash typically any such competition results in all information being revealed. 

The arguments we use have relied on our assumptions on the information technology senders have access to. In particular, Lemma 1 property (1) relied on each sender being able to fully reveal the state and Lemma 1 propery (2) relied on a sender being able to construct any experiment using $N$ signals that is conditionally independent of her opponents'. In this section, we first show that these two ingredients are sufficient for our results; in particular if other (potentially non-conditionally independent) experiments are additionally available to senders then Propositions 1 and 2 and Theorem 1 go through. We then argue that while the fact that zero-sum competition results in nondecreasing information may not be shocking, it is not obvious; if senders do not have access to a rich enough set of conditionally independent experiments then competition may in fact decrease the amount of information provided in equilibrium. Finally, we consider a few examples which demonstrate that our baseline assumptions on technology are not necessary for our results \textemdash even when senders are limited to picking experiments from some simple restrictive classes of conditionally experiments, we still can maintain our main results.

\subsubsection{Additional experiments}

The case of conditionally independent (CI) experiments is realistic in many real-world settings. However, in some environments it is possible that senders may be able, to an extent, to correlate the results of their experiments. For instance, in criminal cases tried in United States federal court, the prosecutor may be bound to disclose (or place in discovery) certain types of evidence at various deadlines before the trial. The defense may hence (partially) condition what exonerating evidence they seek on what information the prosecutor discloses. In the case of politicians competing for votes, campaigns are dynamic processes and a candidate may commission experts to write reports informative about her agenda after her opponent has already done so. Importantly, while a sender in these examples is able to partially condition what information she seeks on what her opponent has revealed, she need not do so. A defense attorney is free to ignore the evidence put into disclosure by the prosecution and a politician is free to conduct her campaign ignoring information revealed by her competitors. 

Our results also apply in these types of settings. Conditional independence is not crucial to our results in the sense that they still hold if senders have access to additional (partially or arbitrarily correlated) strategies. Fix a set of signals $S$, $|S| \geq N$. Suppose each sender $i$ has access to a set of experiments $E_i$. We say sender $i$ has \emph{access to all CI experiments} if for any finite subset of signals $S_i \subseteq S$ $i$ chooses, $E_i$ contains every conditionally independent experiment $\Pi_i: \Omega \rightarrow \Delta(S_i)$. Sender $i$ may have access to all CI experiments and also have access to other (finite signal) experiments which realize to signals in $S_i$.

\begin{corollary} \label{cor_add_exp}
If every sender has access to all CI experiments then Propositions 1,2 and Theorem 1 hold.
\end{corollary}

The important part of Corollary \ref{cor_add_exp} is showing that Proposition 2 holds in such an environment; Proposition 1 is then implied and Theorem 1 follows by the same argument as in the baseline model. First note that if all senders have access to all CI experiments then Lemma 1 property (1) holds (for the same reason). Fixing an experiment $\Gamma_{-i}$, we can define $W_i(x)$, as before, as $i$'s expected payoff from generating interim belief $x$ from an experiment conditionally independent to $\Gamma_{-i}$. Next note that Lemma 1 propery (2) holds here as well; just as in the baseline model, whenever $\Gamma_{-i}$ is such that $W_i(x)>0$ for some $x$, $i$ can find a conditionally independent $\Gamma_i$ such that $U_i(\Gamma_i,\Gamma_{-i})>0$ (violating Lemma 1 property (1)). Extending the `if' direction of Proposition 2 to this setting then follows by and identical argument as in the baseline model; the `only if' direction holds by an identical construction of pooling equilibria.

It is worth noting that adding correlated experiments to senders' strategy spaces augments the set of deviations they can play and hence it could be \emph{easier} to support non-fully revealing equilibria when only CI experiments are available. Our analysis shows that CI deviations are sufficient to eliminate non-fully revealing equilibria whenever there is global nonlinearity in preferences.

\citet{gentzkow2017bayesian} consider a multi-sender Bayesian persuasion game in which senders are allowed to arbitrarily correlate their signal realizations and obtain a sufficient condition for full revelation in all equilibria of zero-sum games almost identical to ours. Corollary \ref{cor_add_exp}, combined with Theorem 1, nests this result of GK (2017) with the caveat that we assume piecewise analytic utilities. 

It is important to note the forces that deliver the full revelation results in GK (2017) and this paper are different. In GK (2017), ability to correlate experiments gives each sender much more control over the posterior. Given any $\Gamma_{-i}$ played by her opponents, a sender $i$ can play a different experiment for \emph{each} realization of $\Gamma_{-i}$. Senders' ability to manipulate the receiver's posterior belief by belief makes Proposition 2, and hence Theorem 1, much easier to prove. In our setting senders have less control over posteriors; the strongest tool a sender has is using extreme interim beliefs to ensure poteriors are similarly extreme. The mechanisms by which senders can take advantage of the state not being fully revealed in our model require less complexity. As we discuss later in the section, a key contribution of this paper is to show zero-sum competition in persuasion typically generates full information even when senders have access to lower complexity technology.

\subsubsection{Limited conditionally independent experiments}

Ex-ante, one may think that zero-sum competition is most favorable to information production and hence that our results are not surprising. The following example demonstrates that these results are not obvious and depend on the set of strategies senders have access to. 

\begin{example} \label{ex_0sum_dec}
Suppose $N=M=2$ and let $\pi = \frac{1}{2}$. Sender preferences are zero-sum and are defined as follows: 
\begin{itemize}
\item $u_1(\beta)=u_2(\beta)=0$ for all $\beta \not \in \{0.25,0.26,0.74,0.75\}$
\item $u_1(0.25)=u_1(0.75)=u_2(0.26)=u_2(0.74)=1$
\end{itemize}   

Define experiments $\Gamma^1,\Gamma^2,\Gamma^N$ as follows: \begin{itemize}
\item $Pr(\Gamma^1=0.25)=Pr(\Gamma^1=0.75)=\frac{1}{2}$
\item $Pr(\Gamma^2=0.26)=Pr(\Gamma^2=0.74)=\frac{1}{2}$
\item $Pr(\Gamma^N=\frac{37}{76})=Pr(\Gamma^N=\frac{39}{76})=\frac{1}{2}$
\end{itemize}Suppose $E_1=\{\Gamma^U,\Gamma^1,\Gamma^N\}$ and $E_2 = \{\Gamma^U,\Gamma^2,\Gamma^N\}$.
\end{example}
\begin{center}
\includegraphics[scale=0.6]{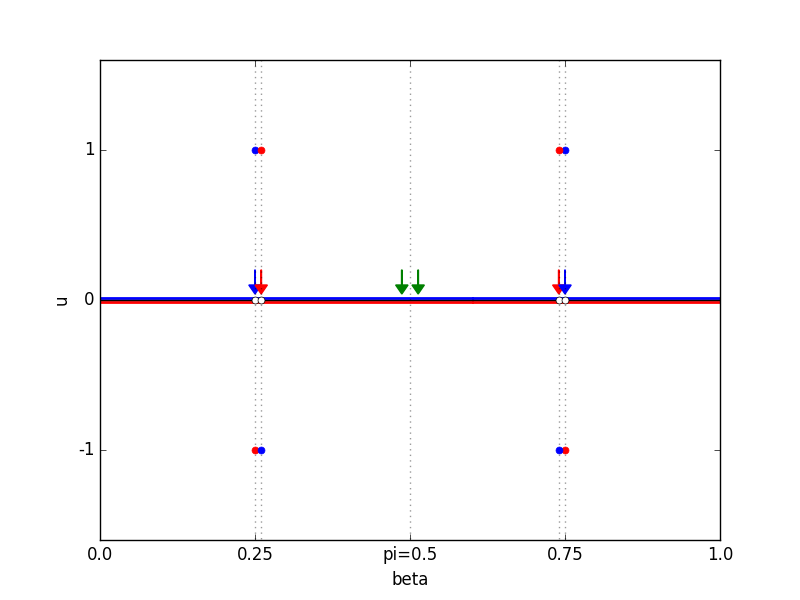}
\captionof{figure}{Plots $u_1$ (blue) and $u_2$ (red) in Example \ref{ex_0sum_dec}. The blue arrows show the possible interim belief realizations for $\Gamma^1$. The red and green arrows show the same for $\Gamma^2$ and $\Gamma^N$ respectively.}
\end{center}

$\Gamma^1$ and $\Gamma^2$ here are experiments that realize to sender 1 and sender 2's points of advantage respectively. $\Gamma^N$ provides extremely little information.

Note that $\beta(0.74,\frac{39}{76}) = 0.75$, $\beta(0.75,\frac{37}{76})=0.74$, $\beta(0.26,\frac{37}{76})=0.25$, and $\beta(0.25,\frac{39}{76})=0.26$. Given this, both players have best response $\Gamma_i=\Gamma^N$ whenever $\Gamma_{-i}=\Gamma^N$ or $\Gamma_{-i} =\Gamma^{-i}$. Sender $i$'s best response to $\Gamma_{-i}=\Gamma^i$ is $\Gamma_i=\Gamma^U$. With two senders, there is a unique equilibrium $(\Gamma^N,\Gamma^N)$. When only sender $i \in \{1,2\}$ is playing, the unique `equilibrium' (Bayesian Persuasion solution) is $\Gamma_i = \Gamma^i$. Note that $\Gamma^1$ and $\Gamma^2$ are both strictly more informative than the strategy profile $(\Gamma^N,\Gamma^N)$.\footnote{To see this note that $\Gamma^1$ and $\Gamma^2$ both only induce posteriors outside interval $(0.26,0.74)$. $(\Gamma^N,\Gamma^N)$ induces posteriors inside $(0.26,0.74)$; as we are in a binary-state setting, this means that $\Gamma^1$ and $\Gamma^2$ are both mean preserving spreads of the $(\Gamma^N,\Gamma^N)$.}

Consider the game with 2 senders. As no feasible strategy profile fully reveals the state, it is not surprising that our full revelation results collapse in this example. What is more interesting is that zero-sum competition here results in less information in equilibrium  than the receiver(s) would obtain with any one sender being the sole persuader.\footnote{Note that if both senders were able to fully reveal the state, then while there would be a fully revealing equilibrium, $(\Gamma^N,\Gamma^N)$ would remain an equilibrium. Hence going from 1 to 2 senders would still create a strictly less informative equilibrium while also creating a strictly more informative one.}\footnote{This fact does not depend on senders having different strategy sets. We can construct a similar but more cumbersome example where $E_1=E_2$.} Hence if senders have limited sets of experiments available, it is possible for competition to decrease equilibrium information. In this example, this occurs because both senders play low information experiments, $\Gamma^N$, which given the available experiments make it impossible for their opponent to make use of her advantage points. Though one may think zero-sum competition should intutively lead to more information, this is only true when senders can choose from rich enough sets of experiments.

The following examples and results show that while very limited sets of CI experiments will break our results, if senders have access to some simple classes of experiments but not all CI experiments, our results go through.

\begin{example} \label{ex_chemical}
Two lobbyists, $1,2$, persuade a politician to vote yes/no ($Y$/$N$) on a bill. They do so by commissioning scientists to conduct studies revealing information about a state $\omega \in \{0,1\}$. The politician will vote yes on the bill if at her posterior $Pr(\omega =1) \geq 0.5$ and no otherwise. Lobbyist $i$ receives payoff, $u_i(a,\omega)$ where $a \in \{Y,N\}$ is the politician's vote. As in Section 4.1, we assume all senders have strict preferences over actions at both states and that payoffs are zero-sum for each $(a,\omega)$. 

As an example, suppose the bill would place restrictions on chemicals manufacturing firms can use in producing household goods. If $\omega=1$ these chemicals are harmful and if $\omega=0$ they are not.  Lobbyist $1$ represents the firms and wants the bill to fail, say $u_1(N,\omega)>u_1(Y,\omega)$ for all $\omega$, whereas lobbyist $2$ is a consumer protection advocate and has the opposite preferences. Each lobbyist has a scientist (scientists $1,2$) who will study the chemical's effects on humans. Each scientist $1,2$ has access to a standard test they can use on a human subject to determine the chemical's effect on the subject; the only dimension the lobbying firms can control is how many subjects their scientist tests. 

We assume both scientists can employ the same standard test, represented by experiment $\Gamma$; assume $\Gamma \not \in \{\Gamma^U,\Gamma^{FR}\}$ \textemdash testing an individual subject is not totally uninformative or fully informative about the effects on the population. Let experiment $\Gamma^K$ represent the experiment induced by repeating $\Gamma$ $K$ times conditionally independently. Lobbyist $i$ can play experiment $\Gamma^K$ by asking scientist $i$ to test $K$ subjects. We assume scientist $i$ can repeat $\Gamma$ an arbitrary number of times. As $K \rightarrow \infty$, note that $\Gamma^K$ will fully reveal the state; this is because $\Gamma$ is not totally uninformative and hence an arbitrarily large number of copies will reveal the state with arbitrary certainty (we discuss this below). As the limit of infinite repetitions, we allow each sender $i$ to fully reveal the state as well. Hence sender $i$ has access to experiments $E_i = \{\Gamma^K\}_{K=1}^{\infty} \cup \{\Gamma^{FR}\}$.

Each sender has access to a much smaller set of experiments than the set of all CI experiments. However, each can fully reveal the state; this implies Lemma 1 property (1) must hold. Further, by choosing large enough $K$, a sender can ensure the state is almost fully revealed with high probabilility and that posterior beliefs concentrate close to $0$ and $1$. If $\Gamma_{-i}$ is not fully revealing the state and sender $i$ has advantages close to $0$ and $1$, sender $i$ can obtain a strictly positive payoff by playing $\Gamma^K$ for large enough $K$. If $i$ has an advantage close to $1$ but a disadvantage close to $0$, this sort of deviation may still give $i$ a strictly positive payoff depending on the relative sizes of the advantage and disadvantage. It turns out, that for generic payoffs $\{u_i(a,\omega)\}$, some sender will be able to obtain a strictly positive payoff when her opponent is not fully revealing the state. This implies:

\textbf{Result.} For generic payoffs the state is fully revealed in every equilibrium.
\end{example}

The intuition from Example \ref{ex_chemical} extends more generally. We say an experiment $\Gamma$ is \emph{asymptotically sufficient} if: (1) $\Gamma \neq \Gamma^{FR}$ and (2) $\Gamma^K$ converges in distribution to $\Gamma^{FR}$ as $K \rightarrow \infty$.  In the binary-state case, $\Gamma$ is asymptotically sufficient if and only if $\Gamma \neq \Gamma^U$. More generally, $\Gamma$ is assymptotically sufficient if the convex hull of its support has dimension $N-1$; when this is true, then repeating $\Gamma$ generates sufficient information about the relative probabilities of states to guarantee convergence to $\Gamma^{FR}$.\footnote{Versions of this result are known, but we were unable to find a suitable version of the result in the literature and hence prove it in Supplementary Appendix A.} If each player has access to $\Gamma^{FR}$ and an asymptotically sufficient experiment which she can repeat arbitrarily, then we can find a sufficient condition for the state to be fully revealed in all equilibria (satisfied in all but a knife-edge case).\footnote{If no sender has access to the fully revealing experiment then there can be issues with equilibrium existence. $\Gamma^{FR}$ needs to be included as a limiting case of infinite repetitions.} Here, as the set of pure strategies is itself quite coarse, we consider the class of all mixed strategy Nash Equilibria with support on a finite number of pure strategies. The result is easiest to state in the case of a single receiver with finite actions:

\begin{prop} \label{prop_repeat}
For each sender $i$ let $E_i = \{\Gamma^K\}_{K=1}^{\infty} \cup \{\Gamma^{FR}\}$ for some asymptotically sufficient $\Gamma$. There is a single receiver with a finite action set who breaks indifferences in favor of higher actions. Generically, the state is fully revealed in every mixed strategy Nash Equilibrium with support on finite pure strategies if the receiver has a different best action at some two states.
\end{prop} 

Note that the result is slightly stronger in implication than \ref{cor_finite_action}, as we can have get full revelation in all equilibria even if the receiver prefers the same action at some two states. Substantively, the implication is the same: the receiver will always learn enough to take her first-best action. As a sender $i$ repeats an asymptotically sufficient informative experiment, interim beliefs $\Gamma^K$ converge to full revelation. For any fixed $\Gamma_{-i}$, posterior beliefs do as well. Depending on $i$'s relative advantage close to each $\delta_l$, this may be good for $i$ (for instance if $i$ has an advantage close to all $\delta_l$). We show that generically if information ever affects the receiver's action (i.e. she prefers a different action at some two states), some sender $i$ can obtain a strictly positive payoff from $\Gamma^K$ as $K \rightarrow \infty$;  this deviation is always available to sender $i$ and hence Lemma 1 property (1) (extended to this setting) implies the result. 

Proposition \ref{prop_repeat} gives us a simple class of conditionally independent experiments that is sufficient for the receiver to learn adequately (or, enough that no further information would help) under zero-sum competition. This class is in a sense coarse, but provides senders with the necessary flexibility: ability to force posteriors into extreme regions of advantage. We give one other example of a set of experiments that does the job.

Suppose for all $i$ $E_i = \{\Gamma : \exists \alpha \in [0,1] \text{ s.t. } \forall k\in \Omega \text{, } Pr(\Gamma=\alpha \delta_k + (1-\alpha) \pi) = \pi_k \}$. Each $E_i$ contains all convex combination of $\Gamma^U$ and $\Gamma^{FR}$; this can be interpretted as senders having access to only these two experiments and privately randomizing which experiment they employ. With these strategies, in the single receiver model, again, the state is fully revealed in every equilibrium if the receiver has a different best action at some two states. The argument is similar to that in the previous example: again senders as able to fully reveal the state (so Lemma 1 property (1) holds), and are able to almost fully reveal the state.

\section{Conclusion}

We study a multi-sender Bayesian Persuasion game. The substantive assumption is that senders are maximally competitive and have zero-sum preferences over the the receiver's posterior belief. In our baseline model senders employ conditionally independent experiments and we show that for typical sender preferences, the state is fully revealed in every equilibrium. Further, we show that zero-sum competition cannot decrease equilibrium information provision. Our results do not critically rely on conditional independence: they apply when senders have access to technology to that is richer than the set of all conditionally independent experiments as well as, in some cases, when sender technology is coarser. 

In the paper we consider various real-world applications of our model. Many of these applications involve games with a single receiver who chooses from a finite actions set. We show that in these settings our results take a clean form: the receiver always learns enough to attain her first-best payoff. While our results are very positive for a single receiver, the consequences of full revelation (or of a lot of information being revealed in equilibria) on receiver welfare need not be positive in games with multiple receivers with conflicting interests. While our results indicate that persuaders with opposing interests will tend to produce a lot of information in equilibrium, whether this is `good' or not will depend on the setting.

\bibliographystyle{abbrvnat}
\bibliography{ref.bib}{}

\appendix
\section{Appendix A: Proofs}

\textbf{Definitions and facts.} 
The following definitions and facts are used in both Appendix A and Supplementary Appendix B.

Let $P$ be the set of elements of $\Delta(\Delta(\Omega))$ which are Bayes-plausible, have finite support, and have support on at most $|S|$ beliefs. It will be convenient to talk about a strategy for sender $i$ as a choice of interim belief $\Gamma_i$ with probability mass function $p_i \in P$ (in the text of the paper we did not introduce notation for the distribution of $\Gamma_i$).

For any strategy profile $(\Gamma_1,...,\Gamma_M)$ and subset of senders $S \subseteq \{1,...,M\}$, let the random variable $\Gamma_S$  be the receiver's belief after observing realizations of $\{\Gamma_j\}_{j \in S}$ but not the realizations of $\{\Gamma_j \}_{j \not \in S}$; let $p_S$ be it's probability mass function and $p_S(\cdot|\omega=k)$ be its probability mass function conditional on the state being $k$. Let $\Gamma_{-S}$ and $p_{-S}$ be the same objects for the complementary set of senders.

For any disjoint subsets of senders $S,S' \subset \{1,...,M\}$ and any fixed strategy profile $(\Gamma_1,...,\Gamma_M)$, let $p_{S'}(\cdot|x)$ be the probability mass function of $\Gamma_{S'}$ conditional on $\Gamma_S=x$.

\begin{equation} \label{p_S}
\begin{split}
& p_{S'}(y|x) = \sum_{k \in \Omega} p_{S'}(y|\omega=k,x)Pr(\omega=k|\Gamma_S=x) = \sum_{k \in \Omega} p_{S'}(y|\omega=k) x_k \\
&  = \sum_{k \in \Omega} \frac{Pr(\omega=k|y)p_{S'}(y)}{Pr(\omega=k)} x_k = \sum_{k \in \Omega} \frac{x_k y_k p_{S'}(y)}{\pi_k}
\end{split}
\end{equation}

Where the second equality comes from conditional independence of $\Gamma_S$ and $\Gamma_{S'}$. Claim \ref{claim_p_S'} tells us that conditional on $\Gamma_S$, with probability $1$ $\Gamma_{S'}$ assigns positive probability to at least one state that $\Gamma_{S}$ assigns positive probability to (i.e. $\Gamma_{S'}$ cannot contradict $\Gamma_{S}$). This is a simple implication of Bayesian updating.

\begin{claim} \label{claim_p_S'}
For any disjoint subset of senders $S,S'$, $\Gamma_S$,$\Gamma_{S'}$, and $x \in \Delta(\Omega)$: $p_{S'}(y|x) = 0$ for all $y$ s.t. $y_l =0 $ for all $l \in \Omega$ for which $x_l >0$. Further, there exists $y \in supp[\Gamma_{S'}]$ such that $p_{s'}(y|x)>0$.
\end{claim}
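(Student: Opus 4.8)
The plan is to read both halves of the claim directly off the closed form for $p_{S'}(y|x)$ established in equation \eqref{p_S},
\begin{equation*}
p_{S'}(y|x) = \sum_{k \in \Omega} \frac{x_k y_k p_{S'}(y)}{\pi_k}.
\end{equation*}
Every factor here ($x_k$, $y_k$, $p_{S'}(y)$, and $\pi_k>0$) is nonnegative, so this representation exposes the sign structure completely and reduces each assertion to tracking which summands survive.

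For the first assertion I would take $y$ with $y_l = 0$ for every $l$ such that $x_l > 0$, and fix any $k \in \Omega$. Either $x_k = 0$, or $x_k > 0$ in which case the hypothesis forces $y_k = 0$; in both cases the product $x_k y_k$ vanishes. Hence every summand in the formula is zero, giving $p_{S'}(y|x)=0$. This is precisely the statement that $\Gamma_{S'}$ places no conditional weight on a belief whose support is disjoint from that of $x$, i.e. that $\Gamma_{S'}$ cannot contradict $\Gamma_S$.

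For the second assertion I would first verify that $p_{S'}(\cdot|x)$ is a genuine probability mass function by summing over the (finitely many) $y$ and interchanging the finite sums:
\begin{equation*}
\sum_{y} p_{S'}(y|x) = \sum_{k \in \Omega} \frac{x_k}{\pi_k} \sum_{y} y_k\, p_{S'}(y) = \sum_{k \in \Omega} \frac{x_k}{\pi_k}\, \pi_k = \sum_{k \in \Omega} x_k = 1,
\end{equation*}
where the middle equality uses Bayes-plausibility of $\Gamma_{S'}$, namely $\sum_y y_k\, p_{S'}(y) = \mathbb{E}[\Gamma_{S',k}] = \pi_k$, and the final equality uses $x \in \Delta(\Omega)$. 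Since the conditional probabilities are nonnegative and sum to $1$, at least one $y$ satisfies $p_{S'}(y|x)>0$. Finally, the formula shows $p_{S'}(y|x)>0$ can hold only when $p_{S'}(y)>0$, i.e. only when $y \in supp[\Gamma_{S'}]$, so any such $y$ automatically lies in the support, as required.

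There is no genuine obstacle here: the entire content sits in the explicit formula \eqref{p_S}, which the excerpt already justifies via conditional independence of $\Gamma_S$ and $\Gamma_{S'}$ together with Bayes' rule. The only substantive (and still elementary) step is recognizing that the marginal identity $\sum_y y_k\, p_{S'}(y) = \pi_k$ is exactly Bayes-plausibility of $\Gamma_{S'}$, since this is what collapses the normalization sum to $1$ and thereby guarantees a strictly positive conditional weight somewhere in the support.
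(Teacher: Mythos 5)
Your proof is correct. The first half coincides exactly with the paper's argument: both read $p_{S'}(y|x)=0$ off Equation \ref{p_S} by noting that every product $x_k y_k$ vanishes when $y$'s support avoids $x$'s. For the existence half, however, your route is genuinely different. The paper is constructive: it fixes a state $l$ with $x_l>0$ and invokes Bayes-plausibility to produce a witness $y \in supp[\Gamma_{S'}]$ with $y_l \geq \pi_l > 0$ (since $\mathbb{E}[\Gamma_{S',l}]=\pi_l$, some support point must lie weakly above the mean in coordinate $l$), whence the single term $x_l y_l p_{S'}(y)/\pi_l$ already forces $p_{S'}(y|x)>0$. You instead establish the normalization identity $\sum_y p_{S'}(y|x)=1$ \textemdash\ again via Bayes-plausibility, applied coordinatewise as $\sum_y y_k p_{S'}(y)=\pi_k$ \textemdash\ and conclude nonconstructively that some term must be positive, then note that any such $y$ automatically lies in $supp[\Gamma_{S'}]$ because $p_{S'}(y)$ is a factor of every summand. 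Your approach buys a slightly stronger, reusable byproduct: it verifies that $p_{S'}(\cdot|x)$ is a genuine probability mass function, which doubles as a consistency check on Equation \ref{p_S}. The paper's approach buys an explicit witness, which makes transparent the interpretation emphasized in the text \textemdash\ that $\Gamma_{S'}$ must assign positive conditional probability to a realization overlapping $\Gamma_S$'s support, i.e.\ cannot contradict it. Both arguments rest on the same two ingredients (the closed form for $p_{S'}(y|x)$ and Bayes-plausibility), so the difference is one of deployment rather than substance, but the mechanisms of the existence step are distinct.
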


\begin{proof}
The first statement, that $p_{S'}(y|x) = 0 $ for all $y$ such that $y_l =0$ for all $l$ for which $x_l >0$, follows immediately from Equation \ref{p_S}. The second statement follows from Bayes-plausibility of $\Gamma_{S'}$. For every $l$ such that $x_l >0$, as $\pi_l >0$, there exists $y \in supp[\Gamma_{S'}]$ with $y_l \geq \pi_l > 0 $; by Equation \ref{p_S}, $p(y|x)>0$ for such a $y$.
\end{proof}

Let $\beta_l(x^1,...,x^M)=Pr(\omega=l|x^1,...,x^M)$ be the receiver's posterior belief that $\omega=l$ after observing experiment realizations $\Gamma_1 = x^1,...,\Gamma_M = x^M$. By Bayes rule:

\begin{equation} \label{beta_def}
\begin{split}
& \beta_l(x^1,...,x^M)= \frac{Pr(\Gamma_1=x^1,...,\Gamma_M=x^M|\omega=l)Pr(\omega=l)}{Pr(\Gamma_1=x^1,...,\Gamma_M=x^M)} \\
& = \frac{[\Pi_{i=1}^M p_i(x^i|\omega=l)] \pi_l}{\sum_{k=1}^N [\Pi_{i=1}^M p_i(x^i|\omega=k)]Pr(\omega=k)} = \frac{[\Pi_{i=1}^M \frac{Pr(\omega=l|x^i) p_i(x^i)}{Pr(\omega=l)}] \pi_l}{\sum_{k=1}^N [\Pi_{i=1}^M \frac{Pr(\omega=k|x^i) p_i(x^i)}{Pr(\omega=k)}] \pi_k} \\
& =\frac{[\Pi_{i=1}^M x^i_l] / \pi_l^{M-1}}{\sum_{k=1}^N [\Pi_{i=1}^M x^i_k] / \pi_k^{M-1}} 
\end{split}
\end{equation}

Where the second equality uses the conditional independence of $\Gamma_1,...,\Gamma_M$. Note that $\beta_l$ is not well defined when for each state $k \in \Omega$ there exists sender $j$ with $x^j_k=0$. However it is straightforward to see by applying Claim \ref{claim_p_S'} that such a realization of $(\Gamma_1,...,\Gamma_M)$ occurs with zero probability; after viewing the realizations of any number of experiments, the Bayesian receiver will have a well defined posterior w.p. $1$.

For any strategy profile $(\Gamma_1,...,\Gamma_M)$ and disjoint sets of senders $S_1,...,S_T$, we similarly define define the receiver's posterior as a function of interim belief realizations from each experiment: $\{\Gamma_{S_s}=y^{S_s}\}_{s=1,...,T}$. 

\begin{equation} \label{beta_def_sets}
\beta_l(y^{S_1},...,y^{S_T}) = \frac{\Pi_{s=1}^T y^{S_s}_l / \pi_l^{T-1}}{\sum_{k=1}^N \Pi_{s=1}^T y^{S_s}_k / \pi_k^{T-1}}
\end{equation}

Note: $\Gamma_{S_1 \cup ... \cup S_T} = \beta(\Gamma_{S_1},...,\Gamma_{S_T})$, as both define the receiver's belief after observing realizations of $\Gamma_{S_1},...,\Gamma_{S_T}$.

Claim \ref{claim_projection} shows that if any subset of experiments in a strategy profile generate an interim belief in $\Delta(\Omega')$ then the posterior will fall in $\Delta(\Omega')$ w.p. $1$.

\begin{claim} \label{claim_projection}
For any strategy profile $(\Gamma_1,...,\Gamma_M)$, disjoint subsets of senders $S_1,...,S_T$, and states $\Omega' \subseteq \Omega$, if $\Gamma_{S_1} \in \Delta(\Omega')$ then $\beta(\Gamma_{S_1},...,\Gamma_{S_T}) \in \Delta(\Omega')$ w.p. $1$. 
\end{claim}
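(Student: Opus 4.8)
The plan is to read off the conclusion directly from the explicit posterior formula in Equation \ref{beta_def_sets}, exploiting the product structure that conditional independence provides. The hypothesis $\Gamma_{S_1} \in \Delta(\Omega')$ means that whenever a realization $y^{S_1}$ of $\Gamma_{S_1}$ occurs it places zero mass outside $\Omega'$, i.e.\ $y^{S_1}_l = 0$ for every $l \notin \Omega'$. So I would fix an arbitrary realization $(y^{S_1},\dots,y^{S_T})$ with $y^{S_1}\in\Delta(\Omega')$ at which the posterior is well defined, and simply evaluate $\beta_l$ for a state $l \notin \Omega'$.

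For such an $l$, the numerator in Equation \ref{beta_def_sets} is $\Pi_{s=1}^T y^{S_s}_l / \pi_l^{T-1}$, and this product contains the factor $y^{S_1}_l = 0$; hence the numerator vanishes and $\beta_l(y^{S_1},\dots,y^{S_T}) = 0$. Since this holds for \emph{every} $l \notin \Omega'$ and $\beta$ is a probability vector, we get $\sum_{l \in \Omega'} \beta_l = 1$, which is precisely $\beta(y^{S_1},\dots,y^{S_T}) \in \Delta(\Omega')$.

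The only remaining point is to discard the realizations at which $\beta$ fails to be well defined, namely those for which the denominator in Equation \ref{beta_def_sets} vanishes (no state $k$ has $y^{S_s}_k > 0$ for all $s$). This is exactly the measure-zero event flagged in the remark following Equation \ref{beta_def}, and is controlled by Claim \ref{claim_p_S'}, which guarantees the receiver has a well-defined posterior with probability one. Invoking that fact is what downgrades the conclusion from ``surely'' to ``w.p.\ $1$''. There is no real obstacle here: the statement is essentially a one-line consequence of the product form of the posterior, and the only care needed is confirming that conditioning on $\Gamma_{S_1} \in \Delta(\Omega')$ does not interfere with well-definedness, which the earlier claim already settles.
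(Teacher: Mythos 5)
Your proof is correct and follows essentially the same route as the paper's: both read off $\beta_l(y^{S_1},\dots,y^{S_T})=0$ for $l\notin\Omega'$ directly from the product form in Equation \ref{beta_def_sets}, since the factor $y^{S_1}_l=0$ kills the numerator. Your extra care in isolating the ill-defined realizations and invoking Claim \ref{claim_p_S'} to dismiss them as a probability-zero event is a slightly more explicit treatment of the ``w.p.\ $1$'' qualifier than the paper gives, but it is the same argument.
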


\begin{proof}
This can be seen from the definition of $\beta(y^{S_1},...,y^{S_T})$ which implies $\beta_l(y^{S_1},...,y^{S_T}) = 0$ for all $l \not \in \Omega'$. After observing $\Gamma_{S_1} \in \Delta(\Omega')$, the receiver updates to an interim belief assigning $0$ probability to all states outside of $\Omega'$. No additional information can change this.
\end{proof}

\begin{claim} \label{not_pool}
For any strategy profile $(\Gamma_1,...,\Gamma_M)$, $\Omega' \subseteq \Omega$, and any subsets of senders $S$: If $\Gamma_S$ does not pool $\Omega'$ then $(\Gamma_1,...,\Gamma_M)$ does not either.
\end{claim}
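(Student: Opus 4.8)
The plan is to exploit the elementary fact that observing all $M$ experiments refines the belief obtained from observing only the senders in $S$, so the support of the full posterior can only be smaller than that of $\Gamma_S$. Using the identity $\Gamma_{S_1 \cup \cdots \cup S_T} = \beta(\Gamma_{S_1},\dots,\Gamma_{S_T})$ noted after Equation \ref{beta_def_sets}, I would write the full posterior as $\Gamma = \beta(\Gamma_S,\Gamma_{-S})$, where $\Gamma_{-S}$ is the belief induced by the complementary set of senders. Reading off the posterior formula \ref{beta_def_sets} with $T=2$, $S_1 = S$, $S_2 = -S$, the numerator of $\beta_l$ carries the factor $(\Gamma_S)_l$, so $\beta_l > 0$ forces $(\Gamma_S)_l > 0$. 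Hence for every joint realization, $\mathrm{supp}(\Gamma) \subseteq \mathrm{supp}(\Gamma_S)$; this is the support-monotonicity counterpart of Claim \ref{claim_projection}.

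The second step converts this pointwise support containment into the required probabilistic statement. By definition, $\Gamma_S$ does not pool $\Omega'$ means that the event $E_S = \{(\Gamma_S)_l > 0 \text{ } \forall l \in \Omega'\}$ has probability $0$, i.e. almost surely at least one state of $\Omega'$ receives zero interim probability under $\Gamma_S$. Letting $E = \{\Gamma_l > 0 \text{ } \forall l \in \Omega'\}$ be the analogous event for the full posterior, the containment $\mathrm{supp}(\Gamma) \subseteq \mathrm{supp}(\Gamma_S)$ gives $E \subseteq E_S$ realization by realization, whence $\Pr(E) \le \Pr(E_S) = 0$. That is exactly the statement that $(\Gamma_1,\dots,\Gamma_M)$ does not pool $\Omega'$.

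The only subtlety I anticipate — and the single point that needs care rather than computation — is the null set of realizations on which $\beta$ is not defined by the raw formula (those for which every state is excluded by some sender). These must be set aside before asserting support containment. Claim \ref{claim_p_S'} and Claim \ref{claim_projection} guarantee that the Bayesian posterior is well defined with probability one and that conditioning on $\Gamma_S$ never contradicts its support, so discarding this null set affects none of the probabilities above. Everything else is a direct reading of Equation \ref{beta_def_sets}, so no genuine calculation is required.
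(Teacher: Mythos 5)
Your proposal is correct and follows essentially the same route as the paper: both arguments read off the factorization in Equation \ref{beta_def_sets} to conclude that $(\Gamma_S)_l = 0$ forces $\beta_l = 0$, and then pass from this pointwise implication to the probabilistic statement that the full profile cannot pool $\Omega'$. Your explicit treatment of the null set where $\beta$ is undefined, via Claims \ref{claim_p_S'} and \ref{claim_projection}, is a minor bookkeeping refinement the paper handles implicitly, not a different argument.
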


\begin{proof}
Let $S' = \{1,...,M\} \setminus \{S\}$. As $\Gamma_S$ does not pool $\Omega'$, then $Pr(\Gamma_S=y \text{ : s.t. } y_l>0 \text{ } \forall l \in \Omega')=0$. If $\Gamma_S=y$, $\Gamma_{S'}=y'$, then by Equation \ref{beta_def_sets}, if $y_l =0$ then $\beta_l(y,y')=0$. Hence as w.p. $1$ $\Gamma_S$ assigns $0$ probability to at least one state in $\Omega'$, $\beta(\Gamma_S,\Gamma_{S'})$ does as well and so $(\Gamma_1,...,\Gamma_M)$ does not pool $\Omega'$.
\end{proof}

\subsection{Section 2}

\textbf{Normalization of utility functions.} Here we show that we can normalize $u_i(\delta_l)=0$ for all $i=1,...,N$, $l=1,...,M$ without changing senders' preferences over strategy profiles or the zero-sumness of the game.

Suppose senders have utility functions $u_1',...,u_M'$ with $u_1'+...+u_M'=0$ for all $\beta$. For $i=1,..,M$ let $\alpha_i : \Delta(\Omega) \rightarrow \mathbb{R}$ be the affine function $\alpha_i(\beta) = - \sum_l \beta_l u_i(\delta_l)$. For each $i$, define the function $u_i : \Delta(\Omega) \rightarrow \mathbb{R}$ as $u_i = u_i' + \alpha_i$. Then $u_i(\delta_l)=0$ for all $i$, $l=1,...,N$. Note that utility function $u_i$ preserves the same preferences over strategy profiles as $u_i'$, as for any strategy profile $(\Gamma_1,...,.\Gamma_M)$, $\mathbb{E}_{p_1,...,p_M} [u_i(\beta(\Gamma_1,...,\Gamma_M))] = \mathbb{E}_{p_1,...,p_M} [u_i'(\beta(\Gamma_1,...,\Gamma_M))] - \sum_l \pi_l u_i(\delta_l)$ - the latter term is a constant. Finally note that $\alpha_1(\beta)+...+\alpha_M(\beta)=0$ for all $\beta \in \Delta(\Omega)$, so $u_1+....+u_M=0$.

\subsection{Section 3}

\textbf{Lemma 1. General case:} In any equilibrium $(\Gamma_1,...,\Gamma_M)$: (1) $U_i(\Gamma_1,...,\Gamma_M)=0$ for $i=1,...,M$ and (2) $W_i(x) \leq 0$ for all $x \in \Delta(\Omega)$ and $i=1,...,M$.

\begin{proof}
We prove (1) first. First note that as the functions $\{u_i\}_{i=1,...,M}$ are zero-sum, so are $\{U_i\}_i$. To see this, fix any $(\Gamma_1',...,\Gamma_M')$ and let $\Gamma'$ be the random variable representing the receiver's posterior after viewing all $M$ experiment realizations and $p'$ be its pmf. Then $\sum_{i=1}^M U_i(\Gamma_1',...,\Gamma_M') = \sum_i \sum_{\beta \in supp[\Gamma']} u_i(\beta) p'(\beta) = \sum_{\beta \in supp[\Gamma']} p'(\beta) \sum_i u_i(\beta)  = 0$. Next note that any sender $i$ choosing $\Gamma_i=\Gamma^{FR}$ yields $U_i(\Gamma^{FR},\Gamma_{-i})=0$ for all $\Gamma_{-i}$. Hence in any equilibrium $(\Gamma_1,...,\Gamma_M)$, each sender gets $U_i(\Gamma_1,...,\Gamma_M) \geq 0$. Finally no sender can have $U_i(\Gamma_1,...,\Gamma_M)>0$ as this would imply $U_j(\Gamma_1,...,\Gamma_M)<0$ for some $j \neq i$. 

For (2) we prove the contrapositive. Fix any sender $i$ and opponents' strategy profile $\Gamma_{-i}$ such that $W_i(x) >0 $ for some $x \in \Delta(\Omega)$; we will show $\Gamma_{-i}$ cannot be played in equilibrium. Consider the strategy $\Gamma_i'$ with distribution $p_i'$ and support only on $x$ and $\{\delta_l\}_{l=1,...,N}$. Set $p_i'(x) > 0$ small enough such that $\pi_l - x_l p_i'(x)>0$ for all $l$ (such a value exists as $\pi_l >0$ for all $l$). Bayes-plausibility implies we must have: $p_i'(\delta_l) = \pi_l - x_l p_i'(x) > 0 $ for all states $l$ (as the support of $\Gamma_i'$ is $\{x,\delta_1,...,\delta_M\}$). Then $U_i(\Gamma_i',\Gamma_{-i}) = W_i(x) p_i'(x) + \sum_l u_i(\delta_l) p_i'(\delta_l) >0$. Property (1) of the lemma implies $\Gamma_{-i}$ cannot be played in equilibrium; hence $W_i(x)\leq 0$ for all $i$ in any equilibrium.
\end{proof}

\textbf{Proposition 1.} 
Proposition 1 is implied by Proposition 2, proven in the next section. However, as the proving Proposition 1 is much simpler than Proposition 2, we provide a proof here for exposition.

\begin{proof}
The `only if' direction is trivial. If $u_1$ is linear so is $u_2$. Under our normalization of $u_1(0)=u_1(1)=0$, this implies that $u_1(\beta)=u_2(\beta)=0$ for all $\beta \in [0,1]$. Hence both senders are indifferent across all strategy profiles and any $(\Gamma_1,\Gamma_2)$ is an equilibrium.

Now for the `if' direction. Suppose $u_1$ (and hence $u_2$) are nonlinear. Let $q = \sup \{\beta \in [0,1]: u_1(\beta) \neq 0\}$ be the supremum of posteriors at which $u_1,u_2$ are nonlinear. We prove the result in two cases.

\emph{Case 1: $q=1$.} If $q=1$, then by the piecewise analycity of $u_1,u_2$, there exists $r<1$ such that either $u_1(\beta)>0$ or $u_1(\beta)<0$ for all $\beta \in [r,1)$. If $u_1(\beta)<0$ then $u_2(\beta)>0$, and so WLOG (we can always relabel senders) we assume $u_1(\beta)>0$ for all $\beta \in [r,1)$.  Suppose for contradiction that sender 2 plays a non-fully revealing strategy $\Gamma_2$ in some equilibrium. As $\Gamma_2 \neq \Gamma^{FR}$, $\Pr(0<\Gamma_2<1)>0$; let $\underbar{y} = \min supp[\Gamma_2]\setminus \{0,1\} \in (0,1)$ be in the smallest interior belief in the support of $\Gamma_2$. Using the definition of $\beta(x,y)$, define $\underbar{x}$ by $\beta(\underbar{x},\underbar{y}) = r$. Conditional on $\Gamma_1=x \in [\underbar{x},1)$, $\beta(x,y) \in [r,1)$ for all interior $y$ in $\Gamma_2$'s support. But then for all $x \in [\underbar{x},1)$ we have:

\begin{align*}
& W_1(x) = \underbrace{u_1(\beta(x,0))}_{=0} Pr(\Gamma_2=0|\Gamma_1=x) + \underbrace{u_1(\beta(x,1))}_{=0} Pr(\Gamma_2=1|\Gamma_1=x) + \\
& \sum_{y \in supp[\Gamma_2] \setminus \{0,1\}} \underbrace{u_1(\beta(x,y))}_{>0} \underbrace{Pr(\Gamma_2=y|\Gamma_1=x)}_{>0} > 0.\\
\end{align*}

This contradicts Lemma 1 property (2) and hence $\Gamma_2=\Gamma^{FR}$ in all equilibria.

\emph{Case 2: $q<1$.} We break this case into two subcases.

First suppose $u_1(q) \neq 0$. WLOG assume $u_1(q)>0$ (if not then $u_2(q)>0$). Suppose for contradiction $\Gamma_2 \neq \Gamma^{FR}$ in some equilibrium. Then let $r=q$ and define $\underbar{y},\underbar{x}$ as before. Again Lemma 1 prpoerty (2) is violated as:

\begin{align*}
& W_1(\underbar{x}) = \underbrace{u_1(\beta(x,0))}_{=0} Pr(\Gamma_2=0|\Gamma_1=\underbar{x}) + \underbrace{u_1(\beta(x,1))}_{=0} Pr(\Gamma_2=1|\Gamma_1=\underbar{x}) + \\
& \sum_{y \in supp[\Gamma_2] \setminus \{0,1,\underbar{y}\}} \underbrace{u_1(\beta(\underbar{x},y))}_{=0} \underbrace{Pr(\Gamma_2=y|\Gamma_1=\underbar{x})}_{>0} > 0 + \underbrace{u_1(\beta(\underbar{x},\underbar{y}))}_{=u_1(r)>0} \underbrace{Pr(\Gamma_2=\underbar{y}|\Gamma_1=\underbar{x})}_{>0} > 0.\\
\end{align*}

Now suppose $u_1(q)=u_2(q)=0$. Then by piecewise analycity of utilities either $u_1(q^-)>0$ or $u_2(q^-)>0$. WLOG assume $u_1(q^-)>0$ and suppose for contradiction $\Gamma_2 \neq \Gamma^{FR}$ in some equilibrium. Define $\underbar{y}$ as before. There exists $r<q$ and $\underbar{x}$ such that $u_1>0$ on interval $[r,q)$ and $\beta(\underbar{x},\underbar{y})=r$. Then we have $W_1(\underbar{x})>0$, violating Lemma 1 property (2).

\end{proof}

\subsection{Proof of Proposition 2}

\subsubsection{`Only if' direction.}

Suppose for some $\Omega' \subseteq \Omega$ all senders have linear utilities on $\Delta(\Omega')$. Let $x' \in \Delta(\Omega')$ with $x_k' = \frac{\pi_k}{\sum_{n \in \Omega'} \pi_n}$ $\forall k \in \Omega'$. Consider the experiment $\Gamma'$ with $Pr(\Gamma'=\delta_l) = \pi_l$ for all $l \not \in \Omega'$ and $Pr(\Gamma'=x') = \sum_{n \in \Omega'} \pi_n$. $\Gamma'$ is Bayes-plausible and has finite support. The strategy profile $(\Gamma',...,\Gamma')$ is an non-fully revealing equilibrium. To see this consider a sender $i$'s incentive to deviate. If $\omega \in \Omega'$ then $\Gamma_{-i} \in \Delta(\Omega') \implies \beta(\Gamma_1,...,\Gamma_M) \in \Delta(\Omega')$ w.p. $1$ (Claim 2); as $u_i$ is linear on $\Delta(\Omega')$, $i$ has no profitable deviation conditional on $\omega \in \Omega'$. Conditional on $\omega \not \in \Omega'$, $\Gamma_{-i}$ fully reveals the state and no deviation from $i$ can change this. 

\subsubsection{`If' direction.}

Let $\Delta^{int}(\Omega') = \{\gamma \in \Delta(\Omega'): \gamma_l > 0 \text{ } \forall l \in \Omega'\}$; this is the set of beliefs in $\Delta(\Omega')$ whose support is $\Omega'$.

We first prove the result for the case of $|\Omega'|=2$. This case is simpler than the case of $|\Omega'|>2$ and is of particular interest because Theorem 1 only relies on Proposition 2 with $|\Omega'|=2$. 

\textbf{Proof for $|\Omega'|=2$.}

\begin{proof}
WLOG let $\Omega' = \{1,2\}$. Suppose some sender $i$ has $u_i$ nonlinear on $\Delta(\{1,2\})$. For each sender $j'$ let $r'_{j'} = \sup \{t \in [0,1] :  u_{j'}(t \delta_2 + (1-t) \delta_1)>0 \}$. Let $r' = \max_{j'=1,...,M} r'_{j'}$ and $j \in$ argmax$_{j'=1,...,M} r'_{j'}$.  As $u_i$ is nonlinear there exists $\gamma \in \Delta^{int}(\{1,2\})$ with $u_i(\gamma) \neq 0$. If $u_i(\gamma)<0$ then $u_{i'}(\gamma)>0$ for some sender $i'$ (zero-sumness); otherwise $u_i(\gamma)>0$. Regardless, we have that $r'$ exists and is $>0$. 

WLOG let $j = 1$. We prove the `if' direction in 2 cases.

\emph{Case 1: $r'=1$.} $u_1$ is piecewise real analytic and so $\Delta(\{1,2\})$ can be partitioned into intervals each of which $u_1$ is real analytic on. Each $\gamma \in \Delta(\{1,2\})$ can be represented by scalar $\gamma_2$ \textemdash how close it is to $\delta_2$. For some $a \in [0,1)$, $u_1$ is real analytic on an interval $\{\gamma \in \Delta(\{1,2\}): \gamma_2 \in (a,1)\}$ (this $a \in [0,1)$ is not unique; any selection will do). This implies that there are a finite number of points (possibly zero) on $\{\gamma \in \Delta(\{1,2\}): \gamma_2 \in (a,1)\}$ at which $u_1 = 0$. As $r'=1$, there exists $r \in (a,1)$ such that $u_1(\gamma)>0$ for all $\gamma \in \Delta(\{1,2\})$ s.t. $\gamma_2 \in [r,1)$ (again, this $r$ will not be unique; any selection will do).  

Suppose, for contradiction, that some equilibrium $(\Gamma_1,...,\Gamma_M)$ pools $\{1,2\}$. Then we must must have $Pr(\Gamma_{-1}=y \text{ s.t. } y_1,y_2>0)>0$, i.e. $\Gamma_{-1}$ pools $\{1,2\}$, by Claim \ref{not_pool}.

Let $Z = \{y \in supp[\Gamma_{-1}] : y_1,y_2>0\}$; $Z$ is nonempty. For any $x \in \Delta^{int}(\{1,2\})$ and $y  \in Z$, we have $p_{-1}(y|x)>0$ and $\beta(x,y) \in \Delta^{int}(\{1,2\})$ (by Claim \ref{claim_projection} and equation \ref{beta_def}). Note that as $x \rightarrow \delta_1$ we have $\beta(x,y) \rightarrow \delta_1 \implies \beta_2(x,y) \rightarrow 0$ and as $x \rightarrow \delta_2$ we have $\beta(x,y) \rightarrow \delta_2 \implies \beta_2(x,y) \rightarrow 1$. As $Z$ is finite, this implies $\min_{y \in Z} \beta_2(x,y)$ goes to $0$ as $x \rightarrow \delta_1$ and $\min_{y \in Z} \beta_2(x,y)$ goes to $1$ as $x \rightarrow \delta_2$. As for all $y \in Z$ $\beta_2(x,y)$ is continuous in $x$ for $x \in \Delta^{int}(\{1,2\})$, $\min_{y \in Z} \beta_2(x,y)$ is also continuous in $x$ for $x \in \Delta^{int}(\{1,2\})$. By the intermediate value theorem there exists $\underbar{x} \in \Delta^{int}(\{1,2\})$ such that $\min_{y \in Z} \beta_2(\underbar{x},y) = r$.

Note that by equation \ref{beta_def}, $\beta(\underbar{x},y) =\delta_1$ for all $y \in supp[\Gamma_{-1}]$ with $y_1>0$ and $y_2=0$; similarly $\beta(\underbar{x},y) =\delta_2$ for all $y \in supp[\Gamma_{-1}]$ with $y_2>0$ and $y_1=0$. Finally by Claim \ref{claim_p_S'}, $p_{-1}(y|\underbar{x}) = 0$ for all $y \in supp[\Gamma_{-1}]$ with $y_1=y_2=0$. 

Putting this together:

\begin{align*}
& W_1(\underbar{x}) = \sum_{\substack{y \in supp[\Gamma_{-1}] \\ y_1=0,y_2>0}} \underbrace{u_1(\delta_2)}_{=0} p_{-1}(y|\underbar{x}) + \sum_{\substack{y \in supp[\Gamma_{-1}] \\ y_2=0,y_1>0}} \underbrace{u_1(\delta_1)}_{=0} p_{-1}(y|\underbar{x}) \\
& + \sum_{y \in Z} \underbrace{u_1(\beta(\underbar{x},y))}_{>0} \underbrace{p_{-1}(y|\underbar{x})}_{>0} > 0\\
\end{align*}

This contradicts Lemma 1 property (2). Hence no equilibrium can pool $\{1,2\}$.

\emph{Case 2: $r'<1$.} First, if $u_1(r')>0$, then set $r=r'$ and derive $\underbar{x}$ just as in Case 1. As in Case 1, we have $W_1(\underbar{x})>0$, violating Lemma 1 property (2). Next if $u_1(r')<0$, then some sender $i \neq 1$ must have $u_i(r')>0$ (zero-sumness); we can relabel sender $i$ to $1$ and repeat the same argument.

Next assume $u_1(r')=0$. Now for some $a \in [0,r')$, $u_1$ is real analytic on an interval $\{\gamma \in \Delta(\{1,2\}): \gamma_2 \in (a,r')\}$ (this $a \in [0,r')$ is not unique; any selection will do). This implies that there are a finite number of points (possibly zero) on $\{\gamma \in \Delta(\{1,2\}): \gamma_2 \in (a,r')\}$ at which $u_1 = 0$. This implies that there exists $r \in (a,r')$ such that $u_1(\gamma)>0$ for all $\gamma \in \Delta(\{1,2\})$ with $\gamma_2 \in [r,r')$ (again, this $r$ will not be unique; any selection will do). Note that $u_1(\gamma) \geq 0$ for all $\gamma \in \Delta(\{1,2\})$ with $\gamma_2 \geq r$. 

Suppose, for contradiction, that in some equilibrium $(\Gamma_1,...,\Gamma_M)$ pools $\{1,2\}$. We follow identical steps in defining $Z$ and $\underbar{x}$. Note that for all $y \in Z$, $\beta_k(\underbar{x},y) \in [r,1)$ $\implies$ $u_1(\beta(\underbar{x},y))\geq 0$. By the definition of $\underbar{x}$, there exists $\underbar{y} \in Z$ such that $\beta(\underbar{x},\underbar{y}) = r$ $\implies$ $u_1(\beta(\underbar{x},\underbar{y}))>0$. For all $y \in supp[\Gamma_{-1}] \setminus Z$ either $\beta(\underbar{x},y) \in \{\delta_1,\delta_2\}$ or $p_{-1}(y|\underbar{x})>0$. Hence $W_1(\underbar{x})>0$, violating Lemma 1 property (2).
\end{proof}

Now we proceed with the analysis for $|\Omega'| \geq 2$.

\textbf{General analysis.}

Suppose some $u_j$ is nonlinear on $\Delta(\Omega')$. Fix a strategy profile $(\Gamma_1,...,\Gamma_M)$ that pools $\Omega'$; let $\Gamma$ be the experiment induced by observing the realizations of all $M$ experiments and $p$ be its probability mass function. We show, via violation of Lemma 1 property (1), that $(\Gamma_1,...,\Gamma_M)$ is not an equilibrium. To do this it sufficies to identify a sender $j$ and an interim belief $\underbar{x}$ such that when $\Gamma_1,...,\Gamma_M$ are played, conditional on generating interim belief $\underbar{x}$ (from an experiment played additionally and conditionally independently to $\Gamma_j$) $j$ gets a strictly positive expected payoff: : $\mathbb{E}[u_i(\beta(\underbar{x},\Gamma))]>0$. As in the proof of Lemma 1 property (2), $j$ can then construct an experiment $\Gamma_j'$ with support on $\{\delta_1,...,\delta_N,\underbar{x}\}$ and obtain a strictly positive payoff by playing $\Gamma_j'$ in addition to, conditionally independently, $\Gamma_j$. The remainder of the proof shows that such a sender $j$ and $\underbar{x}$ exist.

For $i=1,...,M$ let $A_i = \{\gamma \in \Delta(\Omega): u_i(\gamma) > 0\}$ and $D_i = \{\gamma \in \Delta(\Omega): u_i(\gamma)<0\}$ be the sets of posteriors at which $i$ has an advantage and disadvantage respectively. Let $A = \cup_{i} A_i$ be the union of these advantage sets (also equal to the union of disadvantage sets as utilities are zero-sum) and $cl(A)$ be its closure. 

We say a subset of states $\Theta \subseteq \Omega$ ($|\Theta|>1$) is \emph{minimal} if $A \cap \Theta \neq \emptyset$ and $\Theta' \cap A = \emptyset$ for all $\Theta' \subset \Theta$. Note that if $A$ is empty, then there are no minimal subsets. Meanwhile if $A$ is nonempty, any subset of states that intersects $A$ (i.e. any set $\Theta$ for which some $u_i$ is nonlinear on $\Delta(\Theta)$) is either minimal or has a minimal subset:

\begin{claim}  \label{claim_minimal}
Every subset $\Theta \subseteq \Omega$ for which  $u_i$ (for some $i$) is nonlinear on $\Delta(\Theta)$ is either minimal or has a subset $\Theta'$ that is minimal.
\end{claim}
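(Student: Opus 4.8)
The plan is to exploit the finiteness of $\Omega$ and run a standard inclusion-minimal argument, after first reconciling the two equivalent ways of saying that a set of states carries an advantage. Concretely, I would first record that for any $\Theta \subseteq \Omega$, some $u_i$ is nonlinear on $\Delta(\Theta)$ if and only if $A \cap \Delta(\Theta) \neq \emptyset$. One direction uses only the normalization $u_i(\delta_l)=0$: an affine function on $\Delta(\Theta)$ vanishing at every vertex $\delta_l$, $l \in \Theta$, is identically $0$, so the existence of $\gamma \in \Delta(\Theta)$ with $u_i(\gamma)>0$ forces $u_i$ to be non-affine. The converse additionally uses zero-sumness: if $u_i$ is non-affine on $\Delta(\Theta)$ then, being $0$ at the vertices, it is not identically $0$, so $u_i(\gamma) \neq 0$ for some $\gamma \in \Delta(\Theta)$; if $u_i(\gamma)<0$ then $\sum_j u_j(\gamma)=0$ yields $u_j(\gamma)>0$ for some $j$, so in either case $\gamma \in A$. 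Hence I may work throughout with the single condition $A \cap \Delta(\Theta) \neq \emptyset$.

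Given $\Theta$ with $A \cap \Delta(\Theta) \neq \emptyset$, I would consider the collection $\mathcal{C} = \{\Theta'' \subseteq \Theta : A \cap \Delta(\Theta'') \neq \emptyset\}$, which is finite (as $\Omega$ is finite) and nonempty (it contains $\Theta$). I then pick any inclusion-minimal element $\Theta^*$ of $\mathcal{C}$, i.e. $\Theta^* \in \mathcal{C}$ with no $\Theta'' \in \mathcal{C}$ satisfying $\Theta'' \subsetneq \Theta^*$; such an element exists by finiteness.

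The remaining steps verify that $\Theta^*$ meets the definition of minimal. First, $|\Theta^*|>1$: for any single state $l$, $\Delta(\{l\}) = \{\delta_l\}$ and $u_i(\delta_l)=0$ for every $i$ by the normalization, so $\delta_l \notin A$ and thus no singleton (nor the empty set) lies in $\mathcal{C}$; since $\Theta^* \in \mathcal{C}$, it contains at least two states. Second, by inclusion-minimality no proper subset $\Theta' \subsetneq \Theta^*$ belongs to $\mathcal{C}$, and since every such $\Theta'$ satisfies $\Theta' \subseteq \Theta$ this forces $A \cap \Delta(\Theta') = \emptyset$ for all $\Theta' \subsetneq \Theta^*$. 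Together with $A \cap \Delta(\Theta^*) \neq \emptyset$, these are exactly the defining conditions, so $\Theta^*$ is minimal. Finally, either $\Theta^* = \Theta$, so $\Theta$ itself is minimal, or $\Theta^* \subsetneq \Theta$, exhibiting a minimal subset of $\Theta$; this is the claimed dichotomy.

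I do not expect a genuine analytic obstacle here: once $A$ is fixed, the argument is pure finite combinatorics. The only points demanding care are the bookkeeping items flagged above — reconciling the definition's shorthand $A \cap \Theta$ with the simplex face $A \cap \Delta(\Theta)$ it abbreviates, and confirming via the normalization that no singleton can carry an advantage, since this is precisely what guarantees that the inclusion-minimal element has more than one state and hence qualifies as minimal.
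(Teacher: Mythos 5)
Your proof is correct and follows essentially the same route as the paper's: translate nonlinearity on $\Delta(\Theta)$ into $A \cap \Delta(\Theta) \neq \emptyset$, then use finiteness of $\Omega$ to extract a minimal element (the paper phrases this as an iterative descent, you as selecting an inclusion-minimal member of a finite family, which is the same argument). If anything, yours is more complete, since the paper asserts the nonlinearity/advantage equivalence without proof and never checks that the terminal set has more than one state, both of which you verify via the normalization $u_i(\delta_l)=0$ and zero-sumness.
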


\begin{proof}
If for some $i$ $u_i$ is nonlinear on $\Delta(\Theta)$, then $A \cap \Delta(\Theta) \neq \emptyset$. Either $\Theta$ is minimal, or there exists a subset $\Theta' \subset \Theta$ that intersects $A$. Now set $\Theta = \Theta'$ and repeat this process until $\Theta$ is minimal; it must be minimal at some point because $\Omega$ is finite and states are removed from $\Theta$ each iteration. 
\end{proof}

By Claim \ref{claim_minimal}, it is sufficient to prove Proposition 2 for minimal $\Omega'$ alone. If all minimal sets cannot be pooled in equilbirium, then any set on which there are nonlinear sender preferences cannot be pooled, as all such sets have a minimal subset. \emph{Henceforth we assume $\Omega'$ is minimal.}

Let $|\Omega'|=K \leq N$ and WLOG let $\Omega' = \{1,...,K\}$. 

It is convenient for us to represent any belief $\gamma \in \Delta(\Omega')$ by the ratios $(r_1(\gamma),...,r_{K-1}(\gamma)) \in (\mathbb{R_+} \cup \{\infty\})^{K-1}$, where for $k=1,...,K-1$: (1) $r_k(\gamma) = \frac{\gamma_k}{1 - \sum_{l=1}^k \gamma_l}$ when $1 - \sum_{l=1}^k \gamma_l$ is nonzero, (2) when this doesn't hold $r_k(\gamma)=\infty$ if $\gamma_k>0$ and $r_k(\gamma)=0$ if $\gamma_k=0$. We call this the ratio representation of $\gamma$. The ratio $r_k(\gamma)$ tells us the ratio of probability mass assigned to state $k$ by $\gamma$ to the mass assigned to states $k+1,...,K$.

\begin{lemma} \label{prop2lemma1}
Note for any $\gamma, \gamma' \in \Delta(\Omega')$ we have $r_k(\gamma) = r_k(\gamma')$ for all $k=1,...,K-1$ if and only if $\gamma = \gamma'$; that is, ratio representations for beliefs in $\Delta(\Omega')$ are unique.
\end{lemma}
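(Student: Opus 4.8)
The reverse implication is immediate: each $r_k$ is by construction a well-defined function of $\gamma$, so $\gamma=\gamma'$ forces $r_k(\gamma)=r_k(\gamma')$ for every $k$. The content is the forward implication, i.e.\ that the map $\gamma \mapsto (r_1(\gamma),\dots,r_{K-1}(\gamma))$ is injective on $\Delta(\Omega')$. My plan is to invert this map, reconstructing $\gamma$ from its ratios, which I will organize as an induction on $K=|\Omega'|$.

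The single-ratio fact I will use repeatedly is that $r_1$ alone pins down $\gamma_1$. Writing $s_2(\gamma)=1-\gamma_1=\sum_{l=2}^K \gamma_l$ for the tail mass, the map $\gamma_1 \mapsto r_1$ is $\gamma_1 \mapsto \gamma_1/(1-\gamma_1)$ on $[0,1)$, which is strictly increasing from $0$ to $\infty$, together with the convention $r_1=\infty$ exactly at $\gamma_1=1$. Hence $\gamma_1 \in [0,1] \mapsto r_1 \in [0,\infty]$ is a bijection, and in particular $r_1(\gamma)=r_1(\gamma')$ implies $\gamma_1=\gamma_1'$. This settles the base case $K=2$ outright, since there $\gamma=(\gamma_1,1-\gamma_1)$ is determined by $\gamma_1$.

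For the inductive step, assume the claim for $K-1$ and take $\gamma,\gamma'\in\Delta(\Omega')$ with equal ratio vectors. By the previous paragraph $\gamma_1=\gamma_1'$. If this common value is $1$ then $\gamma=\gamma'=\delta_1$ and we are done, so suppose $\gamma_1<1$ and pass to the conditional beliefs $\tilde\gamma,\tilde\gamma'\in\Delta(\{2,\dots,K\})$ defined by $\tilde\gamma_l=\gamma_l/(1-\gamma_1)$. The point is that for every $k\ge 2$ the ratio $r_k(\gamma)=\gamma_k/\sum_{l>k}\gamma_l$ is homogeneous of degree zero in $(\gamma_2,\dots,\gamma_K)$, so dividing numerator and denominator by $1-\gamma_1$ gives $r_k(\gamma)=\tilde r_{k-1}(\tilde\gamma)$, where $\tilde r$ denotes the ratio representation on the $(K-1)$-state set $\{2,\dots,K\}$. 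Thus $\tilde\gamma$ and $\tilde\gamma'$ have identical ratio vectors, so by the inductive hypothesis $\tilde\gamma=\tilde\gamma'$; rescaling by the common factor $1-\gamma_1$ recovers $\gamma=\gamma'$.

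The only real care needed is in the degenerate cases where denominators vanish and the $\infty/0$ conventions take over — concretely, verifying that $r_1=\infty$ corresponds to the single belief $\delta_1$ (so the induction never needs the conditional distribution in that case) and, symmetrically, that whenever a tail mass $\sum_{l>k}\gamma_l$ is zero the remaining ratios are forced to $0$ consistently with $\gamma_{k+1}=\dots=\gamma_K=0$. These are the steps I expect to be the main (though routine) obstacle; everything else is the elementary algebra of the tail-sum recursion $s_k=(1+r_k)\,s_{k+1}$ with $s_1=1$, which one could alternatively use to write down the closed-form inverse $\gamma_k = s_k\,r_k/(1+r_k)$ directly.
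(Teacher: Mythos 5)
Your proof is correct, and it reaches the same destination as the paper by a somewhat different organization. The paper inverts the ratio map directly in one pass: it splits into two global cases (all $r_k$ finite, versus $r_{k''}=\infty$ for some $k''$), and in each case reconstructs the coordinates sequentially \textemdash $r_1$ pins down $\gamma_1 = r_1/(1+r_1)$, then $r_2$ pins down $\gamma_2$, and so on, with the infinite-ratio case handled by noting that $r_{k''}=\infty$ forces $\gamma_k=0$ for all $k>k''$ and the remaining mass is fixed by the simplex constraint. You instead run an induction on $K$, using the observation that for $\gamma\in\Delta(\Omega')$ the denominator $1-\sum_{l\le k}\gamma_l$ equals the tail mass $\sum_{l>k}\gamma_l$, so the ratios $r_2,\dots,r_{K-1}$ of $\gamma$ coincide (including the $\infty/0$ conventions, since scaling by $1-\gamma_1>0$ preserves them) with the ratio representation of the conditional belief on $\{2,\dots,K\}$. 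The tradeoff: the paper's flat reconstruction hands you the closed-form inverse immediately \textemdash essentially your formula $\gamma_k = s_k r_k/(1+r_k)$ with $s_{k+1}=s_k/(1+r_k)$ \textemdash whereas your induction absorbs all the degenerate infinite-ratio bookkeeping into the recursion, handling it once (at the top coordinate, $\gamma_1=1$ versus $\gamma_1<1$) per level rather than through a separate global case. Both are complete; your renormalization step is also the cleaner template if one wanted to reuse the lemma on conditional beliefs, which is in fact how the ratios get used later in the paper's Lemma A.3.
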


\begin{proof}
The `if' direction is trivial; we prove the `only if' direction as follows. First suppose $r_k(\gamma) < \infty$ for all $k=1,...,K-1$. This implies that $1 - \sum_{l=1}^k \gamma_l$ is nonzero for all $k$ (or else, let $k'$ be the minimum $k$ for which  $1 - \sum_{l=1}^k \gamma_l = 0$; but then we must have $\gamma_{k'}>0 \implies r_{k'}(\gamma) = \infty$ \textemdash contradiction). But then from its definition, $r_1$ uniquely pins down $\gamma_1$ ($\gamma_1 = \frac{r_1(\gamma)}{1+ r_1(\gamma)}$), after which $r_2$ pins down $\gamma_2$, ..., $r_{K-1}$ pins down $\gamma_{K-1}$, and $\gamma_K$ is pinned down by $1 = \sum_{l=1}^K \gamma_l$). Now suppose $r_{k''}(\gamma) = \infty$ for some $k''$. Note that this implies $\gamma_{k} = 0 $ for all $k>k''$; further, $1 - \sum_{l=1}^k \gamma_l > 0$ for all $k<k''$ and hence $r_{k}(\gamma)< \infty$ for all $k<k''$. Then $\gamma_1,...,\gamma_{k'' - 1}$ are uniquely pinned down by using the definitions of $r_1(\gamma),...,r_{k''-1}(\gamma)$ (just as in the previous case). $\gamma_{k''}$ is pinned down by $1 = \sum_{l=1}^K \gamma_l$.
\end{proof}

The the continuity of $r_k(\gamma)$ on part of $\Delta(\Omega')$ will be useful later:

\begin{claim} \label{claim_rcont}
For $k=1...,K-1$, $r_k(\gamma)$ is continuous in $\gamma$ for $\gamma \in \Delta(\{k,...,K\}) \setminus \{\delta_k\}$.
\end{claim}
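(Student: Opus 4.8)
The plan is to observe that on the restricted domain $\Delta(\{k,\dots,K\})$ the ostensibly piecewise definition of $r_k$ collapses to a single rational expression whose denominator stays strictly positive once $\delta_k$ is removed. The whole point of the claim is that restricting attention to $\Delta(\{k,\dots,K\})\setminus\{\delta_k\}$ is precisely what is needed to avoid the $\infty$ branch of the definition, so continuity becomes automatic.

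First I would use the definition of the domain: for any $\gamma \in \Delta(\{k,\dots,K\})$ we have $\gamma_1 = \cdots = \gamma_{k-1} = 0$, so that $\sum_{l=1}^{k}\gamma_l = \gamma_k$. Substituting into the definition of the ratio, the denominator becomes
\[
1 - \sum_{l=1}^{k}\gamma_l = 1 - \gamma_k .
\]
Next I would check when this denominator vanishes. Since $\gamma$ is a probability vector supported on $\{k,\dots,K\}$, we have $1 - \gamma_k = 0$ if and only if $\gamma_k = 1$, i.e.\ if and only if $\gamma = \delta_k$. Therefore on $\Delta(\{k,\dots,K\})\setminus\{\delta_k\}$ we have $\gamma_k < 1$, hence $1-\gamma_k > 0$, so the case that sends $r_k$ to $\infty$ or to $0$ by fiat never arises, and throughout the domain
\[
r_k(\gamma) = \frac{\gamma_k}{1 - \gamma_k}.
\]

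Finally I would conclude continuity: $\gamma \mapsto \gamma_k$ is linear and $\gamma \mapsto 1-\gamma_k$ is affine in $\gamma$, and the latter is bounded away from $0$ on any point of the domain, so their quotient is continuous on $\Delta(\{k,\dots,K\})\setminus\{\delta_k\}$. There is no real obstacle here; the only thing requiring care is the verification that the denominator does not vanish, which is exactly the role played by excluding $\delta_k$ (and by restricting to $\Delta(\{k,\dots,K\})$ rather than all of $\Delta(\Omega')$, where the earlier coordinates $\gamma_1,\dots,\gamma_{k-1}$ could otherwise push $1-\sum_{l=1}^k\gamma_l$ to zero). Once that is observed, the piecewise definition is irrelevant on this set and continuity is immediate.
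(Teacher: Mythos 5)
Your proof is correct and follows essentially the same route as the paper's: restrict to $\Delta(\{k,\dots,K\})$ so the denominator reduces to $1-\gamma_k$, note that excluding $\delta_k$ keeps it strictly positive, and conclude continuity of the resulting quotient. The paper's own proof is just a terser version of exactly this argument.
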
 

\begin{proof}
$r_k(\gamma) = \frac{\gamma_k}{1- \sum_{l=1}^k} \gamma_k$. As $\gamma \in \Delta(\{k,...,K\})$, the denominator is strictly positive when $\gamma_k<1$ and so $r_k(\gamma)$ is continuous in $\gamma$ on this domain. 
\end{proof}

The following simple results will be useful.

\begin{lemma} \label{lemma_operation}
Suppose $K>2$. For any $1<L<K$, let $x \in \Delta(\{L,...,K\})$, $x' \in \Delta(\{1,...,L-1\})$ and $y \in \Delta(\Omega)$. If $\beta(x,y)$ is well defined,\footnote{From its definition, one can see $\beta(x,y)$ is only not well defined when $y$ assigns probability $0$ to every state that $x$ assigns strictly positive probability to.} then $r_k(\beta(\lambda x' + (1-\lambda)x,y)) = r_k(\beta(x,y))$ for all $k=L,...,K-1$ and $\lambda \in [0,1)$.
\end{lemma}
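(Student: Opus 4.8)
The plan is to reduce the statement to the explicit two-experiment posterior formula (Equation \ref{beta_def} with $M=2$) and then exploit the fact that each ratio $r_k$ with $k \geq L$ depends only on the relative posterior weights among the states $\{L,\dots,K\}$. First I would set $z = \lambda x' + (1-\lambda)x$ and record the key structural observation: since $x \in \Delta(\{L,\dots,K\})$ and $x' \in \Delta(\{1,\dots,L-1\})$ have disjoint supports, we have $z_l = (1-\lambda)x_l$ for every $l \in \{L,\dots,K\}$ and $z_l = \lambda x'_l$ for $l<L$. In other words, mixing in $x'$ merely rescales the mass that $x$ places on the "high" states $\{L,\dots,K\}$ by the common factor $(1-\lambda)>0$, while leaving their relative proportions untouched and adding mass only to low states. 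The case $\lambda=0$ is immediate since then $z=x$, so I would assume $\lambda \in (0,1)$.

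Next I would apply $\beta_l(z,y) = \frac{z_l y_l/\pi_l}{\sum_m z_m y_m/\pi_m}$ and its analogue for $\beta(x,y)$. Writing $D_z = \sum_m z_m y_m/\pi_m$ and $D_x = \sum_m x_m y_m/\pi_m$ for the (common) normalizing denominators, the observation above yields, for every state $l \geq L$, the proportionality $\beta_l(z,y) = \frac{(1-\lambda)D_x}{D_z}\,\beta_l(x,y) =: c\,\beta_l(x,y)$, where $c>0$ is a constant independent of $l$. Before this step I would verify that $\beta(z,y)$ is well defined whenever $\beta(x,y)$ is: well-definedness of $\beta(x,y)$ means $D_x>0$, and since $x_m=0$ for $m<L$ we get $D_z \geq \sum_{m\geq L} z_m y_m/\pi_m = (1-\lambda)D_x>0$; moreover $z \in \Delta(\Omega')$, so Claim \ref{claim_projection} guarantees $\beta(z,y)\in\Delta(\Omega')$ and the ratio representation applies to it.

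The conclusion then follows because for $k \in \{L,\dots,K-1\}$ the ratio $r_k(\gamma)=\frac{\gamma_k}{\sum_{l>k}\gamma_l}$ involves only the coordinates $\gamma_k$ and $\gamma_{k+1},\dots,\gamma_K$, all of which carry index $\geq L$. Hence both numerator and denominator of $r_k(\beta(z,y))$ equal $c$ times the corresponding quantities for $\beta(x,y)$; the factor $c$ cancels and $r_k(\beta(z,y))=r_k(\beta(x,y))$. The only point requiring care — and the closest thing to an obstacle — is the degenerate case $\sum_{l>k}\beta_l(x,y)=0$, where $r_k$ is fixed by the convention ($\infty$ if $\gamma_k>0$, $0$ if $\gamma_k=0$) rather than by an honest quotient. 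There I would argue directly: $c>0$ forces $\sum_{l>k}\beta_l(z,y)=c\sum_{l>k}\beta_l(x,y)=0$ as well, and $\beta_k(z,y)>0 \iff \beta_k(x,y)>0$, so the same branch of the definition applies to both beliefs and they again agree.
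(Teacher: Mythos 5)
Your proof is correct and follows essentially the same route as the paper's: both rest on the observation that mixing in $x'$ rescales the coordinates $l\geq L$ by the common factor $(1-\lambda)$ (since $x'$ places no mass there), so that $r_k$, which for $k\geq L$ depends only on those coordinates, is unchanged; you merely package the cancellation as a coordinate-wise proportionality constant $c>0$ rather than cancelling $(1-\lambda)$ inside the written-out ratio. Your explicit treatment of well-definedness of $\beta(\lambda x'+(1-\lambda)x,y)$ and of the degenerate $0/\infty$ branches matches (and slightly tightens) the paper's handling of the same edge cases.
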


\begin{proof} For any $n \in \Omega$,

\begin{align*}
& \beta_n(\lambda x' + (1-\lambda)x,y) = \frac{\frac{(\lambda x_n' + (1-\lambda)x_n)y_n}{\pi_n}}{\sum_{n' =1}^N \frac{(\lambda x_{n'}' + (1-\lambda)x_{n'})y_{n'}}{\pi_{n'}}}
\end{align*} 

For $k \geq L$:

\begin{align*}
&r_k(\beta((\lambda x' + (1-\lambda)x),y)) = \frac{\beta_k((\lambda x' + (1-\lambda)x),y)}{\sum_{n=k+1}^N \beta_n((\lambda x' + (1-\lambda)x),y)} = \frac{(\lambda x_{k}' + (1-\lambda)x_{k})y_k/\pi_k}{\sum_{n=k+1}^N (\lambda x_{n}' + (1-\lambda)x_{n})y_n/\pi_n} \\
& = \frac{(1-\lambda)x_k y_k /\pi_k}{(1-\lambda)\sum_{n=k+1}^N x_n y_n /\pi_n} =  \frac{x_k y_k /\pi_k}{\sum_{n=k+1}^N x_n y_n /\pi_n}
\end{align*}

whenever the denominator is nonzero; when the denominator is nonzero, this expression is equal to $r_k(\beta(x,y))$. When the denominator and numerator are zero, $r_k(\beta(\lambda x' + (1-\lambda)x,y)) = r_k(\beta(x,y))=0$ and when the denominator is zero and the numerator is nonzero, $r_k(\beta(\lambda x' + (1-\lambda)x,y)) = r_k(\beta(x,y))=\infty$.

\end{proof}

\begin{claim} \label{claim_convex_comb}
Suppose $K>2$. For any $1<L<K$, let $x \in \Delta(\{L,...,K\})$, $x' \in \Delta(\{1,...,L-1\})$ and $y \in \Delta(\Omega)$. If $\beta(x,y)$ and $\beta(x',y)$ are well defined then $\beta_k(\lambda x' + (1-\lambda)x,y) = \lambda \beta_k(x',y) + (1-\lambda)\beta_k(x,y)$ for all $\lambda \in [0,1]$, $k=1,...,N$.
\end{claim}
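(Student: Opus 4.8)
The plan is to prove the coordinate identity by direct substitution into the updating formula (\ref{beta_def}), using crucially that $x'$ and $x$ have \emph{disjoint} supports, namely $\{1,\dots,L-1\}$ and $\{L,\dots,K\}$. Setting $z := \lambda x' + (1-\lambda)x$, disjointness gives $z_k = \lambda x'_k$ for $k \le L-1$, $z_k = (1-\lambda)x_k$ for $L \le k \le K$, and $z_k = 0$ for $k > K$, so that no coordinate of $z$ blends an $x'$-term with an $x$-term. I would first record these coordinates together with the two vanishing patterns ($x_k = 0$ on the low block and $x'_k = 0$ on the high block), since they force the claimed right-hand side to collapse to a single term on each block: $\lambda\beta_k(x',y)$ when $k \le L-1$ and $(1-\lambda)\beta_k(x,y)$ when $L \le k \le K$.

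Next I would isolate the normalizing denominator of $\beta(z,y)$. Because the supports are disjoint, the sum $\sum_n z_n y_n/\pi_n$ separates with no cross terms into $\lambda S' + (1-\lambda)S$, where $S' := \sum_{n \le L-1} x'_n y_n/\pi_n$ and $S := \sum_{L \le n \le K} x_n y_n/\pi_n$ are exactly the denominators that appear in $\beta(x',y)$ and $\beta(x,y)$ respectively; both are strictly positive by the well-definedness hypothesis. This rewrites the single normalizer of the left-hand side purely in terms of the two normalizers living on the right, and it is the identity on which everything else hinges.

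With the coordinates and the denominator in hand the verification is a block-by-block substitution: on each nontrivial block one inserts $z_k$ and the denominator $\lambda S' + (1-\lambda)S$ and collects terms (the block $k > K$ is trivial). The hard part, and the only step that is not mechanical bookkeeping, will be reconciling the \emph{single} aggregated normalizer $\lambda S' + (1-\lambda)S$ of $\beta(z,y)$ against the \emph{separate} normalizers $S'$ and $S$ sitting inside $\beta(x',y)$ and $\beta(x,y)$: I must show that the block numerators interact with this aggregated normalizer so as to reproduce the affine combination with weights $\lambda,1-\lambda$, rather than the convex combination with reweighted coefficients $\tfrac{\lambda S'}{\lambda S' + (1-\lambda)S}$ that a naive grouping of the fractions would yield. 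This is the precise place where the disjoint-support decomposition must be exploited in full, and it is the step I would write out in complete detail. As a consistency check I would confirm the resulting expression against Lemma \ref{lemma_operation}, which already pins down the ratios $r_k(\beta(z,y))$ for $k \ge L$ and must agree with whatever the coordinatewise computation produces.
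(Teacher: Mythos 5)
Your setup is correct, and your instinct about where the danger lies is exactly right \textemdash but the step you deferred cannot be carried out, because the claim as stated is false. Writing $z=\lambda x'+(1-\lambda)x$, $S'=\sum_{n} x'_n y_n/\pi_n$ and $S=\sum_n x_n y_n/\pi_n$, the substitution you describe terminates in
\begin{equation*}
\beta_k(z,y)=\frac{\lambda S'}{\lambda S'+(1-\lambda)S}\,\beta_k(x',y)+\frac{(1-\lambda)S}{\lambda S'+(1-\lambda)S}\,\beta_k(x,y),
\end{equation*}
i.e.\ precisely the reweighted convex combination you hoped to rule out. Disjointness of the supports is what makes the numerator a single term on each block and splits the denominator into $\lambda S'+(1-\lambda)S$, but it provides no mechanism for trading that aggregated normalizer for the separate normalizers $S'$ and $S$; that exchange would require $S=S'$, which nothing in the hypotheses delivers. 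A concrete counterexample: $N=K=3$, $L=2$, uniform prior, $x'=\delta_1$, $x=(0,\tfrac12,\tfrac12)$, $y=(\tfrac12,\tfrac14,\tfrac14)$, $\lambda=\tfrac12$. Then $\beta(x',y)=\delta_1$ and $\beta(x,y)=(0,\tfrac12,\tfrac12)$, so the claimed right-hand side has first coordinate $\tfrac12$, whereas $\beta_1(z,y)=\frac{3/4}{9/8}=\tfrac23$. (For what it is worth, the paper's own proof consists of the words ``Simple algebra,'' which silently commits exactly the slip you flagged; your blind attempt located the flaw the paper glossed over, even though you expected it to be repairable.)

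Two further remarks. First, your proposed sanity check against Lemma \ref{lemma_operation} would not have caught this: the ratios $r_k(\cdot)$ for $k\ge L$ are invariant to uniformly rescaling the $\{L,\dots,K\}$ block, so they are consistent with the identity under \emph{any} weight on the two blocks \textemdash $\lambda$ and $\tilde\lambda=\lambda S'/(\lambda S'+(1-\lambda)S)$ alike. Second, the damage is local: since $S,S'>0$ by the well-definedness hypothesis, $\tilde\lambda(\lambda)$ is continuous and strictly increasing on $[0,1]$ with $\tilde\lambda(0)=0$ and $\tilde\lambda(1)=1$ (its derivative is $S'S/(\lambda S'+(1-\lambda)S)^2>0$), so $\beta(\lambda x'+(1-\lambda)x,y)$ still traverses the entire segment from $\beta(x,y)$ to $\beta(x',y)$ as $\lambda$ runs over $[0,1]$. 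That reweighted statement is all the continuity and intermediate-value steps in the proof of Lemma \ref{prop2_lemma_interior} actually require, so the correct fix is to restate the claim with weight $\tilde\lambda$ in place of $\lambda$ \textemdash not to search for a cleverer exploitation of the disjoint supports, which does not exist.
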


\begin{proof}
Simple algebra.
\end{proof}

Let $Z = \{z \in supp[\Gamma]  \text{ : } z_n > 0\ \text{ for all } n \in \Omega'\}$. Note $Z$ is nonempty as $(\Gamma_1,...,\Gamma_M)$ pools $\Omega'$. For $k=1,...,K-1$ and $x \in \Delta^{int}(\Omega')$ define $M_x(k)$ recursively starting with $k=K-1$:

\begin{equation} \label{M_N-1}
M_x(K-1) = \text{argmin}_{z \in Z} r_{K-1}(\beta(x,z))
\end{equation} 

$M_x(K-1)$ is nonempty as $Z$ is. For $k<K-1$, let $M_x(k) =$ argmin$_{z \in M_x(k+1)} r_k(\beta(x,z))$; these sets are nonempty for all $k$. For $k=1,...,K-1$ define $m_x(k)$ by: pick $z \in M_x(k)$ and let $m_x(k) = r_k(\beta(x,z))$. $m_x(k)$ is well defined for all $k$.

$M_x(K-1)$ gives the set of realizations of $\Gamma$ that, conditional on interim belief $x$ being realized from a different experiment, would induce the lowest $r_{K-1}$ ratio of posteriors among those in $Z$. $m_{x}(K-1)$ gives the value of this lowest $r_{K-1}$ ratio. $M_{x}(K-2)$ gives the subset of $M_{x}(K-1)$ that would result in lowest $r_{K-2}$ ratio of posteriors conditional on $x$ being realized and $m_{x}(K-2)$ gives this value, etc.

Note any $z \in M_{x}(1)$ must satisfy:

\begin{equation} \label{M1_singleton}
r_k(\beta(x,z))=m_{x}(k) \text{ for all } k=1,...,K-1
\end{equation}

As $\beta(x,y) \in \Delta(\Omega')$ ($x \in \Delta^{int}(\Omega')$ and Claim \ref{claim_projection}), by Lemma \ref{prop2lemma1}, ratios $m_{x}(1),...,m_{x}(K-1)$ uniquely pin down the value of $\beta(x,z)$ for all $z \in M_{x}(1)$. If we have $|M_{x}(1)|>1$, this means that multiple realizations of $\Gamma$, $z \neq z'$, produce the same posterior conditional on $x$. This is possible when $x$ assigns probability $0$ to states $z,z'$ do not \textemdash $z$ and $z'$ differing on these states may not affect the posterior. 

Using the objects introduced above, we finish proving Proposition 2 in two cases. In the first case, $cl(A) \cap \Omega'' = \emptyset$ for all $\Omega'' \subsetneq \Omega'$. This case include the example in the main Appendix in the text of paper; the same logic generalizes. The second case to consider is $cl(A) \cap \Omega'' \neq \emptyset$ for some $\Omega'' \subsetneq \Omega'$.

\textbf{Case 1: $cl(A) \cap \Omega'' = \emptyset$ for all $\Omega'' \subsetneq \Omega'$.}

\begin{lemma} \label{prop2_lemma_interior}
Suppose $cl(A) \cap \Delta(\Omega'') = \emptyset$ for all $\Omega'' \subsetneq \Omega'$. Then there exists $x^* \in \Delta(\Omega')$ and $\bar{\beta} \in cl(A)$ such that for all $y \in Z$ either: (1) $\beta(x^*,y) \not \in cl(A)$ or (2) $\beta(x^*,y) = \bar{\beta}$.
\end{lemma}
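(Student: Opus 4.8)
The plan is to carry out, for general $K = |\Omega'|$, the construction sketched for the $N=3$ example in the main appendix, iterating it across all $K-1$ ratio coordinates. Throughout I use that the hypothesis $cl(A) \cap \Delta(\Omega'') = \emptyset$ for every proper $\Omega'' \subsetneq \Omega'$ forces $cl(A) \cap \Delta(\Omega') \subseteq \Delta^{int}(\Omega')$: any $\gamma \in cl(A) \cap \Delta(\Omega')$ with some $\gamma_n = 0$, $n \in \Omega'$, would lie in $\Delta(\Omega' \setminus \{n\})$, contradicting the hypothesis. Consequently each ratio $r_k$ is finite, strictly positive, and (by Claim \ref{claim_rcont}) continuous on the compact set $cl(A) \cap \Delta(\Omega')$.

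First I would define $\bar\beta$ by lexicographic maximization of the ratios from the coarsest coordinate down. Put $B_K = cl(A) \cap \Delta(\Omega')$, and for $k = K-1,\dots,1$ recursively set $\bar r_k = \max_{\gamma \in B_{k+1}} r_k(\gamma)$ and $B_k = \{\gamma \in B_{k+1} : r_k(\gamma) = \bar r_k\}$; each $B_k$ is compact and nonempty, and each $\bar r_k$ lies in $(0,\infty)$ by the previous paragraph. By uniqueness of ratio representations (Lemma \ref{prop2lemma1}), $B_1$ is a singleton $\{\bar\beta\}$, with $\bar\beta \in cl(A) \cap \Delta^{int}(\Omega')$ and $r_k(\bar\beta) = \bar r_k$ for all $k$.

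Next I would construct $x^*$ so that $m_{x^*}(k) = \bar r_k$ for every $k$, building it up by nested convex combinations and invoking Lemma \ref{lemma_operation} to protect the coordinates already fixed. Start on the edge: since $\min_{y \in Z} r_{K-1}(\beta(x,y)) \to \infty$ as $x \to \delta_{K-1}$ and $\to 0$ as $x \to \delta_K$ along $\Delta(\{K-1,K\})$, and is continuous in between, the intermediate value theorem gives $\xi_{K-1} \in \Delta^{int}(\{K-1,K\})$ with $\min_{y \in Z} r_{K-1}(\beta(\xi_{K-1},y)) = \bar r_{K-1}$; let $N_{K-1}$ be the corresponding argmin. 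Then for $j = K-2,\dots,1$ set $\xi_j = \lambda_j \delta_j + (1-\lambda_j)\xi_{j+1}$ with $\xi_{j+1} \in \Delta(\{j+1,\dots,K\})$; Lemma \ref{lemma_operation} (with $L = j+1$) guarantees $r_k(\beta(\xi_j,y)) = r_k(\beta(\xi_{j+1},y))$ for all $k \geq j+1$ and all $y$, so the ratios and argmin sets $N_k$ ($k > j$) fixed earlier are untouched as $\lambda_j$ varies. As $\lambda_j \to 1$ we have $\beta(\xi_j,y) \to \delta_j$ (each $y \in N_{j+1} \subseteq Z$ has $y_j > 0$), so $\min_{y \in N_{j+1}} r_j(\beta(\xi_j,y)) \to \infty$; as $\lambda_j \to 0$, $\beta_j(\xi_{j+1},y) = 0$ drives this minimum to $0$. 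Continuity and the intermediate value theorem then yield $\lambda_j \in (0,1)$ with $\min_{y \in N_{j+1}} r_j(\beta(\xi_j,y)) = \bar r_j \in (0,\infty)$; set $N_j$ to be the argmin. With $x^* = \xi_1 \in \Delta^{int}(\Omega')$, ratio preservation gives $M_{x^*}(k) = N_k$ and $m_{x^*}(k) = \bar r_k$ for all $k$.

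Finally I would verify the dichotomy for $y \in Z$. If $y \in M_{x^*}(1)$, then $\beta(x^*,y)$ has every ratio equal to $\bar r_k$, so $\beta(x^*,y) = \bar\beta$ by Lemma \ref{prop2lemma1}, giving case (2). Otherwise let $k$ be the largest index with $y \notin M_{x^*}(k)$; by the nested recursion $r_{k'}(\beta(x^*,y)) = \bar r_{k'}$ for all $k' > k$ while $r_k(\beta(x^*,y)) > m_{x^*}(k) = \bar r_k$. If $\beta(x^*,y)$ were in $cl(A)$, the equalities would place it in $B_{k+1}$, whence $r_k(\beta(x^*,y)) \leq \max_{B_{k+1}} r_k = \bar r_k$, a contradiction; thus $\beta(x^*,y) \notin cl(A)$, giving case (1). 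The main obstacle is the middle step: one must confirm that the staged convex-combination construction never disturbs the ratio coordinates already pinned down, which is exactly the role of Lemma \ref{lemma_operation}, while also checking the correct $0$-to-$\infty$ limiting behavior of each conditional minimum so that every intermediate-value step lands precisely on the target $\bar r_k$.
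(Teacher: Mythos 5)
Your proposal is correct and follows essentially the same route as the paper's proof: the same lexicographic maximization of the ratios $r_{K-1},\dots,r_1$ over $cl(A)\cap\Delta(\Omega')$ to pin down $\bar\beta$ (the paper's sets $E(k)$ and values $e(k)$ are your $B_k$ and $\bar r_k$), the same base case on the edge $\Delta(\{K-1,K\})$ followed by staged convex combinations $\lambda_j\delta_j+(1-\lambda_j)\xi_{j+1}$ with Lemma \ref{lemma_operation} preserving the already-fixed ratios and the intermediate value theorem hitting each target, and the same final dichotomy via the nested argmin sets $M_{x^*}(k)$. Your write-up is, if anything, slightly more explicit than the paper's in verifying that a $y\notin M_{x^*}(1)$ forces $\beta(x^*,y)\notin cl(A)$, but the argument is the same.
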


\begin{proof}
Define the point $\bar{\beta} \in cl(A) \cap \Delta(\Omega')$ as follows. 
Let $E(K-1)=\argmax_{\gamma \in cl(A) \cap \Delta(\Omega')} r_{K-1}(\gamma)$ and $e(K-1)=  \max_{\gamma \in cl(A) \cap \Delta(\Omega')} r_{K-1}(\gamma)$. For $k=1,...,K-2$, let $E(k)=\argmax_{\gamma \in E(k+1)} r_{k}(\gamma)$ and $e(k)=  \max_{\gamma \in E(k+1)} r_{k}(\gamma)$. Note that as $cl(A) \cap \Delta(\Omega'') = \emptyset$ for all $\Omega'' \subsetneq \Omega'$, we have $0<e(k)<\infty$ for all $k=1,...,K-1$. Further, by Lemma \ref{prop2lemma1}, $|E(1)|=1$ as $\gamma \in E(1)$ must satisfy $r_k(\gamma)=e(k)$ for all $k=1,...,K-1$. Let $\bar{\beta}$ be the unique element in $E(1)$.

Consider $x \in \Delta^{int}(\{K-1,K\})$. Note that as $x \rightarrow \delta_{K}$, $\beta(x,y) \rightarrow \delta_K \implies r_{K-1}(\beta(x,y)) \rightarrow 0$ for all $y \in Z$. Similarly as $x \rightarrow \delta_{K-1}$, $\beta(x,y) \rightarrow \delta_{K-1} \implies r_{K-1}(\beta(x,y)) \rightarrow \infty$ for all $y \in Z$. By the finiteness of $Z$, $x \rightarrow \delta_{K} \implies \min_{y \in Z} r_{K-1}(\beta(x,y)) \rightarrow 0$ and $x \rightarrow \delta_{K-1} \implies \min_{y \in Z} r_{K-1}(\beta(x,y)) \rightarrow \infty$. The continiuity of $\beta(x,y)$ in $x$, continuity of $r_{K-1}$ in $\beta(x,y)$ (Claim \ref{claim_rcont}), and finiteness of $Z$ together imply the continuity of $\min_{y \in Z} r_{K-1}(\beta(x,y))$ in $x$. By the intermediate value theorem, there exists $x' \in \Delta^{int}(\{K-1,K\})$ with $\min_{y \in Z} r_{K-1}(\beta(x',y)) = e(K-1)$, or $m_{x'}(K-1)=e(K-1)$.

We prove the result inductively, with the previous paragraph being the base case. Suppose we have found $x' \in \Delta(\{k'+1,...,K\})$ such that for all $k=k'+1,...,K-1$, $m_{x'}(k) = e(k)$. We find $x'' \in \Delta(\{k',...,K\})$ with $m_{x''}(k) = e(k)$ for all $k=k',...,K-1$. Consider $x'(\lambda) = \lambda \delta_{k'} + (1-\lambda) x'$ for $\lambda \in (0,1)$. As $\lambda \rightarrow 1$, $\beta(x'(\lambda),y) \rightarrow \delta_{k'} \implies r_{k'}(\beta(x'(\lambda),y)) \rightarrow \infty$ for all $y \in Z$ and as $\lambda \rightarrow 0$, $\beta(x'(\lambda),y) \rightarrow x' \implies r_{k'}(\beta(x'(\lambda),y)) \rightarrow 0$ for all $y \in Z$. For all $\lambda \in [0,1)$, $y \in Z$, $k=k'+1,...,K-1$, $r_{k}(\beta(x'(\lambda),y)) = r_k(\beta(x',y))$ by Lemma \ref{lemma_operation}; hence changing $\lambda$ will leave $m_{x'}(k)=e(k)$ for $k=k'+1,...,K-1$. By finiteness of $M_{x'}(k'+1)$, continuity of $\beta(x'(\lambda),y)$ for all $y \in M_{x'}(k'+1)$, continuity of $r_{k'}$ in $\beta(x'(\lambda),y)$ for all $y \in M_{x'}(k'+1)$, and in the intermediate value theorem, there exists $\lambda^* \in (0,1)$ and $x'' = \lambda^* \delta_{k'} + (1-\lambda^*)x' \in \Delta(\{k',...,K\})$ such that $m_{x''}(k') = e(k')$. Then $m_{x''}(k) = e(k)$ for all $k=k',...,K-1$.

Carrying this inductive process through until $k'=1$, by equation \ref{M1_singleton} we find $x^* \in \Delta(\Omega')$ with, for all $y \in M_{x^*}(1)$ and $k=1,...,K-1$: $r_k(\beta(x^*,y)) = e(k)$. Hence for all $y \in M_{x^*}(1)$, $\beta(x^*,y) = \bar{\beta}$. Meanwhile for all $y \not \in M_{x^*}(1)$, there exists $1 \leq k' \leq K-1$ such that $r_{k}(\beta(x^*,y))=e(k)$ for all $k>k'$ and $r_{k'}(\beta(x^*),y)>e(k')$; this implies (by definition of $e(k')$) that $\beta(x^*,y) \not \in cl(A)$.

\end{proof}

The following result follows from Lemma \ref{prop2_lemma_interior} almost immediately.

\begin{lemma} \label{prop2_lemma_interior2}
Suppose $cl(A) \cap \Delta(\Omega'') = \emptyset$ for all $\Omega'' \subsetneq \Omega'$. Then there exists $\underbar{x} \in \Delta(\Omega')$ and $\beta' \in A$ such that for all $y \in Z$ either: (1) $\beta(\underbar{x},y) \not \in cl(A)$ or (2) $\beta(\underbar{x},y) = \beta'$.
\end{lemma}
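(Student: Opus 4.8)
The plan is to derive Lemma \ref{prop2_lemma_interior2} from Lemma \ref{prop2_lemma_interior} by a small perturbation of the point $x^*$, splitting into two cases according to whether the extremal point $\bar{\beta}$ produced by Lemma \ref{prop2_lemma_interior} already lies in $A$. If $\bar{\beta} \in A$ there is nothing to do: taking $\underbar{x} = x^*$ and $\beta' = \bar{\beta}$, Lemma \ref{prop2_lemma_interior} already says every $y \in Z$ satisfies $\beta(x^*,y) \not\in cl(A)$ or $\beta(x^*,y) = \bar{\beta} = \beta' \in A$, which is exactly (1) or (2). So the substance is the case $\bar{\beta} \in cl(A) \setminus A$, where $u_i(\bar{\beta}) = 0$ for every $i$ but $\bar{\beta}$ is a limit of points of $A$.

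In that case I would perturb $x^*$ so as to slide the posteriors that currently equal $\bar{\beta}$ onto a nearby point of $A$. Two observations make this work. First, the construction in Lemma \ref{prop2_lemma_interior} (a product of strictly interior convex combinations starting from $\Delta^{int}(\{K-1,K\})$) gives $x^* \in \Delta^{int}(\Omega')$, with zero mass outside $\Omega'$; hence by Equation \ref{beta_def}, $\beta_l(x,y) \propto x_l y_l / \pi_l$ and the coordinates $l \not\in \Omega'$ drop out, so for $x \in \Delta^{int}(\Omega')$ the posterior $\beta(x,y)$ depends on $y$ only through its restriction to $\Omega'$. Consequently any two $y, y' \in M_{x^*}(1)$, both mapping to $\bar{\beta}$ at $x^*$, have proportional $\Omega'$-restrictions and therefore satisfy $\beta(x,y) = \beta(x,y')$ for \emph{every} $x \in \Delta^{int}(\Omega')$ — the whole set $M_{x^*}(1)$ stays identified under any perturbation within $\Delta^{int}(\Omega')$. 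Second, fixing any $y^* \in M_{x^*}(1)$, the map $\phi(x) = \beta(x, y^*)$ is a diffeomorphism of $\Delta^{int}(\Omega')$ onto itself (the inverse recovers $x_l \propto \beta_l \pi_l / y^*_l$), with $\phi(x^*) = \bar{\beta}$.

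From here the argument is routine. Since $Z$ is finite and $cl(A)$ closed, the finitely many $y \in Z \setminus M_{x^*}(1)$ have $\beta(x^*,y) \not\in cl(A)$, so by continuity there is a neighborhood $N$ of $x^*$ on which all of these remain outside $cl(A)$. Because $\phi$ is an open map, $\phi(N)$ is a neighborhood of $\bar{\beta}$, and since $\bar{\beta} \in cl(A)$ it meets $A$; pick $\beta' \in A \cap \phi(N)$ and set $\underbar{x} = \phi^{-1}(\beta') \in N$. Then every $y \in M_{x^*}(1)$ gives $\beta(\underbar{x},y) = \beta(\underbar{x}, y^*) = \beta'$ by the identification above, while every $y \in Z \setminus M_{x^*}(1)$ gives $\beta(\underbar{x},y) \not\in cl(A)$ because $\underbar{x} \in N$, establishing (2) or (1) respectively.

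The only genuine obstacle is ensuring that all $y \in Z$ landing inside $cl(A)$ land on the \emph{single} common point $\beta'$ the statement demands, rather than on several nearby but distinct points. This is exactly what the first observation secures: it shows the equivalence class $M_{x^*}(1)$ moves rigidly together under perturbations of $x^*$ inside $\Delta^{int}(\Omega')$, so a one-dimensional-in-each-coordinate perturbation never splits it. Everything else reduces to continuity of Bayesian updating, finiteness of $Z$, and the closed/open topology of $cl(A)$ and $\phi$.
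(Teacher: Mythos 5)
Your proposal is correct and takes essentially the same route as the paper's proof: the same split on whether $\bar{\beta} \in A$, the same rigidity fact that every $y \in M_{x^*}(1)$ induces one common posterior at every $x \in \Delta^{int}(\Omega')$, and the same perturbation of $x^*$ combining invertibility of $x \mapsto \beta(x,y^*)$ on the face with continuity for the finitely many $y$ whose posteriors lie outside $cl(A)$. The one delicate step \textemdash{} picking $\beta' \in A \cap \phi(N)$, which requires points of $A$ near $\bar{\beta}$ lying \emph{inside} $\Delta(\Omega')$, something $\bar{\beta} \in cl(A)$ alone does not supply when $\Omega' \subsetneq \Omega$ since $\phi(N)$ is only a neighborhood relative to the face \textemdash{} is present in identical form in the paper's own argument, so it is not a gap you introduced.
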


\begin{proof}
Find $x^*$ and $\bar{\beta}$ as per Lemma \ref{prop2_lemma_interior}. If $\bar{\beta} \in A$ then set $\underbar{x}=x^*$ and $\beta' = \bar{\beta}$ and we're done.

Assume now $\bar{\beta} \not \in A$. As $Z$ finite, $cl(A)$ is closed, and $\beta(x,y)$ is continuous in $x$ for all $y \in Z$, there exists $\epsilon>0$ and a set $N_\epsilon=\{x \in \Delta(\Omega'): |x-x^*|<\epsilon\}$ such that for all $x \in N_\epsilon$ and $y \in Z$, $\beta(x^*,y) \not \in cl(A) \implies \beta(x,y) \not \in cl(A)$. 

By equation \ref{beta_def}, for any $\gamma \in \Delta(\Omega')$ and $y \in Z$ there exists $x \in \Delta(\Omega')$ such that $\beta(x,y) = \gamma$. Also by equation \ref{beta_def}, if $\beta(x,y)=\beta(x,y')$ for some $x \in \Delta^{int}(\Omega')$ and $y,y' \in Z$, then $\beta(x',y)= \beta(x',y')$ for all $x' \in \Delta^{int}(\Omega')$.

As $\bar{\beta} \in cl(A)$, there is a sequence of beliefs in $A$ converging to $\bar{\beta}$. By continuity of $\beta(x,y)$ in $x$ for all $y \in Z$ and the facts in the previous paragraph, there exists $\beta' \in A$ close to $\bar{\beta}$ and $\underbar{x} \in N_{\epsilon}$ such that for all $y \in Z$, $\beta(x^*,y) = \bar{\beta} \implies \beta(\underbar{x},y) =\beta'$. 

Hence we have either $\beta(\underbar{x},y) = \beta' \in A$ or $\beta(\underbar{x},y) \not \in cl(A)$ for all $y \in Z$.
 
\end{proof}

This next Lemma wraps up the proof of the `if' direction of Proposition 2 for the case that $cl(A) \cap \Delta(\Omega'') = \emptyset$ for all $\Omega'' \subsetneq \Omega'$:

\begin{lemma} \label{prop2_interior_case}
Suppose $cl(A) \cap \Delta(\Omega'') = \emptyset$ for all $\Omega'' \subsetneq \Omega'$. Then there exists a sender $j$ and $\underbar{x} \in \Delta^{int}(\Omega')$ with $\mathbb{E}[u_j(\beta(\underbar{x},\Gamma))]>0$.
\end{lemma}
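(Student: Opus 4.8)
The plan is to reduce the statement to an almost immediate consequence of Lemma~\ref{prop2_lemma_interior2}. First I would invoke that lemma to obtain $\underbar{x} \in \Delta(\Omega')$ and $\beta' \in A$ such that, for every $y \in Z$, either $\beta(\underbar{x},y) \notin cl(A)$ or $\beta(\underbar{x},y) = \beta'$. I would note at the outset that $\underbar{x}$ may be taken in $\Delta^{int}(\Omega')$, as the statement requires: the point $x^*$ produced in Lemma~\ref{prop2_lemma_interior} is interior (it is built by taking strictly interior convex combinations $\lambda^*\delta_{k'} + (1-\lambda^*)x'$ with $\lambda^* \in (0,1)$ at each step), and $\underbar{x}$ is chosen in a small neighborhood $N_\epsilon$ of $x^*$, so shrinking $\epsilon$ keeps $\underbar{x}$ interior. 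The whole argument then rests on evaluating $\mathbb{E}[u_j(\beta(\underbar{x},\Gamma))] = \sum_{y \in supp[\Gamma]} u_j(\beta(\underbar{x},y)) p(y|\underbar{x})$ and showing that only posteriors equal to $\beta'$ contribute.

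The key step is to argue that every realization of $\Gamma$ other than those sent to $\beta'$ contributes zero to this sum. I would split $supp[\Gamma]$ into $Z$ and its complement. For $y \notin Z$ with $p(y|\underbar{x})>0$, since $\underbar{x} \in \Delta^{int}(\Omega')$ the posterior $\beta(\underbar{x},y)$ lies in $\Delta(\Omega')$ (Claim~\ref{claim_projection}) and, because $y_n = 0$ for some $n \in \Omega'$, equation~\ref{beta_def_sets} gives $\beta_n(\underbar{x},y)=0$; hence $\beta(\underbar{x},y) \in \Delta(\Omega'')$ for a proper subset $\Omega'' \subsetneq \Omega'$. The case hypothesis $cl(A) \cap \Delta(\Omega'') = \emptyset$ then forces $\beta(\underbar{x},y) \notin A$, and zero-sumness (outside $A$ no sender can have positive utility, so, since utilities sum to zero, all have utility zero) gives $u_j(\beta(\underbar{x},y)) = 0$ for every sender $j$. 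For $y \in Z$, Lemma~\ref{prop2_lemma_interior2} gives either $\beta(\underbar{x},y) \notin cl(A)$, again yielding $u_j = 0$, or $\beta(\underbar{x},y) = \beta'$. Thus the sum collapses to $\mathbb{E}[u_j(\beta(\underbar{x},\Gamma))] = u_j(\beta') \cdot \Pr(\beta(\underbar{x},\Gamma) = \beta')$.

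To finish I would select the sender: since $\beta' \in A = \cup_i A_i$, there is a sender $j$ with $u_j(\beta')>0$. It then remains to show $\Pr(\beta(\underbar{x},\Gamma)=\beta')>0$, and I expect this to be the main obstacle, since Lemma~\ref{prop2_lemma_interior2} only records a dichotomy and does not by itself assert that some realization attains $\beta'$. I would close this gap by appealing to the construction behind Lemma~\ref{prop2_lemma_interior}: the set $M_{x^*}(1) \subseteq Z$ is nonempty and every $y \in M_{x^*}(1)$ satisfies $\beta(x^*,y) = \bar{\beta}$, while the perturbation in Lemma~\ref{prop2_lemma_interior2} is chosen precisely so that $\beta(x^*,y)=\bar{\beta} \implies \beta(\underbar{x},y)=\beta'$ (and when $\bar{\beta} \in A$ one simply takes $\underbar{x}=x^*$, $\beta'=\bar{\beta}$); hence at least one $y \in Z$ has $\beta(\underbar{x},y)=\beta'$. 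For any such $y$, equation~\ref{p_S} gives $p(y|\underbar{x}) = \sum_{k} \frac{\underbar{x}_k y_k}{\pi_k} p(y) > 0$, because $y \in supp[\Gamma]$ so $p(y)>0$ and both $\underbar{x}_k$ and $y_k$ are strictly positive for $k \in \Omega'$. Therefore $\Pr(\beta(\underbar{x},\Gamma)=\beta') \geq p(y|\underbar{x})>0$, and combining with $u_j(\beta')>0$ yields $\mathbb{E}[u_j(\beta(\underbar{x},\Gamma))]>0$, as required.
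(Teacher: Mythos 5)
Your proof is correct and takes essentially the same approach as the paper's: invoke Lemma \ref{prop2_lemma_interior2}, split $supp[\Gamma]$ into $Z$ and its complement, show every term other than those hitting $\beta'$ contributes zero (via Claim \ref{claim_p_S'}, the case hypothesis, and zero-sumness), and pick $j$ with $u_j(\beta')>0$. You are in fact more careful than the paper on one point: the paper simply asserts $\Pr(\Gamma \in \{y \in Z : \beta(\underbar{x},y)=\beta'\} \mid \underbar{x})>0$, whereas you justify it by noting that $M_{x^*}(1) \subseteq Z$ is nonempty, that the perturbation in Lemma \ref{prop2_lemma_interior2} sends those realizations to $\beta'$, and that $p(y|\underbar{x})>0$ for such $y$ because $\underbar{x}$ and $y$ are both strictly positive on $\Omega'$.
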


\begin{proof}
Find $\beta'$ and $\underbar{x}$ as per Lemma \ref{prop2_lemma_interior2}.

Note that for all $y \in supp[\Gamma] \setminus Z$, one of the following holds. (1) $y_l=0$ for all $l \in \Omega'$. (2) $y_l =0$ for some $l \in \Omega'$ but $y_k>0$ for some $k \in \Omega'$. For $y$ in case (1), by Claim \ref{claim_p_S'} $p(y|\underbar{x}) = 0$. For $y$ in case (2), equation \ref{beta_def} implies $\beta(\underbar{x},y) \in \Omega''$ for some $\Omega'' \subsetneq \Omega'$; as $\Omega'$ is minimal we have $u_i(\beta(\underbar{x},y))=0$ for all $i$.

For all $y \in Z$ such that $\beta(\underbar{x},y) \neq \beta'$, $u_i(\beta(\underbar{x},y)) = 0 $ for all $i$.

Let $j$ be some sender with $u_j(\beta')>0$. Then $\mathbb{E}[u_j(\beta(\underbar{x},\Gamma))] = u_j(\beta')Pr(\Gamma \in \{y \in Z: \beta(\underbar{x},y)=\beta'\}|\underbar{x}) > 0$.

\end{proof}

\textbf{Case 2: $cl(A) \cap \Omega'' \neq \emptyset$ for some $\Omega'' \subsetneq \Omega'$.}

Note that it is still the case that $\Omega'$ is minimal, so $A \cap \Omega'' = \emptyset$ for all $\Omega'' \subsetneq \Omega'$.

Let $\Omega'' \in$ argmin$_{\{\Omega''' \subset \Omega': cl(A) \cap \Omega''' \neq \emptyset\}} |\Omega'''|$ be one of the smallest subsets of $\Omega'$ that intersects $cl(A)$. Note that $1 \leq |\Omega''| < |\Omega'|$. Define $Z$, and all other necessary objects, as in the previous case. Then note that $cl(A) \cap \Theta = \emptyset$ for all $\Theta \subsetneq \Omega''$ and so by an identical argument to Lemma \ref{prop2_lemma_interior} we can find $\bar{\beta} \in cl(A) \cap \Delta^{int}(\Omega'')$ and $x^* \in \Delta^{int}(\Omega'')$ such that for all $y \in Z$ either $\beta(x^*,y) = \bar{\beta}$ or $\beta(x^*,y) \not \in cl(A)$. Then by a similar argument to Lemma \ref{prop2_lemma_interior2}, we can find $\underbar{x} \in \Delta^{int}(\Omega')$ close to $x^*$ and $\beta' \in A$ close to $\bar{\beta}$ such that either $\beta(\underbar{x},y)=\beta'$ or $\beta(\underbar{x},y) \not \in cl(A)$ for all $y \in Z$. 

Following the same argument as in Lemma \ref{prop2_interior_case}, all $y \in supp[\Gamma] \setminus Z$ either occur with probability $0$ conditional on $\underbar{x}$ or result in a posterior outside of $\Delta^{int}(\Omega')$ conditional on $\underbar{x}$ (yielding $0$ utility by minimality of $\Omega'$). Letting $j$ be some sender with $u_j(\beta')>0$, we have $\mathbb{E}[u_j(\beta(\underbar{x},\Gamma))]>0$ and we're done.

\subsubsection{Additional claim}

In the proof of Proposition 2 given, when utilities are nonlinear on some $\Delta(\Omega')$ and $(\Gamma_1,...,\Gamma_M)$ pooled $\Omega'$, we were able to find a sender $j$ who could take advantage of $\Omega'$ being pooled and find $\underbar{x}$ conditonal on which she gets strictly positive expected utility. Identifying this sender $j$ did not depend on the strategy profile $(\Gamma_1,...,\Gamma_M)$. Hence, the claim below (which is useful for further results in Supplementary Appendix B) holds.
\begin{claim}\label{strong_claim_prop2}
Suppose for some sender $i$ and $\Omega' \subseteq \Omega$ that $u_i$ is nonlinear on $\Delta(\Omega')$. Then there exists a sender $j$ such that for any $(\Gamma_1,...,\Gamma_M)$ that pools $\Omega'$ $j$ can find some $\underbar{x}$ such that $\mathbb{E}[u_j(\beta(\underbar{x},\Gamma_1,...,\Gamma_M))]>0$.
\end{claim}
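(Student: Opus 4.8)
The plan is to recognize that the proof of Proposition~2 already manufactures the required sender before it manufactures $\underbar{x}$, and that this sender can be pinned down from the utility functions alone; only $\underbar{x}$ genuinely depends on the profile. So the task is to re-read that proof while tracking, object by object, the dependence on $(\Gamma_1,\dots,\Gamma_M)$, and to commit to the sender first.

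First I would reduce to minimal $\Omega'$. If $\Gamma$ pools $\Omega'$, then on the positive-probability event where every state in $\Omega'$ receives positive posterior mass, every state of any subset $\Theta\subseteq\Omega'$ also receives positive mass; hence $\Gamma$ pools every subset of $\Omega'$. By Claim~\ref{claim_minimal}, $\Omega'$ contains a minimal subset $\Theta$ on which some $u_{i'}$ is nonlinear, and any $\Gamma$ pooling $\Omega'$ pools $\Theta$. It therefore suffices to establish the claim with $\Omega'$ minimal: a sender $j$ and interim belief $\underbar{x}\in\Delta^{int}(\Theta)\subseteq\Delta(\Omega')$ that work for $\Theta$ work for $\Omega'$, since $\beta(\underbar{x},\Gamma)\in\Delta(\Theta)$ with probability $1$ by Claim~\ref{claim_projection}.

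Second, with $\Omega'$ minimal, I would extract the target belief $\bar{\beta}$ exactly as in Lemma~\ref{prop2_lemma_interior} (Case~1) or in the Case~2 argument, namely as the unique lexicographic maximizer of the ratios $r_{K-1},\dots,r_1$ over $cl(A)\cap\Delta(\Omega')$ (respectively over $cl(A)\cap\Delta(\Omega'')$). The point is that $A$, hence $cl(A)$ and $\bar{\beta}$, are defined from the utility functions alone and never reference a strategy profile. Since $\bar{\beta}\in cl(A)$, there is a sequence in $A$ converging to $\bar{\beta}$; by finiteness of the sender set and pigeonhole, a single sender $j$ has $u_j>0$ at points arbitrarily close to $\bar{\beta}$. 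I fix this $j$, which depends only on the utilities, and this is the sender the claim asserts.

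Third I would thread this fixed $j$ through the per-profile construction. Given any $(\Gamma_1,\dots,\Gamma_M)$ pooling $\Omega'$, form $\Gamma$ and the nonempty set $Z=\{z\in supp[\Gamma]:z_n>0\text{ for all }n\in\Omega'\}$, then run Lemma~\ref{prop2_lemma_interior} and Lemma~\ref{prop2_lemma_interior2} (or the Case~2 argument), which take $\bar{\beta}$ and $Z$ as inputs and produce, via the intermediate value theorem, first $x^*$ and then $\underbar{x}$ so that for every $y\in Z$ either $\beta(\underbar{x},y)=\beta'$ or $\beta(\underbar{x},y)\notin cl(A)$. The only adjustment is to choose $\beta'$ inside the neighborhood $N_\epsilon$ of $\bar{\beta}$ so that $u_j(\beta')>0$ for our fixed $j$; this is possible precisely because points arbitrarily close to $\bar{\beta}$ carry $u_j>0$. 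The accounting of Lemma~\ref{prop2_interior_case} then yields $\mathbb{E}[u_j(\beta(\underbar{x},\Gamma))]=u_j(\beta')\Pr(\beta(\underbar{x},\Gamma)=\beta')>0$, since realizations outside $Z$ either occur with probability zero conditional on $\underbar{x}$ (Claim~\ref{claim_p_S'}) or land outside $\Delta^{int}(\Omega')$ and give zero utility by minimality. The main obstacle is exactly the dependence issue isolated here: in the original proof $j$ is selected at the very end as ``some sender with $u_j(\beta')>0$'' with $\beta'$ built per profile, so a literal reading leaves open that $j$ could vary with $\Gamma$; committing to $j$ first through the pigeonhole argument on a sequence approaching $\bar{\beta}$, and only then choosing each profile's $\beta'$ to respect that fixed $j$, is what upgrades the existential statement inside Proposition~2 to the uniform-in-$\Gamma$ statement of the claim.
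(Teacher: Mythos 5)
Your proof is correct and takes essentially the same route as the paper: the paper's own justification of Claim \ref{strong_claim_prop2} is just the remark that the sender identified in the proof of Proposition 2 "did not depend on the strategy profile," and your argument is that proof re-run with the dependence tracked explicitly (including the same reduction to minimal sets via Claim \ref{claim_minimal}). Your one genuine addition is worth keeping: on a literal reading of Lemma \ref{prop2_interior_case}, $j$ is chosen as "some sender with $u_j(\beta')>0$" \emph{after} the profile-dependent $\beta'$ is built, so when $\bar{\beta} \not\in A$ the paper's profile-independence remark is an assertion rather than a proof; your pigeonhole step \textemdash\ fixing $j$ as a sender with $u_j>0$ along a sequence in $A$ converging to $\bar{\beta}$, and only then choosing each profile's $\beta'$ from $\{\gamma : u_j(\gamma)>0\}$ inside the neighborhood $N_\epsilon$ of Lemma \ref{prop2_lemma_interior2} \textemdash\ is precisely the missing justification, and the surjectivity/continuity facts used there guarantee such a $\beta'$ is always reachable.
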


\subsection{Proof of Theorem 1}
\textbf{Theorem 1.} 
\begin{proof}
`If' direction. Proposition 2 implies that if for every $l,k$ some $u_i$ is nonlinear on $\Delta(\{l,k\})$, then in any equilibrium $(\Gamma_1,...,\Gamma_M)$, w.p. $1$ $\beta_n(\Gamma_1,...,\Gamma_M)>0$ for only one $n \in \Omega$. Hence w.p. $1$ we must have $\beta_n(\Gamma_1,...,\Gamma_M)=1$ for some $n \in \Omega$; the state is fully revealed.

`Only if' direction. Suppose for some $l,k \in \Omega$, $u_i$ is linear along $\Delta(\{l,k\})$ for all $i$. Then the equilibrium construction in the Proposition 2 `only if' direction is a non-fully revealing equilibrium.

\end{proof}

\subsection{Proofs of Section 4.3}

\textbf{Proposition \ref{prop_comp}.}

Before proving Proposition \ref{prop_comp} we prove the following simple Lemma.

\begin{lemma} \label{lemma_pooling_informativeness}
Suppose an experiment $\Gamma$ not pools $\Omega' \subseteq \Omega$. If $\Gamma'$ is more informative than $\Gamma$ then $\Gamma'$ also not pools $\Omega'$.
\end{lemma}

\begin{proof}
At every posterior induced by $\Gamma$, the receiver has ruled out at least one state in $\Omega'$; hence at any belief induced by $\Gamma$, no further information could induce a posterior which pools $\Omega'$. Thus $\Gamma'$ cannot pool $\Omega'$.
\end{proof}

Now we prove Proposition \ref{prop_comp}.

\begin{proof}
We first prove property (1). Take any equilibrium of the $M$ sender game $(\Gamma_1,...,\Gamma_M)$. If this equilibrium is fully revealing the result is trivial; suppose it is not. We prove the result in two cases.

First suppose $M'=1$; equilibria of the $M'$ sender game are sender 1's Bayesian Persuasion solutions. First suppose that sender 1's payoff from every experiment yields her ex-ante expected utility $\leq 0$. Then the equilibrium of the $M$ sender game, which yields sender 1 $0$ utility (Lemma 1) is a solution to sender 1's Bayesian Persuasion problem, and we're done. Now suppose sender 1 can attain a strictly positive payoff from some experiment (when she is the sole persuader). Then all of sender 1's Bayesian Persuasion solutions yields her strictly positive utility and hence must pool some set of states that cannot be pooled in an equilibrium of the $M$ sender game (any experiment that only pools states that can be pooled in an equilibrium of the $M$ sender game yields $0$ utility for sender 1). By Lemma \ref{lemma_pooling_informativeness}, none of sender 1's Bayesian Persuasion solutions is more informative than $(\Gamma_1,...\Gamma_M)$. 

Now suppose $M'>1$. We again split the proof into two cases. First suppose that for every $\Omega'$ and $\Omega''$ pooled by $(\Gamma_1,...,\Gamma_M)$, whenever $\Omega' \cap \Omega'' \neq \emptyset$, all $u_i$ are linear on $\Delta(\Omega' \cup \Omega'')$. This implies that $\Omega$ can be partitioned into some $\Omega_1,...,\Omega_B$ such that: (1) w.p. $1$ $\beta(\Gamma_1,...,\Gamma_M) \in \Delta(\Omega_b)$ for some $b$, and (2) whenever $|\Omega_b|>1$, all $u_i$ are linear on $\Delta(\Omega_b)$. Consider the strategy profile where all $M'$ senders reveal which partition element $\omega$ is in, but nothing else. This is an $M'$ sender equilibrium (no sender $i$ can profitably deviate as the receiver will learn the partition element from her opponents and conditional on this, $i$ cannot strictly improve her payoff by providing additional information). It is also weakly less informative than $(\Gamma_1,...,\Gamma_M)$, as $(\Gamma_1,...,\Gamma_M)$ reveals which partition element $\omega$ is in and, potentially, more information.

Now suppose there exists $\Omega'$ and $\Omega''$ pooled by $(\Gamma_1,...,\Gamma_M)$ such that $\Omega' \cap \Omega'' \neq \emptyset$ and some $u_i$ is nonlinear on $\Delta(\Omega' \cup \Omega'')$. Consider the strategy profile in the $M'$ sender game $(\Gamma,...,\Gamma)$, where $Pr(\Gamma=\delta_n) = \pi_n$ for all $n \not \in \Omega'$ and $Pr(\Gamma = y \text{ s.t. } y_l =\frac{\pi_l}{\sum_{k \in \Omega'} \pi_k} \text{ } \forall l \in \Omega')= \sum_{k \in \Omega'} \pi_k$. This is an $M'$ sender equilibrium (see proof of the `only if' direction of Proposition 2). We now show it is not strictly more informative than $(\Gamma,...,\Gamma)$. Note that the strategy profile $(\Gamma,...,\Gamma)$ induces the same distribution over posteriors as the experiment $\Gamma$; hence we will deal with $\Gamma$ instead of $(\Gamma,...,\Gamma)$. Let $conv[supp[\Gamma]]$ be the convex hull of the support of $\Gamma$. $\Gamma$ induces $N - |\Omega'|+1$ distinct posteriors: the $N-|\Omega'|$ degenerate beliefs in $\Delta(\Omega \setminus \Omega')$, and a single belief in $\Delta(\Omega')$. These posteriors are all affinely independent (or, alternatively, their convex hull has dimension $N-|\Omega'|$). By Theorem 5 of \citet{wu2017coordinated}, $\Gamma$ is more informative than $(\Gamma_1,...,\Gamma_M)$ if and only if $supp[\beta(\Gamma_1,...,\Gamma_M)] \subseteq conv[supp[\Gamma]]$. Note that $conv[supp[\Gamma]$ only contains beliefs that that satisfy at least one of the following three properties. Either they assign probability one to $\Omega \setminus \Omega'$, assign probability one to $\Omega'$, or assign strictly positive probability to all states in $\Omega$. With positive probability, $(\Gamma_1,...,\Gamma_M)$ results in posteriors which pool $\Omega''$. Such posteriors cannot assign probability one to $\Omega \setminus \Omega'$ (as $\Omega'' \cap \Omega' \neq \emptyset$), cannot assign probability one to $\Omega'$ ($\Omega'' \not \subseteq \Omega'$ as some utilities are nonlinear on $\Delta(\Omega' \cup \Omega'')$), and cannot assign strictly positive probability to all states (some utilities are nonlinear on $\Delta(\Omega' \cup \Omega'')$ $\implies$ $\Omega \supseteq \Omega'' \cup \Omega'$ cannot be pooled in an $M$ sender equilibrium). Hence $(\Gamma,...,\Gamma)$ is no more informative than $(\Gamma_1,...,\Gamma_M)$\emph{•}

Now we prove property (2) of Proposition \ref{prop_comp}. Suppose $M'$ sender equilibrium $(\Gamma'_1,...,\Gamma'_{M'})$ is more informative than $(\Gamma_1,...,\Gamma_M)$. Consider the $M$ sender strategy profile $(\Gamma'_1,...,\Gamma'_{M'},\Gamma^U,...,\Gamma^U)$. This is an $M$ sender equilibrium for the following reasons. No sender $i=1,...,M'$ has an incentivie to deviate as $(\Gamma'_1,...,\Gamma'_{M'})$ is an $M'$ sender equilibrium. No sender $i=M'+1,...M$ has an incentive to deviate as, by Lemma \ref{lemma_pooling_informativeness}, every non-fully revealing posterior of $(\Gamma'_1,...,\Gamma'_{M'},\Gamma^U,...,\Gamma^U)$ falls in some $\Delta(\Omega')$ on which all senders have linear utilities. 
\end{proof}

\subsection{Proof of Proposition \ref{prop_repeat}}

First we prove a result about asymptotic sufficiency.

\begin{lemma}
An experiment $\Gamma$ is asymptotically sufficient if $dim(co(supp[\Gamma]))=N-1$.
\end{lemma}

\begin{proof}
Let $p$ be the probability mass function of $\Gamma$. First we show that if $dim(co(supp[\Gamma]))=N-1$ then $\forall l \neq k \in \Omega$, there exists $x \in supp[\Gamma]$ s.t. $\frac{x_k}{\pi_k} \neq \frac{x_l}{\pi_l}$. To see this, suppose not for contradiction. Then $\exists l,k$ s.t. for all $y \in co(supp[\Gamma])$, $y_k = \frac{\pi_k}{\pi_l} y_l$. But then $co(supp[\Gamma])$ cannot have full dimension if all $y_k$ values are linearly dependent on $y_l$ values; contradiction.

Now we prove the result. Note that for all $x \in supp[\Gamma]$ and $k \in \Omega$, $Pr(\Gamma = x|\omega=k) = \frac{x_k}{\pi_k} p(x)$. Let $\{y^t\}_{t=1}^T$ be a sample of $T$ conditionally independent realizations of $\Gamma$. Conditional on any realized state $\omega$, for any $x \in supp[\Gamma]$, by the Weak Law of Large Numbers, $\frac{\sum_{t=1}^T \mathbbm{1}_{y^t=x}}{T} \xrightarrow{p} \frac{x_{\omega}}{\pi_{\omega}} p(x)$. 

If $dim(co(supp[\Gamma]))=N-1$, then $\forall l \neq k \in \Omega$, there exists $x \in supp[\Gamma]$ s.t. $\frac{x_k}{\pi_k} \neq \frac{x_l}{\pi_l}$ and hence a decision maker will be able to differentiate between events $\omega=l$ and $\omega=k$ using the frequency of occurance of $\Gamma=x$ with arbitrarily high posterior probability as $T \rightarrow \infty$. This is true for all $l,k$ and so $\Gamma^T \xrightarrow[p]{T \rightarrow \infty} \Gamma^{FR}$.
\end{proof}

Now we prove the Proposition \ref{prop_repeat}. Sender and receiver preferences are determined by expected utility over payoffs $\{u_i(a,\omega)\}_{a\in A, \omega \in \Omega}$. Assume the receiver breaks indifferences in favor of higher labelled actions. 

\begin{proof}
For each action $a \in A$, let $\Delta_a \subset \Delta(\Omega)$ be the posteriors at which the receiver takes action $a$. Each $\Delta_a$ is convex and sender's payoffs are linear on $\Delta_a$ (see Appendix B Section B.1). Under our assumption that the receiver is not indifferent between any two actions at any state, each $\delta_l$ is contained in exactly one set $\Delta_a$. 

At belief $\beta \in \Delta_a$, sender $i$ gets payoff $u_i(\beta) = u_a^i \cdot \beta$ where $u_a^i=(u_i(a,1),...,u(a,N))^T$. For any $n \in \mathcal{N}$ and action $a$, define:

\begin{equation} 
p^n(a) = Pr(\Gamma^n \in \Delta_a)
\end{equation}

\begin{equation}
\gamma^n(a) = \mathbb{E}[\Gamma^n | \Gamma^n \in \Delta_a]
\end{equation}

These are the probability action $a$ is induced under $\Gamma^n$ and the expected interim belief generated by $\Gamma^n$ conditional on this. If $p^n(a)=0$, we define $\gamma^n(a)$ to be an arbitrary fixed belief in $\Delta_a$. As all senders have linear payoffs on $\Delta_a$, a sender's utility from any $\Gamma^n$ can be written as:

\begin{equation}
U_i(\Gamma^n) = \sum_{a \in A} u^i_a \cdot (\gamma^n(a) p^n(a))
\end{equation}

We look for Nash equilibria in (finite support) mixed strategies $(\sigma_1,...,\sigma_M)$ where $\sigma_i$ is a probability distribution over $\{\Gamma^n\}_{n=1}^{\infty}$. A strategy profile $\sigma = (\sigma_1,...,\sigma_M)$ induces a probability distribution over the number of repetitions of $\Gamma$ that are collectively played by all senders; we denote the probability that $n$ repetitions are played under strategy profile $\sigma$ as $\sigma(n)$. A senders payoff from deviating from $\sigma$ by playing an additional $n$ repetitions of $\Gamma$ is:

\begin{equation} \label{msne_requirement}
U_i(\sigma,n) = \sum_{n' = 0}^{\infty} \sigma(n') \sum_{a \in A} u^i_a \cdot (\gamma^{n+n'}(a) p^{n+n'}(a)) = 0
\end{equation}

In equilibrium such deviations must yield the equilibrium payoff; if $U_i(\sigma,n)-U_i(\sigma,0)>0$, sender $i$ should deviate to play $n$ extra copies and if $U_i(\sigma,n)-U_i(\sigma,0)<0$, zero-sumness implies some sender can profitably make this deviation.

First we argue that no non-fully revealing pure strategy equilibrium can exist and then extend this to mixed strategies. Suppose for contradiction a pure strategy equilibrium existed which resulted in equilibrium aggregate information $\Gamma^{n'}$. Equation \ref{msne_requirement} implies that it must be that for all $n \geq n'$ and senders $i$:

\begin{equation} \label{psne_requirement}
U_i(\Gamma^{n+1})-U_i(\Gamma^n) = \sum_{a \in A} u^i_a \cdot [\gamma^{n+1}(a) p^{n+1}(a) - \gamma^n(a) p^n(a)] = 0
\end{equation}

Each sender must be indifferent between adding any number of copies to $\Gamma^{n'}$. We consider the limit of this expression as $n \rightarrow \infty$. As $\Gamma$ is asymptotically sufficient, as the number of copies played goes to infinity, information converges to full relevation. As a consequence, if we pick any $a \in A$, if $\Delta_a \cap \{\delta_1,...,\delta_N\} \neq \emptyset$, then: $\lim_{n \rightarrow \infty} p^n(a) = \sum_{l \in \Omega: \delta_l \in \Delta_a} \pi_l$ and $\lim_{n \rightarrow \infty} \gamma^n(a) = \sum_{l \in \Omega: \delta_l \in \Delta_a} \pi_l/p^n(a)$ (as the posterior converges to fully revealing). For $a$ s.t. $\Delta_a \cap \{\delta_1,...,\delta_N\} = \emptyset$, $\lim_{n \rightarrow \infty} p^n(a) = 0$; meanwhile $\gamma^n(a)$ has a convergent subsequence converging to some $\gamma(a) \in cl(\Delta_a)$ as $cl(\Delta_a)$ is closed and bounded. We select a subsequence such that each $\gamma^n(a)$ converges; henceforth we assume each has a limit $\gamma(a)$. This implies that $[\gamma^{n+1}(a) p^{n+1}(a) - \gamma^n(a) p^n(a)] \rightarrow 0$ for all $a \in A$, $i=1,...,M$. 

Pick a sender $i$ and let $t^i_a =  |\gamma^{n+1}(a) p^{n+1}(a) - \gamma^n(a) p^n(a)|$ for $a=1,...,A$. Without loss of generality, assume $t^i_1$ converges to $0$ at least as slowly as $t^i_2,...,t^i_A$. For $a=1,...,A$ let $l_a^i = \lim_{n \rightarrow \infty} \frac{ [\gamma^{n+1}(a) p^{n+1}(a) - \gamma^n(a) p^n(a)]}{t^i_1}$. Each $l_a^i$ is a vector; note that as $t^i_1 \rightarrow 0$ the slowest, each $l^i_a$ exists. Furthermore, $l^i_1$ must have at least 2 nonzero terms; this is because

Note that from equation \ref{psne_requirement}, $\frac{U_i(\Gamma^{n+1})-U_i(\Gamma^n)}{t^i_1}=0$ for all $n \neq n'$. Hence we must have:

\begin{equation}
0 = \lim_{n \rightarrow \infty} \frac{U_i(\Gamma^{n+1})-U_i(\Gamma^n)}{t^i_1}= \sum_{a \in A} u^i_a \cdot l^i_a 
\end{equation}

As some terms in $l_a^i$ are nonzero $u^i_1,...,u^i_A$ must satisfy a nontrivial linear equation. Note the $l_a^i$'s depend purely on the receiver's preferences and $\Gamma$. Generic $\{u_a^i\}_{a=1,...,A}$ will not satisfy this; generically we will have $\lim_{n \rightarrow \infty} \frac{U_i(\Gamma^{n+1})-U_i(\Gamma^n)}{t^i_1} \neq 0$ implying equation \ref{psne_requirement} is not satisfied for large enough $n$. Note that for any $i$ considering deviation, the limits $\{l_a^i\}_{a=1,...,A}$ do not depend on the pure strategy profile considered; hence the limit is the same when deviating from any mixed-strategy profile with support on finite pure strategies.  
\end{proof}

\section{Supplementary Appendix B: Extensions}

\newtheorem{theorem2}{Theorem}[subsection]
\newtheorem{lemma2}{Lemma}[subsection]
\newtheorem{prop2}{Proposition}[subsection]
\counterwithin{figure}{section}

\subsection{Single receiver with finite actions}
\emph{Setup.} There is a single receiver. Let $\mathcal{A} = \{a_1,...,a_A\}$ be a finite set of actions; the game is as in the baseline model except after observing realizations of $\Gamma_1,...,\Gamma_M$ the receiver picks an action $a \in \mathcal{A}$ after which the players get payoffs. The reciever's utility function is $u_r: \mathcal{A} \times \Omega \rightarrow \mathbb{R}$ and senders' utility functions are $u_i: \mathcal{A} \times \Omega \rightarrow \mathbb{R}$ for $i=1,...,M$. Sender's preferences are zero-sum: for all $a \in \mathcal{A}$, $\omega \in \Omega$, $\sum_{i=1}^M u_i(a,\omega)=0$. We make the assumption that for any $a \neq a' \in \mathcal{A}$, $\omega \in \Omega$, $u_r(a,\omega) \neq u_r(a',\omega)$ and $u_i(a,\omega) \neq u_i(a',\omega)$ for all $i \in \{1,...,M\}$; this is generically true. The equilibrium concept is Perfect Bayesian Equilibrium (PBE). 

For any $\beta \in \Delta(\Omega)$ let $A^*(\beta) = argmax_{a \in \mathcal{A}} \sum_{l=1}^N u_r(a,l) \beta_l$. In any PBE, after signals are realized and the receiver updates to posterior belief $\beta \in \Delta(\Omega)$, the receiver takes an action $a \in A^*(\beta)$ that maximizes expected utility (sequential rationality). Let $R_{\text{indiff}} = \{ \beta \in \Delta(\Omega) : |A^*(\beta)|>1 \}$ be the set of posteriors at which the receiver is indifferent between multiple best actions. We assume that at posteriors in $\beta \in R_{\text{indiff}}$ the receiver breaks ties by choosing the lowest indexed action in $A^*(\beta)$. Hence in equilibrium the reciever takes action $a^*(\beta) = argmin_{a_b \in A^*(\beta)}$ $b$; this is well defined and single valued for each $\beta \in \Delta(\Omega)$. 

For $i=1,...,M$ define $v_i : \Delta(\Omega) \rightarrow \mathbb{R}$ as  sender $i$'s expected utility from any posterior $\beta \in \Delta(\Omega)$ in a PBE following the specified tie breaking rule: $v_i(\beta) = \sum_{l=1}^N u_i(a^*(\beta),l) \beta_l$. First we show that in any PBE satisfying this tie-breaking rule, $v_i$'s are zero-sum and piecewise analytic. This will allow us to directly apply our previous results to them.

\begin{claim} \label{finite_action_claim}
In any PBE in which the receiver chooses action $a^*(\beta) = argmin_{a_b \in A^*(\beta)}$ $b$ after signals are realized, $v_1,...,v_M$ are each piecewise analytic and are zero-sum. 
\end{claim}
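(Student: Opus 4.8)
The plan is to treat the two assertions separately; both reduce to the facts that, for a fixed action, every agent's expected payoff is linear in $\beta$, and that the induced action $a^*(\beta)$ is piecewise constant.

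Zero-sumness I would dispatch first, as it is immediate. Exchanging the order of summation gives $\sum_{i=1}^M v_i(\beta) = \sum_{i=1}^M \sum_{l=1}^N u_i(a^*(\beta),l)\beta_l = \sum_{l=1}^N \beta_l \sum_{i=1}^M u_i(a^*(\beta),l)$, and the hypothesis $\sum_{i=1}^M u_i(a,\omega)=0$ (for every fixed $a \in \mathcal{A}$, $\omega \in \Omega$) makes the inner sum vanish at $a = a^*(\beta)$. No case analysis is required.

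For piecewise analyticity, the key observation is that $a^*$ is a piecewise-constant function of the belief, so each $v_i$ is piecewise linear and hence piecewise analytic. Concretely, I would partition $\Delta(\Omega)$ into the regions $R_a = \{\beta \in \Delta(\Omega) : a^*(\beta) = a\}$, one for each $a \in \mathcal{A}$ (at most $A$ in number). Writing $f_a(\beta) = \sum_{l=1}^N u_r(a,l)\beta_l$ for the receiver's expected payoff from $a$, the tie-breaking rule yields $R_a = \bigcap_{a' \in \mathcal{A}} \{\beta : f_a(\beta) \geq f_{a'}(\beta)\} \cap \bigcap_{a' < a} \{\beta : f_a(\beta) > f_{a'}(\beta)\}$: action $a$ must be a receiver-best reply and must strictly beat every lower-indexed action. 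Each defining inequality is linear in $\beta$, so every $R_a$ is an intersection of (open or closed) half-spaces with $\Delta(\Omega)$, hence convex; since $a^*(\beta)$ is single-valued and defined everywhere, the nonempty $R_a$ constitute a finite convex partition of $\Delta(\Omega)$. On each $R_a$ we have $v_i(\beta) = \sum_{l=1}^N u_i(a,l)\beta_l$, which is linear in $\beta$ and therefore real analytic, exhibiting $v_i$ as piecewise analytic in the paper's sense.

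The one point warranting care is the convexity of $R_a$: because the lowest-index tie-breaking rule forces the strict inequalities $f_a > f_{a'}$ for $a' < a$, the set $R_a$ is in general neither open nor closed. I expect this to be the main (though still minor) obstacle, and it is resolved by recalling that both open and closed half-spaces are convex and that convexity is preserved under intersection, so mixing strict and weak inequalities causes no difficulty. The generic assumption $u_r(a,\omega) \neq u_r(a',\omega)$ is not itself needed for convexity; it only ensures the indifference set $R_{\text{indiff}}$ is lower-dimensional, a fact already subsumed in $a^*$ being well defined and single-valued.
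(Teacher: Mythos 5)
Your proof is correct and takes essentially the same approach as the paper: the zero-sum computation is identical, and your partition $\{R_a\}$ is exactly the paper's $\{\Delta_{a_b}\}$, with convexity resting on the same inequality structure (strict against lower-indexed actions, weak against the rest) --- you package it as an intersection of open and closed half-spaces, while the paper verifies convex combinations directly. On each cell $v_i$ is linear, hence real analytic, which is precisely how the paper concludes piecewise analyticity.
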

\begin{proof}

Zero-sumness is trivial. For all $\beta$, $\sum_{i=1}^M v_i(\beta) = \sum_i \sum_{l=1}^N u_i(a^*(\beta),l) \beta_l$

$= \sum_{l=1}^N \beta_l \sum_i u_i(a^*(\beta),l) = 0$.

Now we show piecewise analyicity. For each $a_b \in \mathcal{A}$, let $\Delta_{a_b} = \{\beta \in \Delta(\Omega) : a_b = a^*(\beta) \}$. Note that for every $a_b \in \mathcal{A}$ the set $\Delta_{a_b}$ is convex; for any $\beta, \beta' \in \Delta_{a_b}$ we can see $a^*(\lambda \beta + (1-\lambda) \beta') =a_b$ for all $\lambda \in (0,1)$ as follows. For any $b'<b$ we have:

\begin{align*}
& \sum_{l=1}^N u_r(a_b,l) \beta_l > \sum_{l=1}^N u_r(a_{b'},l) \beta_l \\
& \sum_{l=1}^N u_r(a_b,l) \beta'_l > \sum_{l=1}^N u_r(a_{b'},l) \beta'_l \\
& \implies \sum_{l=1}^N u_r(a_b,l) (\lambda \beta_l + (1-\lambda) \beta'_l) > \sum_{l=1}^N u_r(a_{b'},l) (\lambda \beta_l + (1-\lambda) \beta'_l) \\
\end{align*}

For any $b'>b$ the first two inequalities must hold weakly which implies the third holds weakly. So at $(\lambda \beta_l + (1-\lambda) \beta'_l)$ $a_b$ yields weakly higher utility for the receiver than all higher indexed actions. Together this implies $a^*(\lambda \beta + (1-\lambda) \beta') =a_b$ and hence $\Delta_{a_b}$ is convex. Note that $\{\Delta_{a_b}\}_{b=1,...,A}$ partition $\Delta(\Omega)$; hence they form a partition of $\Delta(\Omega)$ into convex sets. On each $\Delta_{a_b}$, each $v_i = \sum_{l=1}^N u_i(a_b,l) \beta_l$ ($i=1,...,M$) is linear in $\beta$. Hence each $v_i$ is piecewise linear on $\Delta(\Omega)$. So $\{v_i\}_{i=1,...,M}$ are piecewise analytic.
\end{proof}

Lemma \ref{finite_action_lemma} below shows that all $v_i$'s are linear on $\Delta(\Omega')$ if and only if the receiver prefers the same action at all states in $\Omega'$.

\begin{lemma2} \label{finite_action_lemma}
Fix $\Omega' \subseteq \Omega$, $|\Omega'|>1$. $v_1,...,v_M$ are all linear on $\Delta(\Omega')$ if and only if $a^*(\delta_l) = a^*(\delta_k)$ for all $l,k \in \Omega'$.
\end{lemma2}

\begin{proof}
We prove the `only if' direction by proving the contrapositive. Consider any states $l \neq k \in \Omega'$. Suppose $a^*(\delta_l) \neq a^*(\delta_k)$. For $\beta \in \Delta(\{l,k\}) \subseteq \Delta(\Omega')$, the receiver has expected utility $u_r(a,l) \beta_l + u_r(a,k) \beta_k$. Note by our assumption that no agent is indifferent between any two actions at any state, we must have that $u_r(a^*(\delta_l),l)>u_r(a',l)$ for all $a' \neq a^*(\delta_l)$ and $u_r(a',k)<u_r(a^*(\delta_k),k)$ for all $a' \neq a^*(\delta_k)$. By continuity, this implies that there exists $1 >\beta_k^u >  \beta_k^l>0$ such that for $\beta \in \Delta(\{l,k\})$ if: (1) $\beta_k < \beta_k^l$ then $a^*(\beta) = a^*(\delta_l)$ and (2) $\beta_k > \beta_k^u$ then $a^*(\beta) = a^*(\delta_k)$. For $\beta \in \Delta(\{l,k\})$ and $i \in \{1,...,M\}$, $v_i(\beta) = \beta_k u_i(a^*(\beta),k) + (1- \beta_k) u_i(a^*(\beta),l) = u_i(a^*(\beta),l) + \beta_k(u_i(a^*(\beta),k) - u_i(a^*(\beta),l))$. When $\beta_k > \beta_k^u$, $v_i$ is linear in $\beta_k$ with slope $(u_i(a^*(\delta_k),k) - u_i(a^*(\delta_k),l))$. When $\beta_k < \beta_k^l$, $v_i$ is linear in $\beta_k$ with slope $(u_i(a^*(\delta_l),k) - u_i(a^*(\delta_l),l))$. These slopes are different because:

\begin{align*}
& u_i(a^*(\delta_k),k) - u_i(a^*(\delta_l),k)> 0> u_i(a^*(\delta_k),l) - u_i(a^*(\delta_l),l) \\
& \implies u_i(a^*(\delta_k),k) - u_i(a^*(\delta_k),l) \neq u_i(a^*(\delta_l),k) - u_i(a^*(\delta_l),l)
\end{align*}
 
Hence $v_i$ is nonlinear along $\Delta(\{l,k\})$ and hence nonlinear on $\Delta(\Omega') \supseteq \Delta(\{l,k\})$. 

`If' direction. Suppose for all $l \in \Omega'$, $a^*(\delta_l)=a$. Note that $\Delta(\Omega')$ is the convex hull of $\{\delta_l: l \in \Omega'\}$. Then, by convexity of the set $\Delta_a$ (see proof of Claim \ref{finite_action_claim}), we must have $\Delta(\Omega') \subseteq \Delta_a$. For all $i \in \{1,...,M\}$ $v_i$ is linear on $\Delta_a$ (proof of Claim \ref{finite_action_claim}) and hence $\Delta(\Omega')$.
\end{proof}

Finally we prove our main results for the finite action model. Proposition \ref{finite_action_prop} is important as it says that even when the receiver does not fully learn the state, they learn adequately \textemdash that is, enough that further learning would not influence their action.

\begin{prop} \label{finite_action_prop}
In any PBE in which the receiver chooses action $a^*(\beta) = argmin_{a_b \in A^*(\beta)}$ $b$, the receiver takes their first best action w.p. $1$.
\end{prop}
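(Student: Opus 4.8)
The plan is to reduce the finite-action game to the baseline model through the reduced-form sender payoffs $v_1,\dots,v_M$ and then apply Proposition 2 together with Lemma \ref{finite_action_lemma}. First I would note that, once the sequentially rational receiver behavior $a^*(\cdot)$ and the stated tie-breaking rule are fixed, a PBE of the finite-action game is precisely a Nash equilibrium of the baseline sender game whose payoff functions are $v_1,\dots,v_M$. By Claim \ref{finite_action_claim} these functions are zero-sum and piecewise analytic, and after applying the WLOG normalization of Section 2 (which alters neither the set of equilibria nor nonlinearity on any face of the simplex) all of our earlier results apply to them verbatim.

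Next I would translate ``takes the first best action'' into a no-pooling statement. For each state $\omega$ let $a^*(\delta_\omega)=\argmax_{a}u_r(a,\omega)$ denote its first best action, which is a singleton by the genericity assumption that the receiver is never indifferent at a degenerate belief. The receiver plays her first best action at a realized posterior $\beta$ exactly when $a^*(\beta)=a^*(\delta_\omega)$ for the true state $\omega$. The crux is to show that, in every equilibrium, w.p. $1$ all states in the support of the realized posterior share one common first best action.

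To establish this, I would fix any pair of states $l,k$ with $a^*(\delta_l)\neq a^*(\delta_k)$. By the contrapositive of the ``only if'' direction of Lemma \ref{finite_action_lemma}, some $v_i$ is nonlinear on $\Delta(\{l,k\})$, so by Proposition 2 no equilibrium pools $\{l,k\}$; that is, w.p. $1$ the posterior does not assign positive probability to both $l$ and $k$. Since $\Omega$ is finite there are finitely many such pairs, so a union bound gives that w.p. $1$ no distinct-action pair lies in the support of $\beta$ simultaneously. Hence w.p. $1$ the support $\Omega' := \{l:\beta_l>0\}$ consists only of states with a single common first best action $a$. The ``if'' direction of Lemma \ref{finite_action_lemma} --- equivalently, the convexity of $\Delta_a=\{\beta:a^*(\beta)=a\}$ proven in Claim \ref{finite_action_claim} together with $\Delta(\Omega')$ being the convex hull of $\{\delta_l:l\in\Omega'\}$ --- then yields $\Delta(\Omega')\subseteq\Delta_a$, so $a^*(\beta)=a$.

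Finally I would use the standard fact that Bayesian updating never rules out the true state: conditional on the realized state being $\omega$, w.p. $1$ we have $\beta_\omega>0$, so $\omega\in\Omega'$ and therefore $a=a^*(\delta_\omega)$. Combining, w.p. $1$ the receiver plays $a^*(\beta)=a^*(\delta_\omega)$, her first best action at the true state. I expect the main obstacle to be bookkeeping in the reduction rather than new analysis: I must verify carefully that the $v_i$ satisfy every hypothesis needed to invoke Proposition 2 (zero-sumness and piecewise analyticity from Claim \ref{finite_action_claim}, and normalization-invariance of the nonlinearity condition), and that fixing the tie-breaking rule makes the PBE-to-Nash reduction faithful, so that the pooling characterization transfers without modification.
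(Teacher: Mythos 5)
Your proposal is correct and follows essentially the same route as the paper's proof: both reduce the PBE to an equilibrium of the baseline game with payoffs $v_1,\dots,v_M$ (via Claim \ref{finite_action_claim}), combine Proposition 2 with Lemma \ref{finite_action_lemma} to conclude that all states in the support of any realized posterior share a common best action $a$, and finish using convexity of $\Delta_a$ plus the fact that the true state is never ruled out. The only cosmetic difference is that you argue pairwise (contrapositive direction, plus a union bound over pairs) while the paper applies Proposition 2 directly to the pooled support set; your explicit attention to the normalization of the $v_i$'s is a detail the paper leaves implicit.
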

\begin{proof}
Fix a PBE and consider all posteriors induced by the equilibrium experiments with positive probability. At any posterior $\beta \in \{\delta_1,...,\delta_N\}$, the receiver clearly takes their first best action. Now consider any posterior that occurs with positive probability that does not fully reveal the state. At such a $\beta$, there exists $\Omega' \subseteq \Omega$ with $|\Omega'|>1$ and $\beta_l > 0$ for all $l \in \Omega'$. But then $\Omega'$ is being pooled in equilibrium which implies (Proposition 2) that $v_1,...,v_M$ are all linear on $\Delta(\Omega')$ which implies (Lemma \ref{finite_action_lemma}) that there exists $a \in A$ such that $a^*(\delta_l) = a$ for all $l \in \Omega'$. Note that $\omega \in \Omega'$ (all other states are ruled out by $\beta$). By convexity of $\Delta_a$ (proof of Claim \ref{finite_action_claim}), $a^*(\beta)=a=a^*(\delta_{\omega})$. Hence, the receiver's ex-post payoff is always their first best payoff: $u_r(a^*(\omega),\omega)$. 
\end{proof}

The Corollary \ref{cor_finite_action}, finite action characterization of all equilibria being fully revealing follows immediately from Theorem 1 and Lemma \ref{finite_action_lemma}.

\textbf{Corollary \ref{cor_finite_action} proof.}
\begin{proof}
`If' direction. If for a pair of states $l$ and $k$, $a^*(\delta_l) \neq a^*(\delta_k)$, then by Lemma \ref{finite_action_lemma} some $v_i$ is nonlinear (and hence nonlinear) on $\Delta(\{l,k\})$. If this is true for every pair of states, then along every edge of the simplex we have nonlinearity of some sender's utility function which implies (Theorem 1) full revelation in every equilibrium.

`Only if' direction. If $a^*(\delta_l)=a^*(\delta_k)$ for any pair of states $l$ and $k$, then Lemma \ref{finite_action_lemma} implies that all $v_i$'s are linear on $\Delta(\{l,k\})$. This implies (Theorem 1) that there are non-fully revealing equilibria.
\end{proof}

\subsection{Robustness}
Here we consider the robustness of our results to the assumption that preferences are zero-sum. Consider a game identical to the baseline model but with utility functions $u_1,...,u_M$ ($u_i: \Delta(\Omega) \rightarrow \mathbb{R}$) that need not be zero-sum. We assume that all utilities are piecewise analytic and make the normalization $u_i(\delta_l)=0$ for all senders $i$ and states $l$. We adopt notation from the baseline model whenever it obviously carries over. 

Before presenting the robustness results, which concern the information revealed in equilibrium as preferenes approach zero-sum, we discuss what we can say in this more general setting. As in the baseline model, we have no issues with equilibrium existence; for any $u_1,...u_M$ there is a fully revealing equilibrium $(\Gamma^{FR},...,\Gamma^{FR})$.\footnote{The reason this is an equilibrium is the same \textemdash no sender's experiment is pivotal when others are full revealing the state.} 

Of course our results in the paper, starting with Lemma 1, rely on the zero-sumness of preferences and do not generalize to this setting. While Lemma 1 says senders must get their full revelation payoff in \emph{every} equilibrium of a zero-sum game, this is no longer true when preferences may be nonzero-sum. As an example, suppose all senders have the same preferences; then there will be an equilibrium in which one sender plays a single sender optimal experiment (a Bayesian Persuasion, or BP, solution) and all others play $\Gamma^U$. This BP solution may not fully reveal the state and may yield the senders strictly larger payoffs than the $0$ payoff from full revelation.\footnote{In fact whenever there exists a posterior $\gamma'$ that yields the senders utility larger than $0$, then all BP solutions yield utility larger than $0$. To see this, note that there exists an experiment that puts support only on $\gamma'$ and $\delta_1,...,\delta_N$ (such a construction is shown in the proof of Lemma 1). This experiment yields all senders utility strictly greater than $0$.} More generally, when preferences are not zero-sum, agreement between senders (even if it isn't complete agreement) may allow them to play experiments that yield them strictly positive payoffs in equilibrium.

For any $\gamma \in \Delta(\Omega)$ and $u_1,...,u_M$, let $\sum_i u_i(\gamma)$ be the total surplus shared among senders when the receiver's posterior is $\gamma$. Note this surplus is $0$ when $\gamma \in \{\delta_1,...,\delta_N\}$. Let $MS(u_1,...,u_M) = \sup_{\gamma \in \Delta(\Omega)} \sum_{i=1}^M u_i(\gamma)$ be the supremum of the total surplus senders get at any posterior. Note $MS(u_1,...,u_M) =0$ when the game is zero-sum. While we cannot pin down equilibrium payoffs as we did in Lemma 1, we can upperbound them for each sender. While senders may get above utility $0$ in equilibrium, none can attain utility higher than the maximum surplus:

\begin{lemma2} \label{nonzerosum_lemma}
In any equilibrium $(\Gamma_1,...,\Gamma_M)$, for all senders $i=1,...,M$: $0 \leq U_i(\Gamma_1,...,\Gamma_M) \leq MS(u_1,...,u_M)$.
\end{lemma2}

\begin{proof}
The first inequality, $0 \leq U_i(\Gamma_1,...,\Gamma_M)$, follows from the fact that each sender can always fully reveal the state and obtain payoff $0$.

Note that for all strategy profiles $(\Gamma_1',...,\Gamma_M')$ (with distributions $p_1',...,p_M'$), we have: 

$\sum_i U_i(\Gamma_1,...,\Gamma_M) = \sum_i \mathbb{E}_{p_1',...,p_M'}[u_i(\beta(\Gamma_1,...,\Gamma_M))] = \mathbb{E}_{p_1',...,p_M'} [\sum_i u_i(\beta(\Gamma_1,...,\Gamma_M))] \leq \mathbb{E}_{p_1',...,p_M'} [MS(u_1,...,u_M)] = MS(u_1,...,u_M)$.

Suppose for contradiction that for equilibrium $(\Gamma_1,...,\Gamma_M)$ and sender $i$, $U_i(\Gamma_1,...,\Gamma_M) > MS(u_1,...,u_M)$. Then the previous paragraph implies that there exists sender $j$ with $U_j(\Gamma_1,...,\Gamma_M)<0$. But then $j$ can profitably deviate to $\Gamma_j = \Gamma^{FR}$. Contradiction.
\end{proof}

Lemma \ref{nonzerosum_lemma} implies that when $MS(u_1,...,u_M)=0$, all senders get payoff $0$ in every equilibrium. In one special case, when \emph{only} fully revealing posteriors generate this maximum surplus, all equilibria must be fully revealing:

\begin{corollary} \label{nonzerosum_corollary}
If for all $\gamma' \not \in \{\delta_1,...,\delta_M\}$ $\sum_i u_i(\gamma') < 0$, then the state is fully revealed in every equilibrium.
\end{corollary}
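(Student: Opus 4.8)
The plan is to use Lemma~\ref{nonzerosum_lemma} to pin down every sender's equilibrium payoff exactly, and then argue that attaining those payoffs is only possible under full revelation.

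First I would compute $MS(u_1,...,u_M)$ under the hypothesis. By the normalization $u_i(\delta_l)=0$ for all $i,l$, the total surplus $\sum_{i=1}^M u_i(\gamma)$ equals $0$ at every degenerate belief $\delta_l$, while by hypothesis it is strictly negative at every $\gamma \notin \{\delta_1,...,\delta_N\}$. Hence $\sup_{\gamma \in \Delta(\Omega)} \sum_i u_i(\gamma) = 0$, attained at the degenerate beliefs, so $MS(u_1,...,u_M)=0$. Lemma~\ref{nonzerosum_lemma} then sandwiches each sender's payoff, $0 \leq U_i(\Gamma_1,...,\Gamma_M) \leq MS(u_1,...,u_M) = 0$, forcing $U_i(\Gamma_1,...,\Gamma_M)=0$ for every sender $i$ in any equilibrium.

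Next I would exploit that these zero payoffs can only be delivered by degenerate posteriors. Fix any equilibrium $(\Gamma_1,...,\Gamma_M)$ and let $\Gamma$ denote the receiver's induced posterior. Summing over senders, $\sum_i U_i(\Gamma_1,...,\Gamma_M) = \mathbb{E}\big[\sum_i u_i(\Gamma)\big] = 0$. But $\sum_i u_i(\gamma) \leq 0$ for every $\gamma$, with equality exactly at $\{\delta_1,...,\delta_N\}$, so $\sum_i u_i(\Gamma)$ is a nonpositive random variable with mean zero; therefore $\sum_i u_i(\Gamma)=0$ with probability $1$, which forces $\Gamma \in \{\delta_1,...,\delta_N\}$ with probability $1$. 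That is precisely full revelation.

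I expect no serious obstacle here: the argument is short once $MS=0$ is established, and the only point requiring a moment's care is that the supremum defining $MS$ need not be attained in the interior of the simplex, but it is attained (and equals $0$) at the degenerate beliefs, which is all that the bound needs. Conceptually, the hypothesis makes full revelation the \emph{unique} surplus-maximizing outcome, and Lemma~\ref{nonzerosum_lemma} pins every equilibrium onto this surplus frontier, so the two combine immediately.
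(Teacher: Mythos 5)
Your proposal is correct and follows essentially the same route as the paper: both compute $MS(u_1,...,u_M)=0$, invoke Lemma~\ref{nonzerosum_lemma}, and exploit that total surplus is strictly negative at every non-degenerate posterior to rule out non-fully-revealing equilibria. The only cosmetic difference is packaging \textemdash the paper argues by contradiction that a non-fully-revealing profile gives some sender $U_j<0$, who would then deviate to full revelation, whereas you first pin $U_i=0$ for all $i$ via the sandwich and then note that a nonpositive random variable with zero mean is almost surely zero; these are logically equivalent steps.
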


\begin{proof}
Note that $MS(u_1,...,u_M) = 0$. For any strategy profile $(\Gamma_1,...,\Gamma_M)$ that is fully revealing, $U_i(\Gamma_1,...,\Gamma_M) = 0 $ for all $i$. For $(\Gamma_1,...,\Gamma_M)$ that is not fully revealing, $\sum_i U_i(\Gamma_1,...,\Gamma_M) <0$, as with strictly positive probability the surplus at the posterior is strictly less than $0$ (when the state is not fully revealed). This implies that for each non-fully revealing $(\Gamma_1,...,\Gamma_M)$, there is a sender $j$ such that $U_j(\Gamma_1,...,\Gamma_M)<0$. Hence, by Lemma \ref{nonzerosum_lemma}, the state is fully revealed in equilibrium.
\end{proof}

The logic of Corollary \ref{nonzerosum_corollary} is similar to that of Proposition 1 of \cite{gentzkow2016competition}. If sender surplus is uniquely maximized at fully revealing posteriors, any non-fully revealing strategy profile leaves at least one sender strictly worse off than full revelation, which is always an available strategy. 

We now turn our attention to what we can say as preferences approach zero-sum. We consider convergence of utilities under the $\sup$ norm. A sequence of utility functions $\{u^k\}_{k=1}^{\infty}$ converges to a function $u$, or $u_k \rightarrow u$, if $\lim_{k \rightarrow \infty} \sup_{\gamma \in \Delta(\Omega)} |u^k(\gamma) - u(\gamma)| = 0$. For a sequence of profiles of utility functions $\{(u_1^k,...,u_M^k)\}_{k=1}^{\infty}$ (for notational convenience we will drop the limits), $(u_1^k,...,u_M^k) \rightarrow (u_1,...,u_M)$ if $u_i^k\rightarrow u_i$ for $i=1,...,M$. 

For a sequence of strategies/experiments $\{\Gamma^k\}_{k=1}^{\infty}$, with each $\Gamma^k$ distributed according to pmf $p^k$, we say $\Gamma^k \rightarrow \Gamma$ (where $\Gamma$ has distribution $p$), or $\Gamma^k$ converges in distribution to $\Gamma$,  if for all $\gamma \in \Delta(\Omega)$, $\lim_{k \rightarrow \infty} p_k(\gamma)=p(\gamma)$. For any strategy profile $(\Gamma_1,...,\Gamma_M)$, let the random variable $\Gamma(\Gamma_1,...,\Gamma_M)$ denote the receiver's posterior after observing realizations of all $\Gamma_1,...,\Gamma_M$ (i.e. the experiment induced by combining all $M$ senders' experiments).

The following result says that as utilities converge to zero-sum globally nonlinear functions, the information revealed along any sequence of equilibria, whenever convergent, converges to full revelation. This is the same Proposition \ref{prop_robustness} from Section 4.4 stated more formally.

\textbf{Proposition \ref{prop_robustness}.} \emph{Fix a sequence of games with utilities $\{(u_1^k,...,u_M^k)\}$ with $(u_1^k,...,u_M^k) \rightarrow (u_1,...,u_M)$ and $\sum_i u_i(\gamma)=0$ for all $\gamma \in \Delta(\Omega)$. For each $k$ let $(\Gamma_1^k,...,\Gamma_M^k)$ be an equilibrium of game $(u_1^k,...,u_M^k)$. Suppose for every pair of states $l,k \in \Omega$ there exists an $i$ with $u_i$ nonlinear on $\Delta(\{l,k\})$. Then if  $\Gamma(\Gamma_1^k,...,\Gamma_M^k) \rightarrow \Gamma$, $\Gamma=\Gamma^{FR}$.}

Proposition \ref{prop_robustness} is important, as indicates that Theorem 1 does not qualitatively rely on the knife-edge assumption of zero-sum preferences. As utilities get close to zero-sum and globally nonlinear, the information revealed in every equilibrium (if it converges) gets close to full revelation. Before proving the result, we prove a lemma which shows Proposition 2 is similarly robust.

\begin{lemma2} \label{robust_lemma}
Fix a sequence of games with utilities $\{(u_1^k,...,u_M^k)\}$ with $(u_1^k,...,u_M^k) \rightarrow (u_1,...,u_M)$ and $\sum_i u_i(\gamma)=0$ for all $\gamma \in \Delta(\Omega)$. For each $k$ let $(\Gamma_1^k,...,\Gamma_M^k)$ be an equilibrium for $(u_1^k,...,u_M^k)$. Suppose for some $\Omega' \subseteq \Omega$ and $i$, $u_i$ is nonlinear on $\Delta(\Omega')$. Then if $\Gamma(\Gamma_1^k,...,\Gamma_M^k) \rightarrow \Gamma$, $\Gamma$ does not pool $\Omega'$.
\end{lemma2}

\begin{proof}
For strategy profile $(\Gamma_1',...,\Gamma_M')$ (with distributions $p_1',...,p_M'$), let $U_i^k(\Gamma_1',...,\Gamma_M') = \mathbb{E}_{p_1',...,p_M'}[u_i^k(\beta)]$ be sender $i$'s expected utility when she has preferences $u_i^k$. 

Let the distribution of $\Gamma$ be $p$ and for any $k$ let the distribution of $\Gamma(\Gamma^k_1,...,\Gamma^k_M)$ be $p^k$. Note that as $\Gamma(\Gamma_1^k,...,\Gamma_M^k)$ has finite support for every $k$, $\Gamma$ must also finite support (by the definition of convergence in distribution). Define $T = supp[\Gamma] \cup (\cup_{k=1}^{\infty} supp[\Gamma(\Gamma_1^k,...\Gamma_M^k)])$; note by the finiteness of all terms in the union, $T$ is countable.

Suppose for contradiction that some $u_i$ is nonlinear on $\Omega'$ and $\Gamma$ pools $\Omega'$. By Claim \ref{strong_claim_prop2}, there exists a sender $j$ such who can find experiment $\Gamma_j$ (with distribution $p_j$) such that in the limiting (zero-sum) game $(u_1,...,u_M)$, $U_j(\Gamma_j,\Gamma)=c>0$.

For any $k$, $MS(u_1^k,...,u_M^k) = \sup_{\gamma} \sum_i u_i^k(\gamma) = \sup_{\gamma} \sum_i u_i(\gamma) + (u_i^k(\gamma)-u_i(\gamma)) \leq  \sup_{\gamma} \sum_i u_i(\gamma) + |u_i^k(\gamma)-u_i(\gamma)|$. As  $(u^k_1,...,u^k_M) \rightarrow (u_1,...,u_M)$, the second term goes to $0$ for all $i$ as $k \rightarrow \infty$ and hence $MS(u_1^k,...,u_M^k) \rightarrow MS(u_1,...,u_M) = 0$. This implies there exists $K$ s.t. $\forall k>K$, $MS(u_1^k,...,u_M^k) < c/2$. By Lemma \ref{nonzerosum_lemma}, for all $k>K$, $U^k_j(\Gamma^k_j,\Gamma^k_{-j})<c/2$. 

Consider $j$ playing experiment $\Gamma^k_j$ as well as (conditionally independently) playing $\Gamma_j$, while her opponents' play $\Gamma^k_{-j}$. The expected payoff that $j$ gets from this can be written $U^k_j(\Gamma(\Gamma_j,\Gamma^k_j),\Gamma^k_{-j}) = U^k_j(\Gamma_j,\Gamma(\Gamma^k_1,...,\Gamma^k_M))$, as it does not affect $j$'s payoff if she plays $\Gamma^k_{j}$ or her opponents' do.  As $T \supset supp[\Gamma(\Gamma^k_1,...,\Gamma^k_M)]$ for each $k$, we can write $U^k_j(\Gamma_j,\Gamma(\Gamma^k_1,...,\Gamma^k_M)) = \sum_{x \in supp[\Gamma_j]} \sum_{y \in supp[T]} u^k_j(\beta(x,y))p^k(y|x)p_j(x)$. Then:

\begin{align*}
& \lim_{k \rightarrow \infty} U^k_j(\Gamma_j,\Gamma(\Gamma^k_1,...,\Gamma^k_M)) = \sum_{x \in supp[\Gamma_j]} \sum_{y \in supp[T]} \lim_{k \rightarrow \infty} u^k_j(\beta(x,y))p^k(y|x)p_j(x)
\end{align*}

For any $y \in T$, $\lim_{k \rightarrow \infty} u^k_j(\beta(x,y))= u_j(\beta(x,y))$ and $\lim_{k \rightarrow \infty} p^k(y|x) = p(y|x)$ (by definition of $p^k(y|x)$ and convergence in distribution). Hence:

\begin{align*}
& \lim_{k \rightarrow \infty} U^k_j(\Gamma_j,\Gamma(\Gamma^k_1,...,\Gamma^k_M)) = \sum_{x \in supp[\Gamma_j]} \sum_{y \in supp[T]} u_j(\beta(x,y))p(y|x)p_j(x) = U_j(\Gamma_j,\Gamma) = c 
\end{align*}

This implies that there exists $K'$ such that $\forall k>K'$, $U^k_j(\Gamma_j,\Gamma(\Gamma^k_1,...,\Gamma^k_M))>c/2$. Take $K'' = \max\{K,K'\}$. For all $k > K''$ we have: $U^k_j(\Gamma^k_j,\Gamma^k_{-j})<c/2$ and $U^k_j(\Gamma_j,\Gamma(\Gamma^k_1,...,\Gamma^k_M))>c/2$. But for all $k>K''$, this contradicts that $(\Gamma_1^k,...,\Gamma_M^k)$ is an equilibrium as $j$ has a profitable deviation of $\Gamma(\Gamma_j,\Gamma^k_j)$.
\end{proof}

We now prove Proposition \ref{prop_robustness}. 

\begin{proof}
Suppose not. Then $\Gamma(\Gamma^k_1,...,\Gamma^k_M) \rightarrow \Gamma \neq \Gamma^{FR}$. Then $\Gamma$ pools some set of states $\Omega'$ with $|\Omega'| \geq 2$; let $l,k \in \Omega'$. There is a sender with $u_i$ nonlinear on $\Delta(\{l,k\})$; hence $u_i$ is nonlinear on $\Delta(\Omega')$. But then by Lemma \ref{robust_lemma} $\Gamma$ must not pool $\Omega'$. Contradiction.
\end{proof}

Proposition \ref{prop_robustness} relates to the standard results on the upper hemicontinuity of the set of equilibria (although here we are not concerned with the set of equilibrium actions themselves but instead the set of information that could be revealed in equilibrium). As is also standard, we do not have the corresponding lower hemicontinuity properties. In particular, it is possible for there to be non-fully revealing equilibria in the limit, but only fully revealing equilibria along the sequence. It is not hard to come up with examples of this; we provide a one here.

\begin{example} \label{no_lhc_robust}
Suppose $\Omega=\{0,1\}$ and there are two senders $1$ and $2$. Consider the sequence of utility functions $\{(u_1^k,u_2^k)\}$ with $u^k_1(\beta) = \frac{-1}{k}$ for all $\beta \in [0,1] \setminus \{0,1\}$, $u_1^k(0)=u_1^k(1)=0$, and $u_2(\beta)=0$ for all $\beta \in [0,1]$. Define utility function $u$ as $u(\beta)=0$ for all $\beta$. Then note $(u_1^k,u_2^k) \rightarrow (u,u)$. Proposition 1 states that in the game $(u,u)$, there are non-fully revealing equilibria (any strategy profile is an equilibrium). Note for any $k$, the game $(u_1^k,u_2^k)$, $u_1^k(\beta)+u_2^k(\beta)<0$ for all $\beta \not \in \{0,1\}$. Hence by Corollary \ref{nonzerosum_corollary}, all equilibria of $(u_1^k,u_2^k)$ are fully revealing for all $k$.
\end{example}

\subsection{Infinite signal experiments} \label{section_inf}

So far we have restricted senders to choosing experiments with a finite number of signals, or equivalently interim belief distributions $p_i \in P$ that have finite support. In this section we demonstrate that our takeaway from the finite signal results \textemdash that for typical sender preferences we have full revelation in every equilibrium \textemdash extends when senders can choose from a more general set of experiments. Senders now choose any experiments $\Pi: \Omega \rightarrow \Delta(S)$ (with no restrictions on the signal space). Again, we recast choices of experiments as choices of interim belief distributions. For technical convenience we restrict our attention to senders choosing interim belief distributions that can be written as the sum of absolutely continuous and discrete distributions.\footnote{As $\Delta(\Omega) \subset \mathbb{R}^N$, by the Lebesgue Decomposition Theorem this only rules senders choosing distributions with singular continuous components.} We will call the space of pure strategies, or the set of Bayes-plausible distributions that satisfies this requirement $G$. Formally $G = \{g \in \Delta(\Delta(\Omega)): \mathbb{E}_g[\Gamma_i]= \pi,$ $g=g_c+g_d$ for some abs. cont. and discrete (respectively) measures $g_c,g_d \in \Delta(\Delta(\Omega))\}$. Each $g \in G$ is a generalized density.\footnote{$g$ is the density function $g_c$ on intervals where the distribution is absolutely continuous and the probability mass function $g_d$ everywhere else.} A strategy for sender $i$ is a choice of random variable $\Gamma_i$ with generalized density $g_i \in G$. Preferences over strategy profiles for a sender $i$ are given by $U_i(\Gamma_1,...,\Gamma_M) = \mathbb{E}_{g_1,...,g_M}[u_i(\beta)]$. The game and equilibrium concept are otherwise identical to the finite signal model (including our normalization of the $u_i$'s).

\textbf{Remark:} Note that our equilibrium analysis under both the finite signal restriction and under the technical restriction above can be seen as equilibrium selections. Any finite signal equilibrium will also be an equilibrium when senders are allowed to pick strategies from $G$ and any equilibria with strategies selected from $G$ will be equilibria in a game where senders can pick any distribution in $\Delta(\Delta(\Omega))$. 

We make one additional technical assumption on utility functions:

\begin{assp} \label{technical_assp}
For each $l \in \Omega$ and $i \in \{1,...,M\}$, $u_i$ is real analytic in some neighborhood of $\delta_l$. 
\end{assp}

Note that Assumption \ref{technical_assp} only rules out piecewise analytic utility functions for which some $\delta_l$ lies on the boundary between different pieces. For any states $l,k$ let $v^{l,k} \in \mathbb{R}^{N}$ be the vector from $\delta_l$ to $\delta_k$.\footnote{$v^{l,k}_k=1$, $v^{l,k}_l = - 1$, $v^{l,k}_n = 0$ for all $n \neq l,k$.} For any sender $i$ let $\nabla_{v^{l,k}} u_i(\cdot)$ be the directional derivative of $u_i$ moving along $v^{l,k}$. Note for some $i$ $\nabla_{v^{l,k}} u_i(\cdot)$ may not be well defined at some points on $\Delta(\Omega)$; but under Assumption 7.1  for all $i$ and all $l$ it is a well defined continuous function in some neighborhood of $\delta_l$. With this assumption we can define Condition \ref{condition1}. Condition \ref{condition1} concerns the shape of utility functions on  an edge of the simplex and will be sufficient for a pair of states $l$ and $k$ to not be pooled in every equilibrium. 

\begin{definition} \label{condition1}
For any states $l,k$ and sender $i$ we say that $u_i$ satisfies Condition \ref{condition1} on $\Delta(\{l,k\})$ if either $\nabla_{v^{l,k}} u_i(\delta_l) \neq u_i(\delta_k)-u_i(\delta_l)$, $\nabla_{v^{l,k}} u_i(\delta_k) \neq u_i(\delta_k)-u_i(\delta_l)$, or both.\footnote{Under the normalization we made, $u_i(\delta_k)-u_i(\delta_l)=0$.}
\end{definition}

For example, utility functions that look like Figure 7.2 along edge $\Delta(\{l,k\})$ do not satisfy Condition \ref{condition1}; nor does a utility function that is linear along that edge. Functions that look this those in Figure 1 do satisfy Condition \ref{condition1}.

\begin{center}
\includegraphics[scale=0.5]{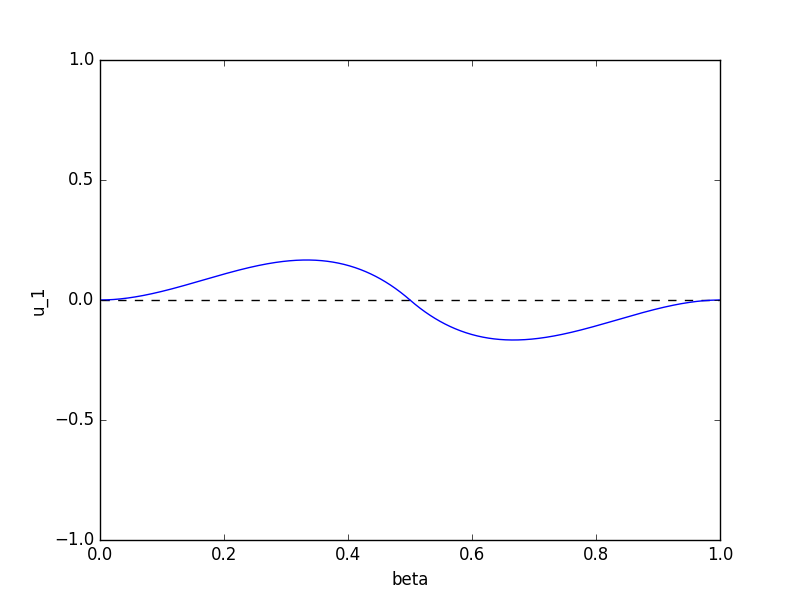}
\captionof{figure}{}
\end{center}

\begin{prop2} \label{prop_pooling_inf}
For any pair of states $l$ and $k$, if there exists a sender $i$ with $u_i$ satisfying Condition \ref{condition1} on $\Delta(\{l,k\})$, then $l$ and $k$ are not pooled in every equilibrium.
\end{prop2}

We prove Proposition \ref{prop_pooling_inf} after redefining some objects for this setting. For any sender $i$ and strategy profile for all opponents $\{\Gamma_j\}_{j \neq i}$, we define $W_i(x)$ for $x \in \Delta(\Omega)$ analogously to the finite signal case.

\begin{equation}
W_i(x) = \int_{\Delta(\Omega)} u_i(\beta(x,y)) p_{-i}(y|x) dy 
\end{equation}

For any vector $v \in \mathbb{R}^N$ and $y \in \Delta(\Omega)$, let $\nabla_v p_{-i}(y|x)$ and $\nabla_v \beta_k(x,y)$ (for any state $k$) be the directional derivates of these two functions with respect to $x$ along vector $v$. Note the following. 
 
\begin{equation} \label{d_p(y|x)}
\nabla_v p_{-i}(y|x) = \nabla_v \sum_{l=1}^N \frac{x_l y_l p_{-i}(y)}{\pi_l} = \sum_{l=1}^N \frac{v_l y_l p_{-i}(y)}{\pi_l}
\end{equation}
  
\begin{equation} \label{d_beta}
\nabla_v \beta_k(x,y) = \nabla_v \frac{\frac{x_k y_k}{\pi_k}}{\sum_{l=1}^N \frac{x_l y_l}{\pi_l}} = - \sum_{l=1}^N \frac{\frac{v_l y_l}{\pi_l} \frac{x_k y_k}{\pi_k}}{(\sum_{n=1}^N \frac{x_n y_n}{\pi_n})^2} + \frac{\frac{v_k y_k}{\pi_k}}{\sum_{l=1}^N x_l y_l / \pi_l}
\end{equation} 

For any states $l,k \in \{1,...,N\}$ let $\Delta^{int}(\{l,k\}) = \Delta(\{l,k\}) \setminus \{\delta_l,\delta_k\}$ be the nondegenerate beliefs in $\Delta(\{l,k\})$. Let $\Delta^0(\{l,k\}) = \{\beta \in \Delta(\Omega): \beta_l, \beta_k=0\}$ and $\Delta^1(\{l,k\}) = \{\beta \in \Delta(\Omega): \beta_l+\beta_k<1$; $\beta_l$ and/or $\beta_k \neq 0 \}$. Note $\{\delta_l,\delta_k\} \cup \Delta^{int}(\{l,k\}) \cup \Delta^0(\{l,k\}) \cup \Delta^1(\{l,k\}) = \Delta(\Omega)$ and all four sets are disjoint.

We now prove Proposition \ref{prop_pooling_inf}

\begin{proof}
First note Lemma 1 still holds in this context by an identical proof. In any equilibrium $(\Gamma_1,...,\Gamma_M)$ all senders $i$ have: $U_i(\Gamma_1,...,\Gamma_M)=0$ and $W_i(x) \leq 0$ for all $x \in \Delta(\Omega)$.

WLOG let $l=1$, $k=2$. For notational convenience let $v = v^{1,2}$. Suppose $u_i$ satisfies Condition \ref{condition1} on $\Delta(\{1,2\})$ for some $i$. WLOG we consider the case $\nabla_v u_i(\delta_2) \neq u_i(\delta_2) - u_i(\delta_1)=0$. Then by zero-sumness (derivatives must also be zero-sum where they exist for all senders) there exists a sender $j$ with $\nabla_v u_j(\delta_2)=c<0$. We will prove that $\Gamma_{-j}$ must not pool $\{l,k\}$; this clearly implies the Proposition \ref{prop_pooling_inf} as by Claim \ref{not_pool}, $(\Gamma_1,...,\Gamma_M)$ also will not pool $\{l,k\}$.

Suppose for contradiction there is an equilibrium $(\Gamma_1,...,\Gamma_M)$ such that $\Gamma_{-j}$ pools $\{l,k\}$. For $x \in \Delta(\Omega)$, by the product rule and the partition of $\Delta(\Omega)$ into $\{\delta_1,\delta_2\}, \Delta^{int}(\{1,2\})$, $\Delta^0(\{1,2\}), \Delta^1(\{1,2\})$:

\begin{align*}
& \nabla_v W_j(x) = \nabla_v (u_j(\beta(x,\delta_2))p_{-j}(\delta_2|x)+ u_j(\beta(x,\delta_2))  \nabla_v p_{-j}(\delta_2|x)\\
& + \nabla_v (u_j(\beta(x,\delta_1))p_{-j}(\delta_1|x) + u_j(\beta(x,\delta_1)) \nabla_v p_{-j}(\delta_1|x) \\
& + \int_{\Delta^{int}(\{1,2\})} \nabla_v u_j(\beta(x,y))p_{-j}(y|x)dy + \int_{\Delta^{int}(\{1,2\})} u_j(\beta(x,y)) \nabla_v p_{-j}(y|x)dy \\
& + \int_{\Delta^{0}(\{1,2\})} \nabla_v u_j(\beta(x,y))p_{-j}(y|x)dy + \int_{\Delta^{0}(\{1,2\})} u_j(\beta(x,y)) \nabla_v p_{-j}(y|x)dy \\
& + \int_{\Delta^{1}(\{1,2\})} \nabla_v u_j(\beta(x,y))p_{-j}(y|x)dy + \int_{\Delta^{1}(\{1,2\})} u_j(\beta(x,y)) \nabla_v p_{-j}(y|x)dy
\end{align*}

though this may not be well defined for some $x$. Note:

\begin{align*}
& \nabla_v(u_j(\beta(x,y))p_{-j}(\delta_2|x)) = \sum_{k=1}^N \frac{\partial u_j(\beta(x,y))}{\partial \beta_k} \nabla_v \beta_k(x,y) p_{-j}(y|x) \\
\end{align*}

(again this may not be well defined for some $x$). Now consider $x \in \Delta(\{1,2\})$. For such $x$, with some algebra:

\begin{equation} \label{derivative}
\nabla_v \beta_k(x,y) p_{-j}(y|x) = \frac{(y_1/\pi_1 - y_2/\pi_2) \frac{x_k y_k}{\pi_k}}{\frac{x_1y_1}{\pi_1}+ \frac{x_2y_2}{\pi_2}} + \frac{v_k y_k}{\pi_k}
\end{equation}

One can check that for $y \in \Delta^0(\{1,2\})$ and for $y \in \{\delta_1,\delta_2\}$ this expression is $0$ for all $k=1,...,N$. This tells us that the 1st, 3rd, and 7th terms of $\nabla_v W_j(x)$ are $0$ for $x \in \Delta(\{1,2\})$. Note that for $y \in \Delta(\{1,2\})^0$, $\nabla_v p_{-i}(y|x) = 0$, and so the 8th term is also $0$.

We consider the limit of $\nabla_v W_j(x)$ for $x \in \Delta(\{1,2\})$ as $x \rightarrow \delta_2$. We show this limit exists and is negative.

\begin{align*}
& \lim_{x \rightarrow \delta_2} \nabla_v W_j(x) = \lim_{x \rightarrow \delta_2}  u_j(\beta(x,\delta_2))  \nabla_v p_{-j}(\delta_2|x) + \lim_{x \rightarrow \delta_2}  u_j(\beta(x,\delta_1))  \nabla_v p_{-j}(\delta_1|x) \\ 
& + \int_{\Delta^{int}(\{1,2\})} \sum_{k=1}^N \lim_{x \rightarrow \delta_2} \frac{\partial u_j(\beta(x,y))}{\partial \beta_k}  \nabla_v \beta_k(x,y) p_{-j}(y|x) dy + \int_{\Delta^{int}(\{1,2\})} \lim_{x \rightarrow \delta_2} u_j(\beta(x,y)) \nabla_v p_{-j}(y|x)dy \\
& + \int_{\Delta^{1}(\{1,2\})} \sum_{k=1}^N \lim_{x \rightarrow \delta_2} \frac{\partial u_j(\beta(x,y))}{\partial \beta_k} \nabla_v \beta_k(x,y) p_{-j}(y|x) dy + \int_{\Delta^{1}(\{1,2\})} \lim_{x \rightarrow \delta_2} u_j(\beta(x,y)) \nabla_v p_{-j}(y|x)dy
\end{align*}

We evaluate this term by term. The first term is $0$ as $\nabla_v p_{-j}(\delta_2|x)$ is finite and does not depend on $x$ and $\lim_{x \rightarrow \delta_2}  \beta(x,\delta_2) = \delta_2$ which implies (by continuity of $u_j$ in a neighborhood of $\delta_2$) $\lim_{x \rightarrow \delta_2} u_j(\beta(x,\delta_2)) = 0$. The second term is also $0$ as $\nabla_v p_{-j}(\delta_1|x)$ is finite and does not depend on $x$ and $\lim_{x \rightarrow \delta_2}  \beta(x,\delta_1) = \delta_1$ (by L'Hopital's rule) which implies (by continuity of $u_j$ in a neighborhood of $\delta_1$) $\lim_{x \rightarrow \delta_2} u_j(\beta(x,\delta_1)) = 0$. 

The fourth term is also $0$ as for all $y \in \Delta^{int}(\{1,2\})$, $\lim_{x \rightarrow \delta_2}  \beta(x,y) = \delta_2$ and $u_j(\delta_2) = 0 $ while $\nabla_v p_{-j}(y|x)$ is finite and does not depend on $x$. The sixth term is also $0$ for the following reason. For $y \in \Delta^1(\{1,2\})$ with $y_2 > 0$, $\lim_{x \rightarrow \delta_2} \beta(x,y) = \delta_2$; as $u_j(\delta_2)=0$, the terms inside the integral are $0$ when $y_2>0$. For $y \in \Delta^1(\{1,2\})$ with $y_2=0$, we must have $y_1>0$; then by L'Hopital's rule $\lim_{x \rightarrow \delta_2} \beta(x,y) = \delta_1$ and $u_j(\delta_1)=0$ meaning terms inside the interal are $0$ when $y_2=0$. 

Using \eqref{derivative} note that $\nabla_v \beta_k(x,y) p_{-j}(y|x) = 0$ for all $k \neq 1,2$ as for such $k$ $x_k=v_k=0$. For $y \in \Delta(\{1,2\})^{int}$ we have: $\lim_{x \rightarrow \delta_2} \nabla_v \beta_1(x,y) p_{-j}(y|x) = \frac{-y_1}{\pi_1}p_{-j}(y)$ and $\lim_{x \rightarrow \delta_2} \nabla_v \beta_2(x,y) p_{-j}(y|x) = \frac{y_1}{\pi_1}p_{-j}(y)$. The same holds for $y \in \Delta^1(\{1,2\})$ with $y_2>0$. For $y \in \Delta^1(\{1,2\})$ with $y_2=0$, we have: $\lim_{x \rightarrow \delta_2} \nabla_v \beta_1(x,y) p_{-j}(y|x) =\lim_{x \rightarrow \delta_2} \nabla_v \beta_2(x,y) p_{-j}(y|x)=0$ (applying L'Hopital's rule). 

Putting this together:
\begin{equation} \label{delta_W_final}
\begin{split}
& \lim_{x \rightarrow \delta_2} \nabla_v W_j(x) = \int_{\Delta^{int}(\{1,2\}) \cup \{y \in \Delta^1(\{1,2\}): y_2>0\}} [\frac{\partial u_j(\delta_2)}{\partial \beta_2}  - \frac{\partial u_j(\delta_2)}{\partial \beta_1} ] \frac{y_1}{\pi_1} p_{-j}(y) dy \\
& = \nabla_v u_j(\delta_2) \int_{\Delta^{int}(\{1,2\}) \cup \{y \in \Delta^1(\{1,2\}): y_2>0\}} \frac{y_1}{\pi_1} p_{-j}(y) dy \\
& = c \int_{\Delta^{int}(\{1,2\}) \cup \{y \in \Delta^1(\{1,2\}): y_2>0\}} \frac{y_1}{\pi_1} p_{-j}(y) dy \\
\end{split}
\end{equation}

As $\Gamma_{-j}$ pools $\Omega'$, there exists $y \in supp[\Gamma]$ for which $y_1,y_2>0$. Such a $y$ must fall inside the set $(\Delta^{int}(\{1,2\}) \cup \{y \in \Delta^1(\{1,2\}): y_2>0\})$ (any point in the completement of this set assigns probability $0$ to at least one of states $1,2$.). This implies that there are $y \in (\Delta^{int}(\{1,2\}) \cup \{y \in \Delta^1(\{1,2\}): y_2>0\})$ for which $p_j(y)>0$ and $y_1>0$. As $y_1' \geq 0$ for all $y' \in \Delta(\Omega)$, the integral on the righthand side of equation \ref{delta_W_final} is strictly positive. As $c<0$, we have $\lim_{x \rightarrow \delta_2}\nabla_v W_j(x) <0$.

But as $W_j(\delta_2)=0$, this implies that for some $x^* \in \Delta(\{1,2\})$ close enough to $\delta_2$, we must have $W_j(x^*)>0$, contradicting Lemma 1.

\end{proof}

Condition \ref{condition1} holding on each edge for some sender is a sufficient condition for full revelation in any equilibrium:

\textbf{Proposition \ref{prop_inf}.} \emph{If for every pair of states $l$ and $k$ there exists a sender $i$ such that $u_i$ satisfies Condition \ref{condition1} on $\Delta(\{l,k\})$ then the state is fully revealed in every equilibrium in which senders choose experiments from $G$.}

\begin{proof}
If for every pair $l,k$ there is a sender with $u_i$ satisfying Condition \ref{condition1} on $\Delta(\{l,k\})$, then by Proposition \ref{prop_pooling_inf}, no pair of states is pooled in any equilibrium. Hence w.p. $1$ the posterior assigns positive probability to only $1$ state and hence must fully reveal that state.
\end{proof}

Note that a function not satisfying Condition \ref{condition1} along a given edge is knife-edge \textemdash it requires a particular directional derivative to take a certain value at two points. If no sender has a utility function satisfying Condition \ref{condition1} along an edge, this is even more particular. Hence we `typically' expect Condition \ref{condition1} to be satified on each edge for some $u_i$ and so Proposition \ref{prop_inf} says we should typically expect fully revelation in every equilibrium.

\subsection{Privately informed senders}

Consider our baseline model with one modification: each sender receives a private signal before the game. For simplicity, senders' private signals are realizations of finite signal experiments that are conditionally (on $\omega$) independent across senders.\footnote{These assumptions are not necessary.} We think of the experiments in terms of the beliefs they induce. Formally each sender $i$ draws a private belief $b_i \in \Delta(\Omega)$ with $b_i \sim B_i \in P$, $|supp[b_i]| < \infty$, and $\mathbb{E}[b_i]=\pi$ (Bayes-plausibility). The distributions $\{B_i\}_{i=1}^N$ are conditionally independent. We make one more assumption: that for each $\Omega' \subsetneq \Omega$, $supp[b_i]\cap \Delta(\Omega') = \emptyset$ for all $i$; no sender's private information rules out any states.\footnote{Having already made the assumption of finite signals, this assumption is equivalent saying signals are bounded.} 

A pure strategy for sender $i$ is a mapping from private beliefs (or types) to choices of finite signal experiments: $\sigma_i: supp[b_i] \rightarrow P$. As before, we use $\Gamma_i$ to denote the interim belief produced by sender $i$'s experiment. $i$ chooses the distribution of $\Gamma_i$, $p_i \in P$ after observing her own type. Importantly, we define $\Gamma_i$ to be the interim belief the receiver holds after viewing realization of $i$'s experiment but \emph{without} updating her belief on $\Omega$ from observing the choice of $p_i$ (we formalize this updating in the next paragraph). Hence $\Gamma_i$ is the receiver's learning from $i$'s experiment ignoring information from signalling. A pure strategy profile is a vector $(\sigma_1,...,\sigma_M)$.

The receiver's posterior belief $\beta$ is a function of the signal realizations she observes as well as the experiment choices she observes. The receiver will form beliefs about each $\{b_i\}_{i=1,...,M}$ independently via a belief function $\mu^i: \Delta(\Delta(\Omega)) \rightarrow \Delta(supp[b_i])$ ($i=1,...,M$) which maps choices of experiment to a belief on the sender's type.\footnote{While other senders' experiment choices and the realizations of $(\Gamma_1,...,\Gamma_M)$ will affect the receiver's belief about each $b_i$, as players' types and experiment realizations are conditionally independent they will only affect the receiver's beliefs through learning about $\omega$. This updating will hence not affect the receiver's belief on $\omega$, which is all that players care about. The functions $\{\mu^i\}$ are what is important for evaluating senders' payoffs.} For an experiment choice of $p_i$ by sender $i$, let $\mu_i(p_i)[b]$ denote the probability the receiver assigns to $b_i =b$ under belief function $\mu^i$. For $i=1,...,M$ let $\tau^i(\mu^i,p_i) \in \Delta(\Omega)$ be the receiver's belief on $\omega$ given belief function $\mu^i$ after observing experiment choice $p_i$ from sender $i$ but not its signal realization or any other senders' experiment choices are realizations. Then for each state $l \in \Omega$:

\begin{equation}
\tau^i_l(\mu^i,p_i) = \sum_{b \in supp[b_i]} Pr(\omega=l|b_i=b) Pr(b_i=b|p_i) = \sum_{b \in supp[b_i]} b_l \mu^i(p_i)[b]
\end{equation}

For any sender $i$ let $\alpha^i(\Gamma_i,p_i,\mu^i) \in \Delta(\Omega)$ be the random variable representing the receiver's interim belief given $\mu^i$ after observing just the choice $p_i$ and the realization of $\Gamma_i$. $\alpha^i$ hence captures the receiver's belief after taking into account all information \textemdash signalling and otherwise \textemdash from sender $i$. For any $l \in \Omega$:

\begin{equation}\label{alpha}
\alpha_l^i(\Gamma_i,p_i,\mu^i) = \frac{\tau_l^i(\mu^i,p_i) \Gamma_{i,l} / \pi_l}{\sum_{k=1}^N \tau_k^i(\mu^i,p_i) \Gamma_{i,k} / \pi_k}
\end{equation}

For fixed $\{\mu^i\}_{i=1,...,M}$, after observing $\{p_i\}$ and realizations $\{\Gamma_i\}$, the receiver updates by Bayes rule to a posterior belief for each $l \in \Omega$:

\begin{equation} \label{private_beta}
\beta_l(\{\Gamma_i\},\{p_i\},\{\mu^i\}) = \frac{\Pi_{i=1}^M \alpha^i_l(\Gamma_i,p_i,\mu^i) / \pi_l^{M-1}}{\sum_{k=1}^N \Pi_{s=1}^M \alpha^i_k(\Gamma_i,p_i,\mu^i)/ \pi_k^{M-1}}
\end{equation}

A PBE (in pure strategies) is a strategy profile $(\sigma_1,...,\sigma_M)$ and a set of belief functions $(\mu^1,...,\mu^M)$ satisfying two conditions. First, no sender $i$ can strictly gain from deviating from $\sigma_i(b_i)$ for any $b_i \in supp[b_i]$:

\begin{equation}
\begin{split}
& \forall i \in \{1,...,M\} \text{, } b_i \in supp[b_i]: \text{ } \mathbb{E}_{\{B_j\}_{j \neq i}}[\mathbb{E}_{\sigma_i(b_i),\{p_j\}_{j \neq i}}  [u_i(\beta)]|b_i] \\
& \geq \mathbb{E}_{\{B_j\}_{j \neq i}} [\mathbb{E}_{p_i',\{p_j\}_{j \neq i}}[u_i(\beta)]|b_i] \text{ for all } p_i' \in P
\end{split}
\end{equation}

where the receiver's posterior $\beta$ is formed using \eqref{private_beta}. It is important to note that sender $i$ may not know $\{p_j\}_{j \neq i}$ but forms beliefs about these given her own private information to evaluate expected utility.

Second, beliefs must follow Bayes rule on path:

\begin{equation}
\begin{split}
& \forall i \in \{1,...,M\} \text{ and } p \in P \text{ s.t. } \exists b_i \in supp[b_i] \text{ with } \sigma_i(b_i) =p: \\
& \text{ } \forall b \in supp[b_i] \text{, } \mu^i(p)[b] = \frac{\mathbbm{1}_{\sigma(b)=p} B_i(b)}{\sum_{b' \in supp[b_i]}\mathbbm{1}_{\sigma(b')=p} B_i(b')}
\end{split}
\end{equation}

The result below is Lemma 1 adapted to this setting with private information.

\begin{lemma2} \label{lemma1_private_info}
Take any equilibrium $(\sigma_1,...,\sigma_M)$, $(\mu^1,...,\mu^M)$. For any sender $i$ and $b \in supp[b_i]$, conditional on $b_i =b$ sender $i$ gets expected utility $0$.
\end{lemma2}

\begin{proof}
Fix any equilibrium. No sender $i$ can get expected utility strictly less than $0$ conditional on any $b_i =b \in supp[b_i]$, as playing the fully revealing experiment guarantees expected utility $0$. This means each sender $i$'s expected utility unconditional on type,

\begin{equation} \label{unconditional_payoff}
\sum_{b' \in supp[b_i]}\mathbb{E}_{\{B_j\}_{j \neq i}}[\mathbb{E}_{\sigma_i(b'),\{p_j\}_{j \neq i}}[u_i(\beta)]] B_i(b')
\end{equation}

is weakly positive. As the game is zero-sum, the sum of all senders' unconditional expected utilities must be $0$; as each of these payoffs is weakly positive, it must be Equation \ref{unconditional_payoff} is equal to $0$ for each $i$. But then as each term in the summation of Equation \ref{unconditional_payoff} is weakly positive, sender $i$'s expected utility conditional on $b_i = b$ cannot be strictly positive, and hence it must be $0$.
\end{proof}

We redefine state pooling in the game with private information as follows. An equilibrium $(\sigma_1,...,\sigma_M)$, $(\mu^1,...,\mu^M)$ does not pool a set of states $\Omega' \subseteq \Omega$ if $Pr(\beta_l>0 \text{ } \forall l \in \Omega')=0$. Otherwise, the equilibrium pools $\Omega'$. The following Lemma is useful for the main results; it says an equilibrium pools $\Omega'$ if and only if $\alpha^i$ does for every sender $i$.

\begin{lemma2} \label{lemma_private_inf_pooling}
An equilibrium $(\sigma_1,...,\sigma_M)$, $(\mu^1,...,\mu^M)$ pools $\Omega' \subseteq \Omega$ if and only if for all $i$ $Pr(\alpha^i_l(\Gamma_i,p_i,\mu^i)>0 \text{ } \forall l \in \Omega')>0$.
\end{lemma2}

\begin{proof}
`If' direction. If for all $i$ $Pr(\alpha^i_l(\Gamma_i,p_i,\mu^i)>0 \text{ } \forall l \in \Omega')$, then as $\alpha^i$ is conditionally (on state) across senders (as $b_i$ and $\Gamma_i$ are), $Pr(\alpha^i_l(\Gamma_i,p_i,\mu^i)>0 \text{ } \forall l \in \Omega' \text{ } \forall i=1,...,M)>0$. By equation \ref{private_beta}, $Pr(\beta_l>0 \text{ } \forall l \in \Omega')>0$.

`Only if' direction. If for some sender $i$, $Pr(\alpha^i_l(\Gamma_i,p_i,\mu^i)>0 \text{ } \forall l \in \Omega')=0$, then by equation \ref{private_beta}, $\beta_l=0$ for some $l \in \Omega'$ w.p. $1$.
\end{proof}

We now provide a sufficient condition for a pair of states to be not pooled in every equilibrium. The sufficient condition is the same as that in Section \ref{section_inf} and will lead to the same sufficient condition for full revelation in all equilibria. We note as before that this condition is satisfied for all but a knife-edge case of sender preferences. As in Section ~\ref{section_inf} we make the mild technical assumption that all sender utilities are real analytic in some neighborhood of $\delta_l$ for all states $l$ (Assumption \ref{technical_assp}).

\begin{lemma2} \label{prop2_private_info}
Suppose Assumption \ref{technical_assp} holds. Consider any pair of states $l$ and $k$. $\{l,k\}$ is not pooled in every equilibrium if for some $i$ $u_i$ satisfies Condition \ref{condition1} on $\Delta(\{l,k\})$.
\end{lemma2}
\begin{proof}

WLOG let $l=1$, $k=2$. For notational convenience let $v = v^{1,2}$. Suppose $u_i$ satisfies Condition \ref{condition1} on $\Delta(\{1,2\})$ for some $i$. WLOG we consider the case $\nabla_v u_i(\delta_2) \neq u_i(\delta_2) - u_i(\delta_1)=0$. Then by zero-sumness (derivatives must also be zero-sum where they exist for all senders) there exists a sender $j$ with $\nabla_v u_j(\delta_2)=c<0$. This implies there exists $r \in (0,1)$ such that for all $\gamma \in \Delta(\{1,2\})$ with $\gamma_2>r$, $u_j(\gamma)>0$. In other words, $j$ has a region of advantage along $\Delta(\{1,2\})$ close to $\delta_2$.

Suppose for contradiction there is an equilibrium $(\sigma_1,...,\sigma_M)$, $(\mu^1,...,\mu^M)$ that pools $\{l,k\}$. For each $j' \neq j$, let $\Lambda_{j'} = \alpha^{j'}(\Gamma_{j'},p_{j'},\mu^{j'})$. By Lemma \ref{lemma_private_inf_pooling}, for all $j' \neq j$, $\Lambda_{j'}$ pools $\{1,2\}$. Note $\Lambda_{j'}$ is also a random variable (where randomness is over $p_{j'}$ and the realization of $\Gamma_{j'}$) with finite support (due to finite support of $b_{j'}$ and $\Gamma_{j'}$). If all senders $j' \neq j$ follow the equilibrium play, $\Lambda_{j'}$ is also Bayes-plausible (with mean $\pi$); this is because $\Gamma_{j'}$ Bayes-plausible, any learning the receiver does about $b_{j'}$ must follow Bayes rule on path, and the distribution of $b_{j'}$ has mean $\pi$. Let $\Lambda_{-j}$ denote the interim belief induced by viewing realizations of all $\{\Lambda_{j'}\}_{j' \neq j}$; note this experiment also pools $\{1,2\}$; this is also a finite signal Bayes-plausible experiment.


We will now find a profitable deviation for sender $j$. This deviation will take the form of an experiment $p_j'$, which we will construct, that generates strictly positive expected utility no matter what $j$'s type is. 

We can rewrite $\beta_l$ for $l \in \Omega$ (from equation \ref{private_beta}) conditional on this deviation $p_j'$ as:

\begin{equation} \label{beta_lambda}
\beta_l(\Lambda_{-j},\alpha^j(\Gamma_j,p_j',\mu^j)) = \frac{\Lambda_{-j,l} \alpha_l^j(\Gamma_j,p_j',\mu^j) / \pi_l}{\sum_{k=1}^N \Lambda_{-j,k} \alpha_k^j(\Gamma_j,p_j',\mu^j)/ \pi_k}
\end{equation}

It is also useful to write down the probability distribution of $\Lambda_{-j}$ conditional on $\alpha^j(\Gamma_j,p_j',\mu^j)$: 

\begin{equation} \label{lambda_alpha}
Pr(\Lambda_{-j}=y|\alpha^j(\Gamma_j,p_j',\mu^j)) = \sum_{l=1}^N \frac{\alpha^j_l(\Gamma_j,p_j',\mu^j) y_l}{\pi_l}
\end{equation}

Let $Y = \{y \in supp[\Lambda_{-j}]: y_1,y_2>0\}$; note this set is nonempty as $\Lambda_{-j}$ pools $\{1,2\}$. Let $Y_0 = \{y \in supp[\Lambda_{-j}]: y_1,y_2=0\}$, $Y_1 = \{y \in supp[\Lambda_{-j}]: y_1>0,y_2=0\}$, and $Y_2 = \{y \in supp[\Lambda_{-j}]: y_1=0,y_2>0\}$. These sets partition the support of $\Lambda_{-j}$.

Consider $j$ generating interim belief $\Gamma_j=x \in \Delta(\{1,2\}) \setminus \{\delta_1,\delta_2\}$. As in Proposition 2's proof, we will try and find such an $x$ conditional on which $j$ gets a strictly positive expected payoff. We will then construct $p_j'$ which assigns positive probability to $x$. Note that $Pr(\alpha^j(\Gamma_j,p_j',\mu^j) \in \Delta(\{l,k\}) \setminus \{\delta_1,\delta_2\}|\Gamma_j =x ) =1$. This can be seen from the definition of $\alpha^j$ (equation \ref{alpha}) and is a consequence of every private belief $b \in supp[b_j]$ having $b_n>0 $ for all $n \in \Omega$ which implies $\tau^j_l(\mu^j,p_j')>0$ for all $l \in \Omega$. This implies that $Pr(\Lambda_{-j} \in Y_0| \Gamma_j = x \in \Delta(\{1,2\}) \setminus \{\delta_1,\delta_2\}) = 0$ (equation \ref{lambda_alpha}). Also note that conditional on $\Gamma_j = x \in \Delta(\{1,2\}) \setminus \{\delta_1,\delta_2\}$, $\beta = \delta_1$ when $\Lambda_{-j} \in Y_1$ and $\beta = \delta_2$ when $\Lambda_{-j} \in Y_2$ (see equation \ref{beta_lambda}); these posteriors both yield utility $0$. 

Finally, note that conditional  $\Gamma_j = x \in \Delta(\{1,2\}) \setminus \{\delta_1,\delta_2\}$, we have $Pr(\Lambda_{-j} \in Y)>0$ (by equation \ref{lambda_alpha} and $\alpha^j \in \Delta(\{l,k\}) \setminus \{\delta_1,\delta_2\}$). Note that for each $y \in Y$ and $\alpha^j(\Gamma_j,p_j',\mu^j) \in \Delta(\{1,2\})\setminus \{\delta_1,\delta_2\}$, $\beta_k(y,\alpha^j(\Gamma_j,p_j',\mu^j))$ is continuous in $\alpha^j(\Gamma_j,p_j',\mu^j)$, $\beta_k(y,\alpha^j(\Gamma_j,p_j',\mu^j)) \rightarrow 1$ as $\alpha^j(\Gamma_j,p_j',\mu^j) \rightarrow 1$, and $\beta_k(y,\alpha^j(\Gamma_j,p_j',\mu^j)) \rightarrow 0$ as $\alpha^j(\Gamma_j,p_j',\mu^j) \rightarrow 0$. As $Y$ is finite, the function $\min_{y \in Y} \beta_k(y,\alpha^j(\Gamma_j,p_j',\mu^j))$ is also continuous in its second argument for $\alpha^j(\Gamma_j,p_j',\mu^j) \in \Delta(\{1,2\})\setminus \{\delta_1,\delta_2\}$ and goes to $0$ or $1$ as $\alpha^j(\Gamma_j,p_j',\mu^j)$ goes to $0$ or $1$ respectively. By the intermediate value theorem, there exists a $\alpha_r \in (0,1)$ such that $\min_{y \in Y} \beta_k(y,\alpha_r) = r$.  When $1>\alpha^j(\Gamma_j,p_j',\mu^j) >\alpha_r$, we have $\beta_k(y,\alpha^j(\Gamma_j,p_j',\mu^j)) \in (r,1)$ for all $y \in Y$.

Note that for $\Gamma_j=x \in \Delta(\{1,2\}) \setminus \{\delta_1,\delta_2\}$, we have $\alpha^j_2(x,p_j',\mu^j) = \frac{\tau^j_2(\mu^j,p_j') x_2/\pi_2}{\tau^j_1(\mu^j,p_j') x_1/\pi_1 + \tau^j_2(\mu^j,p_j') x_2/\pi_2}$. We can rewrite this as: $\alpha^j_2(x,p_j',\mu^j) = \frac{\frac{\tau^j_2(\mu^j,p_j')}{\tau^j_1(\mu^j,p_j')} x_2/\pi_2}{x_1/\pi_1 + \frac{\tau^j_2(\mu^j,p_j')}{\tau^j_1(\mu^j,p_j')} x_2/\pi_2}$. Note that as private beliefs have finite support and cannot rule out any state, for all beliefs the receiver may hold about $j$'s type, $\mu \in \Delta(supp[b_j])$, the corresponding belief $\tau$ this induces on $\Omega$ ($\tau_l = \sum_{b \in supp[b_j]}) b_l \mu[b]$) must have $\frac{\tau_2}{\tau_1} \geq d>0$ for some $d \in (0,1)$. Hence:

\begin{equation} \label{bound_alpha}
\alpha^j_2(x,p_j',\mu^j) = \frac{\frac{\tau^j_2(\mu^j,p_j')}{\tau^j_1(\mu^j,p_j')} x_2/\pi_2}{x_1/\pi_1 + \frac{\tau^j_2(\mu^j,p_j')}{\tau^j_1(\mu^j,p_j')} x_2/\pi_2} \geq  \frac{d x_2/\pi_2}{x_1/\pi_1 + d x_2/\pi_2} 
\end{equation}

$\alpha^j_2(x,p_j',\mu^j)$ is continuous in $x$ on $\Delta(\{1,2\}) \setminus \{\delta_1,\delta_2\}$ and will also fall in $\Delta(\{1,2\}) \setminus \{\delta_1,\delta_2\}$. By equation \ref{bound_alpha}, as $x_2 \rightarrow 1$, $\alpha^j_2(x,p_j',\mu^j) \rightarrow 1$ \emph{regardless} of what beliefs the receiver holds. Hence there exist $x^* \in \Delta(\{1,2\}) \setminus \{\delta_1,\delta_2\}$ such that $1>\alpha^j_2(x^*,p_j',\mu^j)>\alpha_r$. Hence we have $\beta_k(y,\alpha^j(x^*,p_j',\mu^j)) \in (r,1)$ for all $y \in Y$. Conditional on $x^*$, $j$ gets expected utility:

\begin{align*}
& \sum_{\substack{y \in Y_1}} \underbrace{u_j(\delta_1)}_{=0} Pr(\Lambda_{-j}=y|\alpha^j(x^*,p_j',\mu^j)) + \sum_{\substack{y \in Y_2}} \underbrace{u_j(\delta_2)}_{=0} Pr(\Lambda_{-j}=y|\alpha^j(x^*,p_j',\mu^j))\\
& + \sum_{y \in Y} \underbrace{u_j(\beta(y,\alpha^j(x^*,p_j',\mu^j)))}_{>0} \underbrace{ Pr(\Lambda_{-j}=y|\alpha^j(x^*,p_j',\mu^j))}_{>0} > 0\\
\end{align*}

The proof of Lemma 1 demonstrates how any type of sender $j$ can construct a stratgy $p_j'$ which assigns positive probability only to $x^*$ and $\{\delta_1,...,\delta_N\}$. $p_j'$ yields $j$ strictly positive expected utility conditional on $x^*$ being realized and $0$ utility otherwise. Hence Lemma \ref{lemma1_private_info} is violated. Contradiction. Hence no equilibrium can pool $\{1,2\}$.

\end{proof}

Lemma \ref{prop2_private_info} implies that Condition \ref{condition1} being satisfied by some $u_i$ on each edge of the simplex is sufficient for full revelation in all equilibria. It is worth noting again that is sufficient condition is satisfied for all but a knife-edge case of sender preferences.

\textbf{Proposition \ref{prop_private}.} \emph{Suppose Assumption \ref{technical_assp} holds. The state is fully revealed in every equilibrium if for all pairs of states $l$ and $k$, there is some $u_i$ that satisfies Condition \ref{condition1} on $\Delta(\{l,k\})$.}

\begin{proof}
Argument is identical to Proposition \ref{prop_inf}'s proof.
\end{proof}

\subsection{Sequential moving senders}
Consider a sequential version of our baseline model. Senders $1,...,M$ move in order, observing all previous experiment choices (but not realizations); we are interested in pure strategy subgame perfect Nash Equilibria (henceworth just SPNE) of this game. Note that for a simultaneous game, there are multiple corresponding sequential games, one for each ordering of senders.

A few facts easily carry over from the simultaneous case. First, all senders must get utility $0$ in equilibrium (as the game is zero-sum and anyone can fully reveal the state). Second, full revelation can be supported as an SPNE outcome in a game with senders moving in any order. We can construct such an equilibrium with all senders playing $\Gamma^{FR}$ on path and playing any sequentially rational strategies off path. No sender after the first has an incentive to deviate if those upstream from them have not (as the receiver will the learn the state from upstream senders). The first sender cannot strictly gain from deviating, as a strict gain would imply a strict loss for a downstream sender; sequential rationality rules this out as the downstream sender can fully reveal the state to avoid a loss. As for the simultaneous game, the interesting question is when all equilibria (here SPNE) are fully revealing. The following results and discussion clarify the relationship between the our simultaneous model and a sequential version.

\textbf{Proposition \ref{prop_seq}.} \emph{If for $u_1,...,u_M$ there is full revelation in every SPNE of the sequential game with the senders moving in some order, then there is full revelation in every equilibrium of the simultaneous game.}

\begin{proof}
We prove the following statement, from which the result follows: if there exists a non-fully revealing equilibrium in the simultaneous game, then, for \emph{any} order of senders, there exists a non-fully revealing SPNE in the sequential game.

Choose any ordering of senders $1,...,M$. Consider any non-fully revealing equilibrium of the simultaneous game, $(\Gamma_1,...,\Gamma_M)$ and let $\Gamma$ be the experiment induced by observing the realizations of $\Gamma_1,...,\Gamma_M$. In the sequential game, consider the following strategy profile: (1) sender $1$ plays $\Gamma$. (2) each sender $i=2,...,M$ plays $\Gamma^U$ if all previous senders haven't deviated and play some seqentially rational strategies otherwise. We will show this is an SPNE. By Lemma 1 all senders get utility $0$ from following perscribed play as $\Gamma$ is the information revealed in an equilibrium of the simultaneous game and no additional information is revealed. First note sender $1$ has no strict incentive to deviate as any profitable deviation would give some downstream sender strictly negative utility. This is not possible along any path of play in an SPNE as this downstream sender can always fully reveal the state.

If sender $1$ plays $\Gamma$, senders $2,...,M$ have no incentive to deviate for the following reason. Consider any deviation $\Gamma_j' \neq \Gamma^U$ for sender $j$, $2 \leq j \leq M$. This deviation leads to a path of play producing information from $\Gamma$ as well as additional conditionally independent experiments. Suppose, for contradiction, this deviation yields $j$ strictly positive expected utility. But then $j$ has a profitable deviation from the simulatenous game equilibrium $(\Gamma_1,...,\Gamma_M)$; if $j$ unilaterally plays these additional conditionally independent experiments in addition to $\Gamma_j$, she gets strictly positive utility. 
\end{proof}

Proposition \ref{prop_seq} implies that the set of (zero-sum) utility functions under which there is full revelation in all equilibria in the simultaneous game contains the set under which for some order of senders there is full revelation in all SPNE of the sequential game. The converse is not true. It is possible for there to be full revelation in every equilibrium of the simultaneous game but, for every order of senders, non-fully revealing SPNE in sequential game. The following example demonstrates this. 

\begin{example} \label{seq_ex}
There are two senders $1,2$ and three possible states $1,2,3$. Assume that $\pi_1 < \pi_2 < \pi_3$ (this is not necessary, but eases exposition; the assumption rules out any two states having equal prior probabilities but is otherwise without loss). Suppose $u_1((1/2,1/2,0))=u_1((1/2,0,1/2))=1$ and $u_2((1/2,1/2,0))=u_2((1/2,0,1/2))=-1$. Also, $u_2((0,1/2,1/2))=1$ and $u_1((0,1/2,1/2))=-1$. At all other $\gamma \in \Delta(\Omega)$, $u_1(\gamma)=u_2(\gamma)=0$. Sender $1$ has an advantage at single points along edges $\Delta(\{1,2\})$ and $\Delta(\{1,3\})$ and sender $2$ has a single advantage on edge $\Delta(\{2,3\})$; on the rest of the simplex, neither sender has an advantage. Figure \ref{example_seq} summarizes this. By Theorem 1, the state is fully revealed in all equilibria of the simultaneous game. However we will show that regardless of the order senders move in, there is always a non-fully revealing SPNE.

\begin{center} \
\includegraphics[scale=0.7]{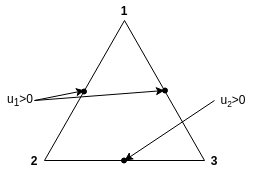}
\captionof{figure}{} \label{example_seq}
\end{center} 
 
First suppose sender $1$ plays first, then sender $2$. Consider the sender $1$ playing $\Gamma_1$ s.t. $Pr(\Gamma_1=\delta_3)= \pi_3$ and $Pr(\Gamma_1=(\frac{\pi_1}{\pi_1 + \pi_2},\frac{\pi_2}{\pi_1 + \pi_2},0)) = 1 - \pi_3$ (this distribution satisfies Bayes-plausibility). Suppose sender $2$ plays $\Gamma_2 = \Gamma^U$ whenever sender $1$ plays this and following a deviation plays any sequentially rational $\Gamma_2$. Note the posterior when following perscribed play will either be $\delta_3$ or $(\frac{\pi_1}{\pi_1 + \pi_2},\frac{\pi_2}{\pi_1 + \pi_2},0)$. The latter is not equal to $(1/2,1/2,0)$ by our assumption on the prior, and hence both senders get utility $0$ at all posteriors. Sender $1$ hence has no incentive to deviate as no experiment can yield strictly positive utility (sender $2$ can always fully reveal the state and is playing sequentially rationally). Sender $2$ has no incentive to deviate as conditional on $\omega = 3$ the receiver learns the state and conditional on $\omega \in \{1,2\}$ the posterior will lie on $\Delta(\{1,2\})$, on which sender $2$ has no points of advantage, w.p. $1$ (by Claim \ref{claim_projection}). This is a non-fully revealing equilibrium.

If sender $2$ plays first, we can construct an analogous non-fully revealing equilibrium with $\Gamma_2$ s.t. $Pr(\Gamma_2=\delta_1)= \pi_1$ and $Pr(\Gamma_2=(0,\frac{\pi_2}{\pi_2 + \pi_3},\frac{\pi_3}{\pi_2 + \pi_3})) = 1 - \pi_1$. Sender $1$ plays $\Gamma_1=\Gamma^U$ on path and any sequentially rational $\Gamma_1$ otherwise.
\end{example}

In the simultaneous version of Example \ref{seq_ex}, the state is fully revealed in every equilibrium for the following reasons. States $2$ and $3$ must not be pooled in equilibrium because if not sender $2$, who has an advantage along $\Delta(\{2,3\})$, could find an experiment yielding a strictly positive payoff (violating Lemma 1). More precisely, $\Gamma_1$ cannot pool states $\{2,3\}$, because if not sender $2$ can gain a strictly positive payoff. Similarly, $\Gamma_2$ cannot pool $\{1,2\}$ or $\{1,3\}$ or sender $1$ take advantage. Each sender is `responsible' for not pooling some states in equilibrium because their opponent has an advantage.

More generally, when the state must be fully revealed in every equilibrium of a simultaneous game, for every subset of states $\Omega'$ there is some sender $j$ who could take advantage of $\Gamma_{-j}$ pooling $\Omega'$. This is shown by Claim \ref{strong_claim_prop2} (in the proof of Prosition 2); sender $j$ must have an advantage somewhere on $\Delta(\Omega)$. When sender $j$ moves last in the sequential game, then by the same argument, all upstream senders must (collectively) not pool $\Omega'$ in any SPNE. For example, when sender $2$ moved second in the example, sender $1$ did not pool $\{2,3\}$. However, as the last moving sender may only be able to take advantage of some subsets of states being pooled, we need not get full revelation. If there is a sender $j$ who can take advantage of $\Omega'$ being pooled for any $\Omega' \in 2^{\Omega}$ s.t. $|\Omega'|>1$, then when this sender moves last, every SPNE fully reveals the state. As in Example \ref{seq_ex}, it is when there is no such sender exists that there are non-fully revealing equilibria for all orders of senders.

This logic implies the following result: if any set of states is not pooled in every equilibrium of the simultaneous game, there is an ordering of senders such that those states are not pooled in every SPNE of the sequential game.

\begin{prop2} If for utility functions $u_1,...,u_M$ a set of states $\Omega' \subseteq \Omega$ is not pooled in every equilibrium of the simultaneous game, then, for some ordering of senders, $\Omega'$ is not pooled in every SPNE of the sequential game.
\end{prop2}

\begin{proof}
Suppose $\Omega' \subseteq \Omega'$ is not pooled in every equilibrium of the simultaneous game. Then by Proposition 2, $u_i(\gamma)>0$ for some sender $i$ and some $\gamma \in \Delta(\Omega')$. By Claim \ref{strong_claim_prop2}, there exists a sender $j$ such that for every $\Gamma_{-j}$ that pools $\Omega'$, there exists a $\Gamma_j$ such that $U_j(\Gamma_j,\Gamma_{-j})>0$. Consider any ordering of senders with $j$ moving last. Then as all senders must get utility $0$, in any SPNE senders upstream from $j$ (collectively) do not pool $\Omega'$ (or else $j$'s best response yields strictly positive utility). Hence, by Lemma \ref{not_pool}, all SPNE do not pool $\Omega'$.
\end{proof}

\section{Supplementary Appendix C: Extensions to some nonzero-sum settings}

In this section we consider a few applications that fall outside the scope of our model's assumptions. We use these applications to highlight that our analysis applies more broadly to many real-world settings.

\subsection{Persuading a judge/jury}

Consider a criminal court case; there is a state $\omega \in \{I,G\}$ (the defendant is innocent or guilty). Senders $P,D$ are prosecutor/defense attorneys and will choose what evidence to search for (experiments) in order to persuade the receiver, a judge or jury. After hearing the arguments of the prosecution and defense (i.e. observing experiments and their realizations), the judge/jury will update their beliefs on $\omega$ and make a decision. If, given the evidence presented $\Pr(\omega=G) \geq 0.9$, i.e. the posterior is strongly in favor of the defendant's guilt, the judge/jury will convict the defendant. If after updating $Pr(\omega=G) \leq 0.3$, the judge/jury will acquit the defendant. Finally, if $Pr(\omega=G) \in (0.3,0.9)$, there is a mistrial and no verdict (e.g. the judge does not have enough evidence to make a ruling or the jury is hung).\footnote{If this is an adversarial legal system, we can think of $0.9$ as the threshold of `reasonable doubt'. If, more specifically, this is a common law legal system and the receiver is a jury, we can think of $0.9$ as the threshold above which jurors are all convinced of the defendant's guilt beyond reasonable doubt. Below $0.3$, jurors are unanimously convinced there is not enough evidence to convict and for beliefs in the interval $(0.3,0.9)$ the jury has a divided opinion and is hence hung.}

The prosecution gets a payoff of $1$ from conviction, $-1$ from acquittal, and $r_p \in (-1,1)$ from a mistrial. The defense gets payoff $1$ from acquittal, $-1$ from conviction, and $r_d \in (-1,1)$ from a mistrial. Note that for posterior beliefs $\beta \in [0,0.3] \cup [0.9,1]$, the game is zero-sum (for exposition, we do not normalize payoffs at $\beta=0$ and $\beta=1$ to $0$). However, we make the assumption that $r_p+r_d<0$ ; when there is a mistrial, there is a loss in sender `surplus' (or, the sum of sender payoffs). In a common law system, a mistrial can result in a retrial; this loss in surplus could be due to both attorneys bearing a cost of preparing for a retrial or from attorneys discounting their payoffs as it will take another trial to reach a verdict. Figure \ref{fig_hung_jury} gives an example of such payoffs.

\begin{samepage}
\begin{center}
\includegraphics[scale=0.7]{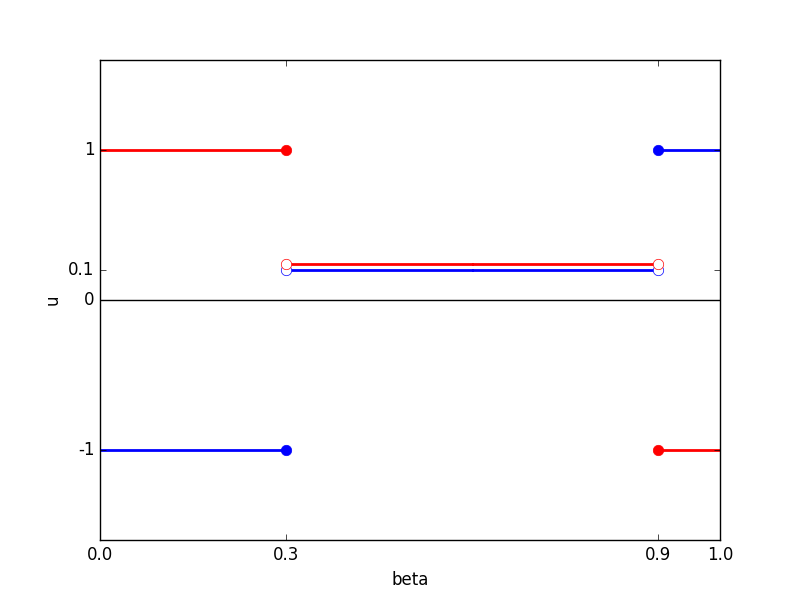}
\captionof{figure}{Example of $u_p$ (blue) and $u_d$ (red) with $r_p=r_d=0.1$}
\label{fig_hung_jury}
\end{center}
\end{samepage}

This game is not zero-sum \textemdash attorney/sender surplus is smaller at intermediate beliefs, $(0.3,0.9)$, then at extreme ones. Hence, although utilities are nonlinear, we may not expect our full revelation results to hold. Despite this:

\textbf{Result.} The state is fully revealed in every equilibrium.

To see why this holds, first note that the jury must be hung w.p. $0$ in all equilibria. For any strategy profile $(\Gamma_1,\Gamma_2)$ at which the jury is hung w.p. $>0$, $U_1(\Gamma_1,\Gamma_2)+U_2(\Gamma_1,\Gamma_2)<0$.  But full revelation guarantees that the sum of sender ex-ante expected utilities equals $0$; hence some sender must strictly benefit from deviating from this strategy profile to full revelation.

As the jury is hung w.p. $0$ in all equilibria, all equilibria place probability $1$ on extreme posteriors, $[0,0.3]\cup[0.9,1]$, for which $u_1+u_2=0$. Our argument in proving Proposition 1 used beliefs at extremes of the interval. As the game is zero-sum for such beliefs, the prosecution has an advantage on $[0.9,1]$, and the defense has an advantage on $[0,0.3]$, the same argument implies all equilibria are fully revealing.

\textbf{Generalizing the example.} This same intuition generalizes beyond this example. Consider a nonzero-sum game otherwise identical to our baseline model. At any posterior $\beta$, we define $W(\beta) = \sum_{i=1}^M u_i(\beta)$ as the sender surplus at this posterior. As in the example above, whenever $W(\beta)$ is maximized at and in the neighborhood of fully revealing posteriors and close to each fully revealing posterior some sender has an advantage, then we have full revelation in every equilibrium. This latter condition is guaranteed by Condition 1, and so we have the following sufficient condition for full revelation in all equilibria of nonzero-sum games:

\begin{prop}
Suppose that there exists neighborhoods of $\delta_1,...,\delta_N$ such that $W(\beta)$ is maximized in each of these neighborhoods. Suppose for every $l,k \in \Omega$ some $u_i$ satisfies Condition 1 on $\Delta(\{l,k\})$. Then the state is fully revealed in every equilibrium.
\end{prop}

The proof follows from the logic above and the proof of Proposition 2 applied to pairs of states (see Supplementary Appendix A). Notice that if $W(\beta)$ is \emph{uniquely} maximized at $\delta_1,...,\delta_N$ then the argument above implies that for every non-fully revealing $(\Gamma_1,...,\Gamma_M)$, some sender can profitably deviate to $\Gamma^{FR}$. Hence, if this is the case then we also have that all equilibria are fully revealing.\footnote{This logic is similar to that of \cite{gentzkow2016competition} Proposition 1.}

While it may not be surprising that we can guarantee full revelation in all equilibria of nonzero-sum games in cases where full revelation maximizes sender surplus, it should also not be surprising that our results collapse when sender surplus is not maximized by full revelation. If sender surplus is only maximized at interior posteriors, then it is easy to construct payoffs for which senders collude to reveal little information to the receiver in equilibrium.

\subsection{Competition in advertising}

Consider a game in which two firms (senders $1,2$) selling cars compete for the demand of a single consumer (the receiver). The state-space here is two dimensional. First, the consumer has a willingness of pay $w \in \{l=\frac{1}{2},h=1\}$. Second, one firm $b \in \{1,2\}$ has a higher quality car. Assume $w$ and $b$ are drawn independently and the state is the pair $\omega=(w,b)$.

At the start of the game, firms simultaneously choose advertising strategies (experiments) to reveal information about $\omega$. After viewing these ads and updating her beliefs on $\omega$, the consumer purchases exactly one car and pays her expected willingness to pay at her posterior belief; that is, she pays $\frac{1}{2} Pr(w=l) + 1 Pr(w=h)$. The consumer purchases the car $1$ if it is more favorable at her posterior, if $Pr(b=1) \geq \frac{1}{2}$, and car $2$ otherwise (the tie-breaking rule does not matter). Firm $i$ gets a payoff equal to the revenue it receives: $\frac{1}{2} Pr(w=l) + 1 Pr(w=h)$ if its car is bought and $0$ otherwise.

\textbf{Discussion of setup.} We can think of advertisements here as revealing information about the value of the car to the consumer (e.g. if an advertisement shows how a car is valuable for commuters, then the consumer may increase her estimate of the car's value if she is a commuter) and about the relative qualities of the two cars (e.g. firms can include characteristics, say mileage, which can be compared across cars). We assume that initially, information about the consumer's willingness to pay is symmetric. It is reasonable to think the consumer does not initially know her own willingness to pay for a product and that advertisements (here experiments) inform her of the value of the product.

As in the previous section, this game is not zero-sum (or constant-sum): firm/sender surplus is increasing in the consumer's expexcted willingness to pay. However, note that for any belief the consumer holds on $w$, firms have the same preferences over her belief on $b$ (up to scaling); furthermore, for any belief on $w$, firm preferences over beliefs on $b$ are constant-sum. This implies:
    
\textbf{Result.} In every equilibrium the consumer learns (fully) which car is better.
    
Senders in this game agree on one dimension of the state-space, $w$. However, their disagreement on the other dimension of the state-space, $b$, is constant-sum and is unaffected by beliefs on $w$. As for any fixed belief on $w$ the arguments of Proposition 1 will imply that $b$ is fully revealed, it must be that $b$ is fully revealed in all equilibria.

\textbf{Generalizing the example.} The same result applies whenever $\omega$ is composed of two independent dimensions with zero-sum disagreement on one dimension that is unaffected by the other. Formally, suppose $\omega_1$ and $\omega_2$ are drawn independently from finite sets $\Omega_1$ and $\Omega_2$. Let $\omega = (\omega_1,\omega_2) \in \Omega = \Omega_1 \times \Omega_2$. Any belief $\beta \in \Delta(\Omega)$ on $\omega$ can be written as $\beta=(\gamma,\eta)$, where $\gamma \in \Delta(\Omega_1)$ is a belief on $\omega_1$, and $\eta \in \Delta(\Omega_2)$ is a belief on $\omega_2$. We write sender utility functions over the receiver's posterior as $u_i(\gamma,\eta)$. We say senders have \emph{zero-sum preferences on $\Omega_1$ that are uniform in $\Omega_2$} if: 
\begin{align*}
& \text{(1) for all $\eta \in \Delta(\Omega_2)$ there exists $c \in \mathbb{R}$ such that: $\sum_{i=1}^M u_i(\gamma,\eta)=c$ for all $\gamma \in \Delta(\Omega_1)$} \\
& \text{(2) for all $\eta,\eta' \in \Delta(\Omega_2)$ there exists $d,e \in \mathbb{R}$ such that $u_i(\cdot,\eta) = d u_i(\cdot,\eta')+e$ for all senders $i$}
\end{align*}

Part (1) of this definition says that for each belief $\eta$ on $\omega_2$, senders have constant-sum preferences over the receiver's beliefs on $\omega_1$. Part (2) requires that, up to linear transformations, sender preferences over beliefs on $\omega_1$ are not affected by the receiver's belief on $\omega_2$; this will imply at sender preferences over experiments on $\omega_1$ will not be affected by the receiver's belief on $\omega_2$. Under these conditions, if for any fixed belief on $\omega_2$ sender preferences satisfy Theorem 1's conditions for fully revealing $\omega_1$, $\omega_1$ is fully revealed in all equilibria.

\begin{prop} \label{prop_dimension}
Suppose senders have zero-sum preferences on $\Omega_1$ that are uniform in $\Omega_2$. Fix any $\eta \in \Delta(\Omega_2)$. $\omega_1$ is fully revealed in every equilibrium if and only if for each $l,k \in \Omega_1$ some $u_i(\gamma,\eta)$ is nonlinear in $\gamma$ for $\gamma \in \Delta(\{l,k\})$.
\end{prop}

The proof follows immediately from the logic above. Proposition \ref{prop_dimension} is not surprising but is significant. Our analysis can be useful in (special) nonzero-sum settings for which senders have zero-sum disagreement over some dimensions but not others.
\end{document}